\providecommand{\U}[1]{\protect\rule{.1in}{.1in}}
\newtheorem{theorem}{Theorem}
\newtheorem{corollary}{Corollary}
\newtheorem{definition}{Definition}
\newtheorem{remark}{Remark}
\newenvironment{proof}[1][Proof]{\noindent\textbf{#1.} }{\ \rule{0.5em}{0.5em}}
\begin{document}
\preprint{ }
\title[ ]{Deconstruction and conditional erasure of quantum correlations}
\author{Mario Berta}
\affiliation{Department of Computing, Imperial College London, London SW7 2AZ, UK}
\author{Fernando G.~S.~L.~Brand\~{a}o}
\affiliation{Institute for Quantum Information and Matter, California Institute of
Technology, Pasadena, California 91125, USA}
\author{Christian Majenz}
\affiliation{Institute for Language, Logic and Computation, University of Amsterdam, and QuSoft, 1098XG Amsterdam, Netherlands}
\affiliation{Department of Mathematical Sciences, University of Copenhagen,
Universitetsparken 5, DK-2100 Copenhagen \O}
\author{Mark M.~Wilde}
\affiliation{Hearne Institute for Theoretical Physics, Department of Physics and Astronomy,
Center for Computation and Technology, Louisiana State University, Baton
Rouge, Louisiana 70803, USA}
\keywords{conditional quantum mutual information, Markovianizing cost, quantum discord,
state deconstruction, conditional erasure}
\pacs{}

\begin{abstract}
We define the deconstruction cost of a tripartite quantum state on systems
$ABE$ as the minimum rate of noise needed to apply to the $AE$ systems, such
that there is negligible disturbance to the marginal state on the $BE$ systems,
while the system $A$ of the resulting state is locally recoverable from the $E$
system alone. We refer to such actions as deconstruction operations and
protocols implementing them as state deconstruction protocols. State
deconstruction generalizes Landauer erasure of a single-party quantum state as
well the erasure of correlations of a two-party quantum state. We find that
the deconstruction cost of a tripartite quantum state on systems $ABE$ is
equal to its conditional quantum mutual information (CQMI) $I(A;B|E)$, thus
giving the CQMI an operational interpretation in terms of a state
deconstruction protocol. We also define a related task called conditional erasure,
in which the goal is to apply noise to systems $AE$ in order to decouple
system $A$ from systems $BE$, while causing negligible disturbance to the
marginal state of systems $BE$. We find that the optimal rate of noise for
conditional erasure is also equal to the CQMI $I(A;B|E)$. State deconstruction
and conditional erasure lead to operational interpretations of the quantum
discord and squashed entanglement, which are quantum correlation measures
based on the CQMI. We find that the quantum discord is equal to the cost of
simulating einselection, the process by which a quantum system interacts with
an environment, resulting in selective loss of information in the system. The
squashed entanglement is equal to half the minimum rate of noise needed for
deconstruction/conditional erasure if Alice has available the best possible
system $E$ to help in the deconstruction/conditional erasure task.

\end{abstract}
\volumeyear{ }
\volumenumber{ }
\issuenumber{ }
\eid{ }
\date{\today}
\startpage{1}
\endpage{10}
\maketitle

\section{Introduction}

The Landauer erasure principle represents a deep link between information
theory and thermodynamics \cite{L61}. An informal summary of the principle is
that the work cost of erasing the contents of a computer memory is
proportional to the amount of information stored there. This insight has now
sparked a whole literature, a consequence of which has been a deepening of the
connection between information theory and thermodynamics (see, e.g.,
\cite{GHRRS15} for a review).

One generalization of Landauer's insight goes beyond the single-system setup
mentioned above. In \cite{GPW05}, Groisman \textit{et al}.~considered a
setting in which two parties share a quantum state $\rho_{AB}$. Their goal was
to determine the work cost of erasing the correlations present in the state,
by acting locally on one system, such that the resulting state has a
tensor-product form $\sigma_{A}\otimes\omega_{B}$, where $\sigma_{A}$ and
$\omega_{B}$ are quantum states. Groisman \textit{et al}.~solved the problem
in the framework of quantum Shannon theory \cite{W16}, whereby they allowed
the two parties to have many copies of the state $\rho_{AB}$ and quantified
the minimum rate of noise that needs to be applied to the $A$~systems such
that the resulting state is tensor-product between the $A$~systems and the
$B$~systems. They found that the optimal rate of noise is equal to the quantum
mutual information of the state $\rho_{AB}$, defined as%
\begin{equation}
I(A;B)_{\rho}\equiv H(A)_{\rho}+H(B)_{\rho}-H(AB)_{\rho},
\end{equation}
where the quantum entropy of a state $\sigma_{G}$ on system $G$ is defined as
$H(G)_{\sigma}\equiv-\operatorname{Tr}\{\sigma_{G}\log_{2}\sigma_{G}\}$. An
important consequence of their theorem is that we can assign a physical
meaning to, or operational interpretation of, the quantum mutual information
as the minimum rate of noise needed to completely erase the correlations
present in a two-party quantum state. Thus, we can say that quantum mutual
information is equal to the work cost of correlation destruction.

On the other hand, quantum mutual information has also been interpreted in a
communication-theoretic task (now called \textit{coherent state merging}
\cite{O08a}) as the optimal rate of entanglement creation when transferring
the system $A$ of $\rho_{AB}$ to a party possessing system $B$
\cite{ADHW06FQSW}, while using quantum communication at a fixed rate. These
dual interpretations of quantum mutual information in terms of destruction and
creation perhaps come at no surprise if one is familiar with the unitarity of
quantum mechanics and the purification principle. Information can never truly
be destroyed in quantum mechanics, which means that the apparent destruction
of correlations between two parties implies the creation of correlations
elsewhere, i.e., with another party who possesses a purification of the state
$\rho_{AB}$. In fact, this insight is the main idea underlying the decoupling
principle \cite{qip2002schu,qcap2008first}, which is a method for proving the
above theorem \cite{ADHW06FQSW}\ and others similar to it.

In this paper, we are interested in further generalizations of the erasure of
correlations to a three-system scenario, i.e., for a tripartite quantum state
$\rho_{ABE}$
 (see also our companion paper \cite{BBMW18}).
 The tasks we are
interested in accomplishing are more delicate than the destruction of
correlations mentioned above.

The first task we consider is a \textit{state deconstruction} protocol, whose
aim is to deconstruct (literally, \textquotedblleft to break into constituent
components\textquotedblright) the correlations in a three-party quantum state.
To make the setting precise, consider a state $\rho_{ABE}$, and suppose that
Alice possesses system $A$, Bob system $B$, and Eve system $E$. We would like
a deconstruction protocol to result in a state for which Eve is the mediator
of correlations between Alice and Bob, while the original correlations shared
between Eve and Bob are negligibly disturbed. The setup begins with Alice and
Eve in the same laboratory and Bob in a different laboratory, and we also
operate in the framework of quantum Shannon theory, allowing them to share $n$
copies of the state $\rho_{ABE}$, where $n$ can be a large number. Following
Groisman \textit{et al}.~\cite{GPW05}, we allow for a local unitary
randomizing channel acting on the $AE$ systems and an ancilla. The rate of
noise is equal to the logarithm of the number of unitaries in such a channel
divided by the number $n$ of copies of the state $\rho_{ABE}$. We define the
\textit{deconstruction cost} of a tripartite state $\rho_{ABE}$\ to be the
minimum rate of noise needed to apply to the $AE$ systems and an ancilla, such
that the resulting state satisfies the following:

\begin{enumerate}
\item the resulting system of Alice is \textit{locally recoverable} from Eve's
system alone, and

\item the correlations between Eve and Bob are \textit{negligibly disturbed}.
\end{enumerate}

\noindent See Section~\ref{sec:dec-protocol}\ for a more detailed definition
and Figure~\ref{fig:deconstruction}\ for a depiction of a state deconstruction
protocol along with the conditions of local recoverability and negligible disturbance.

The second task we consider is \textit{conditional erasure}. Such a task is
very similar to state deconstruction:\ we allow for a local channel to act on
the $AE$ systems and an ancilla. However, we define the conditional erasure
cost to be the minimum rate of noise such that the resulting system of Alice
is decoupled from the $BE$ systems and the marginal state of the $BE$ systems
is negligibly disturbed. A protocol that accomplishes conditional erasure also
accomplishes state deconstruction:\ this is because a decoupled system is
locally recoverable.

The negligible disturbance condition is critical in both state deconstruction
and conditional erasure: it could be the case that Eve and Bob would want to
use their systems for some later quantum information processing task, so that
keeping the correlations intact is essential for the systems to be useful
later on. For example, Eve's and Bob's systems might contain some entanglement
which could be useful for a subsequent distributed quantum computation. This
condition also highlights an essential difference between semi-classical and
fully quantum protocols: in the case that the system $E$ is classical, the
negligible disturbance condition is not necessary because one could always
observe the value in Eve's system without causing any disturbance to it.
However, in the quantum case, the uncertainty principle forbids us from taking
a similar action, so that it is necessary for fully quantum protocols to
proceed with a greater sleight of hand.

State deconstruction and conditional erasure are far more delicate than
decoupling, the latter sometimes described as having the \textquotedblleft
relatively indiscriminate goal of destruction\textquotedblright%
\ \cite{ADHW06FQSW}. That is, a naive application of the decoupling method is
too blunt of a tool to apply in these protocols. Applying it naively would
result in the annihilation of correlations such that if correlations between
systems $B$ and $E$ were present beforehand, they would be destroyed and thus
no longer useful for a future quantum information processing task.

\section{Main result}

\label{sec:main-result}The main result of this paper is that both the
deconstruction cost and the conditional erasure cost of a tripartite state
$\rho_{ABE}$ are equal to its conditional quantum mutual information (CQMI),
defined as%
\begin{equation}
I(A;B|E)_{\rho}\equiv I(AE;B)_{\rho}-I(E;B)_{\rho}.
\end{equation}
(See Theorems~\ref{thm:main} and \ref{thm:cond-erasure-region}.) Thus, our
result assigns a new physical meaning to the CQMI, in terms of erasure or
thermodynamical tasks that generalize Landauer's original scenario as well as
the erasure of correlations scenario from \cite{GPW05}. The deconstruction and
conditional erasure tasks are intimately related to properties of the
CQMI\ itself, which has previously been related to local recoverability
\cite{HJPW04,FR14,BSW14}\ as well as the condition of negligible disturbance
\cite{WDHW13}.

The state deconstruction and conditional erasure tasks are also closely
related to the protocol of quantum state redistribution \cite{DY08,YD09},
which, prior to our contribution, was the only protocol giving an operational
meaning for the CQMI. A quantum state redistribution protocol begins with many
independent copies of a four-party pure state $\psi_{ABER}$, with a sender
possessing the $A$ and $E$ systems, a receiver possessing the $R$ systems, and
the sender and receiver sharing noiseless entanglement before communication
begins. The main result of \cite{DY08,YD09} is that the optimal rate of
quantum communication needed to redistribute the $A$~systems from the sender
to the receiver is equal to $\tfrac{1}{2}I(A;B|R)_{\psi}$. In the present
paper, the state redistribution protocol is one of the main tools that we use
for establishing that the deconstruction and conditional erasure costs are
each equal to the CQMI.

The other main tool that we use is a quantity known as the \textit{fidelity of
recovery} of a tripartite state $\rho_{ABE}$ \cite{SW14}:%
\begin{equation}
F(A;B|E)_{\rho}\equiv\sup_{\mathcal{R}_{E\rightarrow AE}}F(\rho_{ABE}%
,\mathcal{R}_{E\rightarrow AE}(\rho_{BE})),
\end{equation}
where the quantum fidelity between states $\omega$ and $\tau$ is defined as
$F(\omega,\tau)\equiv\left\Vert \sqrt{\omega}\sqrt{\tau}\right\Vert _{1}^{2}$
\cite{U73} and the supremum is with respect to all recovery channels
$\mathcal{R}_{E\rightarrow AE}$.

Our main results then lead to operational interpretations of quantum
correlation measures based on CQMI, including quantum discord
\cite{Z00,zurek01} and squashed entanglement \cite{CW04}. We find that the
quantum discord is equal to the optimal rate of simulating einselection
\cite{Z03}, the process by which a system interacts with an environment in
such a way as to cause selective loss of information in the system. In
particular, given a bipartite state $\rho_{AB}$ and measurement $\Lambda_{A}$,
we find that the discord is equal to the minimum rate of noise needed to apply
to the $A$ system of $\rho_{AB}$, such that the resulting state is locally
recoverable after performing a measurement on the $A$ system and its
post-measurement state is indistinguishable from the post-measurement state
after $\Lambda_{A}$ acts on $\rho_{AB}$. We find that the squashed
entanglement of a state $\rho_{AB}$ is equal to half the minimum rate of noise
needed in a deconstruction operation which has the best possible quantum side
information in system $E$ to help in the deconstruction task.

An outline of the rest of the paper is as follows. In
Section~\ref{sec:background}, we provide more background on quantum
information basics and the conditional quantum mutual information, and we
review the state redistribution protocol in more detail.
Section~\ref{sec:dec-protocol}\ defines a state deconstruction protocol and
the deconstruction cost of a tripartite state $\rho_{ABE}$, and
Section~\ref{sec:landauer-erasure-model} discusses a slightly different model
for state deconstruction. In Section~\ref{sec:CQMI-lower-bound}, we prove that
the deconstruction cost is bounded from below by the CQMI. After that,
Section~\ref{sec:simulation-via-redis}\ proves the other inequality, by
showing how a state redistribution protocol leads to one for state
deconstruction. In Section~\ref{sec:cond-erasure}, we define the conditional
erasure task and show how a conditional erasure protocol is equivalent to a
quantum state redistribution protocol, in the sense that the existence of one
implies the existence of the other. We then establish the CQMI\ as the optimal
conditional erasure cost. Section~\ref{sec:einselection-cost-discord}\ details
how quantum discord is equal to the optimal rate of einselection simulation,
and the following section gives the aforementioned operational interpretation
of squashed entanglement. We finally conclude in Section~\ref{sec:conclu}%
\ with a summary and some open questions.

\section{Background\label{sec:background}}

\subsection{Basics of quantum information\label{sec:basics}}

We review some basic aspects of quantum information before proceeding with the
main development (see, e.g., \cite{W16}\ for a review). Let $\mathcal{L}%
(\mathcal{H})$ denote the algebra of bounded linear operators acting on a
Hilbert space $\mathcal{H}$ (we consider finite-dimensional Hilbert spaces
throughout this paper). Let $\mathcal{L}_{+}(\mathcal{H})$ denote the subset
of positive semi-definite operators. An\ operator $\rho$ is in the set
$\mathcal{D}(\mathcal{H})$\ of density operators (or states) if $\rho
\in\mathcal{L}_{+}(\mathcal{H})$ and Tr$\left\{  \rho\right\}  =1$. Throughout
this paper, we let $\pi$ denote the maximally mixed state on a given Hilbert
space $\mathcal{H}$, so that $\pi\equiv I_{\mathcal{H}}/\dim(\mathcal{H)}$.
The tensor product of two Hilbert spaces $\mathcal{H}_{A}$ and $\mathcal{H}%
_{B}$ is denoted by $\mathcal{H}_{A}\otimes\mathcal{H}_{B}$ or $\mathcal{H}%
_{AB}$.\ Given a multipartite density operator $\rho_{AB}\in\mathcal{D}%
(\mathcal{H}_{A}\otimes\mathcal{H}_{B})$, we unambiguously write $\rho_{A}%
=\ $Tr$_{B}\left\{  \rho_{AB}\right\}  $ for the reduced density operator on
system $A$. We use $\rho_{AB}$, $\sigma_{AB}$, $\tau_{AB}$, $\omega_{AB}$,
etc.~to denote general density operators in $\mathcal{D}(\mathcal{H}%
_{A}\otimes\mathcal{H}_{B})$, while $\psi_{AB}$, $\varphi_{AB}$, $\phi_{AB}$,
etc.~denote rank-one density operators (pure states) in $\mathcal{D}%
(\mathcal{H}_{A}\otimes\mathcal{H}_{B})$ (with it implicit, clear from the
context, and the above convention implying that $\psi_{A}$, $\varphi_{A}$,
$\phi_{A}$ may be mixed if $\psi_{AB}$, $\varphi_{AB}$, $\phi_{AB}$ are pure).
A purification $|\phi^{\rho}\rangle_{RA}\in\mathcal{H}_{R}\otimes
\mathcal{H}_{A}$ of a state $\rho_{A}\in\mathcal{D}(\mathcal{H}_{A})$ is such
that $\rho_{A}=\operatorname{Tr}_{R}\{|\phi^{\rho}\rangle\langle\phi^{\rho
}|_{RA}\}$. An isometry $U:\mathcal{H}\rightarrow\mathcal{H}^{\prime}$ is a
linear map such that $U^{\dag}U=I_{\mathcal{H}}$. Often, an identity operator
is implicit if we do not write it explicitly (and should be clear from the context).

A linear map $\mathcal{N}_{A\rightarrow B}:\mathcal{L}(\mathcal{H}%
_{A})\rightarrow\mathcal{L}(\mathcal{H}_{B})$\ is positive if $\mathcal{N}%
_{A\rightarrow B}\left(  \sigma_{A}\right)  \in\mathcal{L}(\mathcal{H}%
_{B})_{+}$ whenever $\sigma_{A}\in\mathcal{L}(\mathcal{H}_{A})_{+}$. Let
id$_{A}$ denote the identity map acting on a system $A$. A linear map
$\mathcal{N}_{A\rightarrow B}$ is completely positive if the map
id$_{R}\otimes\mathcal{N}_{A\rightarrow B}$ is positive for a reference system
$R$ of arbitrary size. A linear map $\mathcal{N}_{A\rightarrow B}$ is
trace-preserving if $\operatorname{Tr}\left\{  \mathcal{N}_{A\rightarrow
B}\left(  \tau_{A}\right)  \right\}  =\operatorname{Tr}\left\{  \tau
_{A}\right\}  $ for all input operators $\tau_{A}\in\mathcal{L}(\mathcal{H}%
_{A})$. A quantum channel is a linear map which is completely positive and
trace-preserving (CPTP). A quantum channel $\mathcal{U}:\mathcal{L}%
(\mathcal{H}_{A})\rightarrow\mathcal{L}(\mathcal{H}_{B})$ is an isometric
channel if it has the action $\mathcal{U}(X_{A})=UX_{A}U^{\dag}$, where
$X_{A}\in\mathcal{L}(\mathcal{H}_{A})$ and $U:\mathcal{H}_{A}\rightarrow
\mathcal{H}_{B}$ is an isometry.

The trace distance between two quantum states $\rho,\sigma\in\mathcal{D}%
(\mathcal{H})$\ is equal to $\left\Vert \rho-\sigma\right\Vert _{1}$. It has a
direct operational interpretation in terms of the distinguishability of these
states. That is, if $\rho$ or $\sigma$ are prepared with equal probability and
the task is to distinguish them via some quantum measurement, then the optimal
success probability in doing so is equal to $\left(  1+\left\Vert \rho
-\sigma\right\Vert _{1}/2\right)  /2$. The trace distance and fidelity are
related by the Fuchs-van-de-Graaf inequalities \cite{FG98}:%
\begin{equation}
1-\sqrt{F(\rho,\sigma)}\leq\frac{1}{2}\left\Vert \rho-\sigma\right\Vert
_{1}\leq\sqrt{1-F(\rho,\sigma)}. \label{eq:F-v-d-G-ineq}%
\end{equation}
The rightmost quantity above is known to be a distance measure, satisfying the
triangle inequality, as proposed and shown in \cite{GLN05}. This quantity was
generalized to subnormalized states and given the name \textquotedblleft
purified distance\textquotedblright\ in \cite{TCR09}.

Let $\{|i\rangle_{A}\}$ denote the standard, orthonormal basis for a Hilbert
space $\mathcal{H}_{A}$, and let $\{|i\rangle_{B}\}$ be defined similarly for
$\mathcal{H}_{B}$. If the dimensions of these spaces are equal ($\dim
(\mathcal{H}_{A})=\dim(\mathcal{H}_{B})=d$), then we define the maximally
entangled state $|\Phi\rangle_{AB}\in\mathcal{H}_{A}\otimes\mathcal{H}_{B}$ as%
\begin{equation}
|\Phi\rangle_{AB}\equiv\frac{1}{\sqrt{d}}\sum_{i=1}^{d}|i\rangle_{A}%
\otimes|i\rangle_{B}.
\end{equation}
The generalized Pauli shift operator $X$\ is defined by $X_{A}|i\rangle
_{A}=|i\oplus1\rangle_{A}$, where addition is modulo $d$. The generalized
Pauli phase operator $Z$ is defined by $Z_{A}|k\rangle_{A}=\exp(2\pi
ik/d)|k\rangle_{A}$. The Heisenberg--Weyl group is defined as $\{X_{A}%
^{j}Z_{A}^{k}\}_{j,k\in\left\{  1,\ldots,d\right\}  }$, and satisfies%
\begin{equation}
\frac{1}{d}\operatorname{Tr}\{X_{A}^{j}Z_{A}^{k}\}=\delta_{d,j}\delta
_{d,k}.\label{eq:HW-trace}%
\end{equation}
The generalized Bell basis is defined as $\{|\Phi^{j,k}\rangle_{AB}%
\}_{j,k\in\left\{  1,\ldots,d\right\}  }$, where%
\begin{equation}
|\Phi^{j,k}\rangle_{AB}=(X_{A}^{j}Z_{A}^{k}\otimes I_{B})|\Phi\rangle_{AB}.
\end{equation}
It is an orthonormal basis as a consequence of \eqref{eq:HW-trace}.

\subsection{Conditional quantum mutual information}

Here we briefly provide more background on the conditional quantum mutual
information (CQMI). The CQMI is understood informally as quantifying the
correlations between systems $A$ and $B$ from the perspective of a party
possessing system $E$ \cite{DY08,YD09}. The CQMI is symmetric with respect to
the exchange of the $A$ and $B$ systems of a state $\rho_{ABE}$:
$I(A;B|E)_{\rho}=I(B;A|E)_{\rho}$. One of the powerful properties of the
CQMI\ is that it obeys a chain rule of the following form for a state
$\sigma_{A_{1}\cdots A_{n}BE}$:%
\begin{equation}
I(A_{1}\cdots A_{n};B|E)_{\sigma}=\sum_{i=1}^{n}I(A_{i};B|EA_{1}%
^{i-1})_{\sigma},
\end{equation}
where $A_{1}^{i-1}\equiv A_{1}\cdots A_{i-1}$, so that we can think of the
correlations between $A_{1}\cdots A_{n}$ and $B$, as observed by $E$, being
built up one system at a time. The CQMI is always non-negative $I(A;B|E)_{\rho
}\geq0$, an entropy inequality known as strong subadditivity
\cite{LR73,PhysRevLett.30.434}. A first relation of CQMI to recoverability was
established in \cite{HJPW04}, in which it was shown that $I(A;B|E)_{\rho}=0$
if and only if there exists a recovery quantum channel $\mathcal{R}%
_{E\rightarrow AE}$ such that the global state $\rho_{ABE}$\ can be
reconstructed by acting on one share $E$\ of the marginal state $\rho_{BE}$:%
\begin{equation}
\rho_{ABE}=\mathcal{R}_{E\rightarrow AE}(\rho_{BE}).
\end{equation}
More recently, it was shown that these results are robust \cite{FR14,BSW14}%
:\ the CQMI\ is approximately equal to zero (i.e., $I(A;B|E)_{\rho}\approx0$)
if and only if the global state is approximately recoverable by acting on one
share $E$ of the marginal $\rho_{BE}$ (i.e., $\rho_{ABE}\approx\mathcal{R}%
_{E\rightarrow AE}(\rho_{BE})$). In more detail, \cite{FR14} established the
inequality%
\begin{equation}
I(A;B|E)_{\rho}\geq-\log F(A;B|E)_{\rho},
\end{equation}
and \cite{FR14,BSW14} established a converse relation. Using some recent tools
\cite{Winter15} and the Fuchs-van-de-Graaf inequalities in
\eqref{eq:F-v-d-G-ineq}, the following refinement of the converse holds
\cite[Theorem~11.10.5]{W16}: if $F(A;B|E)_{\rho}\geq1-\varepsilon$ for
$\varepsilon\in(0,1)$, then%
\begin{multline}
I(A;B|E)_{\rho}\leq2\sqrt{\varepsilon}\log\left\vert B\right\vert
\label{eq:loc-rec-low-CQMI}\\
+\left(  1+\sqrt{\varepsilon}\right)  h_{2}(\sqrt{\varepsilon}/\left[
1+\sqrt{\varepsilon}\right]  ),
\end{multline}
where the binary entropy $h_{2}(x)$ is defined for $x\in(0,1)$ as%
\begin{equation}
h_{2}(x)\equiv-x\log_{2}x-\left(  1-x\right)  \log_{2}\left(  1-x\right)  ,
\end{equation}
with the property that $\lim_{x\rightarrow0}h_{2}(x)=0$. From the above, we
see that the CQMI\ is a witness to quantum Markovianity:\ if it is small, then
we can understand the correlations between $A$ and $B$ as being mediated by
system $E$ via the recovery channel $\mathcal{R}_{E\rightarrow AE}$.

\subsection{Quantum state redistribution}

\label{sec:QSR-review}This section provides some background on quantum state
redistribution \cite{DY08,YD09}. A quantum state redistribution protocol
begins with a sender, a receiver, and a reference party sharing many
independent copies of a four-system pure state $\psi_{ABER}$. The sender has
the $AE$ systems, the receiver the $R$ systems, and the reference the $B$
systems. The goal is to use entanglement and noiseless quantum communication
to redistribute the systems such that the sender ends up with the $E$ systems,
the receiver the $AR$ systems, and the reference the $B$ systems. As a side
benefit, the protocol can also generate entanglement shared between the sender
and receiver at the end.

More formally, let $n\in\mathbb{N}$, $M\in\mathbb{N}$, and $\varepsilon
\in\left[  0,1\right]  $. An $\left(  n,M,\varepsilon\right)  $ state
redistribution protocol consists of an encoding channel $\mathcal{E}%
_{A^{n}E^{n}A^{\prime}\rightarrow\bar{A}_{0}A_{0}\hat{E}^{n}}$\ and a decoding
channel $\mathcal{D}_{\bar{A}_{0}R^{\prime}R^{n}\rightarrow\hat{A}^{n}\hat
{R}^{n}R_{0}}$, such that the following state%
\begin{equation}
\xi_{\hat{A}^{n}B^{n}\hat{E}^{n}\hat{R}^{n}A_{0}R_{0}}\equiv\mathcal{D}%
_{\bar{A}_{0}R^{\prime}R^{n}\rightarrow\hat{A}^{n}\hat{R}^{n}R_{0}}%
(\varphi_{\bar{A}_{0}A_{0}\hat{E}^{n}B^{n}R^{n}R^{\prime}}),
\label{eq:final-state-redis-state}%
\end{equation}
where%
\begin{equation}
\varphi_{\bar{A}_{0}A_{0}\hat{E}^{n}B^{n}R^{n}R^{\prime}}\equiv\mathcal{E}%
_{A^{n}E^{n}A^{\prime}\rightarrow\bar{A}_{0}A_{0}\hat{E}^{n}}(\psi
_{ABER}^{\otimes n}\otimes\Phi_{A^{\prime}R^{\prime}}),
\label{eq:phi-state-redist}%
\end{equation}
has fidelity larger than $1-\varepsilon$ with the following pure state:%
\begin{equation}
\psi_{\hat{A}B\hat{E}\hat{R}}^{\otimes n}\otimes\Phi_{A_{0}R_{0}},
\end{equation}
where $\Phi_{A^{\prime}R^{\prime}}$ and $\Phi_{A_{0}R_{0}}$ denote maximally
entangled states of Schmidt ranks $|A^{\prime}|$ and $|A_{0}|$, respectively.
That is, an $\left(  n,M,\varepsilon\right)  $ state redistribution protocol
satisfies%
\begin{equation}
F(\xi_{\hat{A}^{n}B^{n}\hat{E}^{n}\hat{R}^{n}A_{0}R_{0}},\psi_{\hat{A}B\hat
{E}\hat{R}}^{\otimes n}\otimes\Phi_{A_{0}R_{0}})\geq1-\varepsilon.
\label{eq:good-QSR}%
\end{equation}
The parameter $M$ is the dimension of the quantum system $\bar{A}_{0}$\ that
is communicated from sender to receiver:%
\begin{equation}
M\equiv\left\vert \bar{A}_{0}\right\vert .
\end{equation}

\begin{definition}
[Achievable rate]A rate $R$ is \textit{achievable} for state redistribution of
$\psi_{ABER}$ if for all $\varepsilon\in\left(  0,1\right)  $, $\delta>0$, and
sufficiently large $n$, there exists an $(n,2^{n\left[  R+\delta\right]
},\varepsilon)$ state redistribution protocol.
\end{definition}

\begin{definition}
[Quantum comm.~cost]The \textit{quantum communication cost} $\mathcal{Q}%
(\psi_{ABER})$\ of state redistribution of $\psi_{ABER}$ is equal to the
infimum of all rates which are achievable for redistribution of $\psi_{ABER}$.
\end{definition}

The following theorem from \cite{DY08,YD09}\ gives a precise characterization
of the quantum communication cost:

\begin{theorem}
[\cite{DY08,YD09}]\label{thm:redist-CMI}The quantum communication cost of
state redistribution is equal to half the conditional quantum mutual
information:%
\begin{equation}
\mathcal{Q}(\psi_{ABER})=\frac{1}{2}I(A;B|R)_{\psi}.
\end{equation}

\end{theorem}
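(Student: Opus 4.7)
The plan is to establish the two inequalities $\mathcal{Q}(\psi_{ABER}) \geq \tfrac{1}{2}I(A;B|R)_\psi$ (converse) and $\mathcal{Q}(\psi_{ABER}) \leq \tfrac{1}{2}I(A;B|R)_\psi$ (achievability) by separate arguments that are standard in quantum Shannon theory.

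For the converse, I would track the quantum mutual information between the reference register $B^n$ and Bob's total share throughout the protocol. Before the protocol starts, Bob holds $R^n R'$ and, since $\Phi_{A'R'}$ is independent of $\psi^{\otimes n}_{ABER}$, the mutual information equals $n\,I(B;R)_\psi$. The encoding $\mathcal{E}$ acts only on Alice's side and so cannot change this quantity; the transmission of the $\log M$-qubit register $\bar A_0$ to Bob increases it by at most $2\log M$ (via the elementary bound $I(X;YQ)\leq I(X;Y)+2\log|Q|$); and Bob's decoding cannot increase it further by data processing. Thus
\begin{equation}
I(B^n; \hat A^n \hat R^n R_0)_\xi \;\leq\; n\,I(B;R)_\psi + 2\log M.
\end{equation}
The fidelity hypothesis \eqref{eq:good-QSR} together with an Alicki--Fannes--Winter continuity bound yields
\begin{equation}
I(B^n; \hat A^n \hat R^n R_0)_\xi \;\geq\; n\,I(B;AR)_\psi - n\,\eta(\varepsilon),
\end{equation}
with $\eta(\varepsilon)\to 0$ as $\varepsilon\to 0$. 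Combining these estimates, invoking the chain-rule identity $I(B;AR)_\psi - I(B;R)_\psi = I(A;B|R)_\psi$, and dividing by $n$ yield the desired lower bound on the rate in the limit $\varepsilon,\delta\to 0$.

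For achievability I would use the decoupling method. Alice's encoding consists of appending an ancillary register in the maximally mixed state, applying a Haar-random isometry on the resulting composite Alice register, and partitioning the output into three parts: the register $\bar A_0$ of dimension $M = 2^{n(I(A;B|R)_\psi/2 + \delta)}$ to be transmitted to Bob, a kept register $A_0$ serving as the shared entanglement at the end of the protocol, and a register $\hat E^n$ playing the role of the final $E^n$. The one-shot decoupling theorem guarantees that, in expectation over the random isometry, Alice's retained marginal on $B^n$ is approximately a product state of the correct form; Uhlmann's theorem then supplies a decoding isometry on Bob's registers $\bar A_0 R^n R'$ that recovers the intended pure-state target up to vanishing fidelity error. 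The entropy accounting within the decoupling inequality, together with a careful choice of the ancilla and $|A_0|$ dimensions to produce the correct output entanglement, pins the transmission rate at $\tfrac{1}{2} I(A;B|R)_\psi$. Derandomization of the unitary then produces a concrete protocol.

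The main obstacle is the direct part: the decoupling must succeed against the reference $B^n$ while simultaneously preserving the correlations of Alice's kept register with the untouched $E^n$, so that the decoder coherently reconstructs all of $\psi^{\otimes n}_{ABER}$ with $\hat A^n$ moved to Bob's side. This asymmetric decoupling requirement (against $B^n$ only, not the full complement $B^n R^n$) is exactly what yields the conditional mutual information $I(A;B|R)_\psi$ as the optimal rate rather than the strictly larger unconditional quantity $I(A;BR)_\psi$ that a naive decoupling would produce.
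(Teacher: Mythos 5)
First, note that the paper does not prove this theorem at all: it is imported by citation from the state-redistribution papers \cite{DY08,YD09} (with the unitary-encoder form taken from \cite{PhysRevA.78.030302,BCT16,DHO14}), so there is no in-paper argument to compare against line by line. Judged on its own, your converse is correct and is the standard argument: the receiver's mutual information with the reference starts at $nI(B;R)_\psi$, is unchanged by Alice's encoding (a channel on disjoint systems leaves the $B^nR^nR'$ marginal alone), increases by at most $2\log M$ upon receipt of $\bar{A}_0$, is non-increasing under decoding, and must end up near $nI(B;AR)_\psi$ by the fidelity criterion plus continuity; the chain rule then gives $\tfrac{1}{n}\log M\gtrsim\tfrac{1}{2}I(A;B|R)_\psi$.

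The achievability sketch, however, has a genuine gap, and it sits exactly at the point you yourself flag as ``the main obstacle.'' The Uhlmann step requires the post-encoding state on the systems the decoder never touches, namely $B^n\hat{E}^nA_0$, to be close to $\psi_{BE}^{\otimes n}\otimes\pi_{A_0}$: the retained register must be decoupled from $B^n\hat{E}^n$ \emph{and} the $B^n\hat{E}^n$ marginal must be left intact. A Haar-random isometry on Alice's full register $A^nE^nA'$ does not deliver this: tracing out $\bar{A}_0$ leaves $\hat{E}^n$ close to maximally mixed and decoupled from $B^n$, destroying precisely the $E$--$B$ correlations the target marginal must retain. Conversely, if the random isometry is restricted to $A^nA'$ so that the $B^nE^n$ marginal is automatically preserved, the decoupling criterion forces $A_0$ to decouple from all of $B^nE^n$, which costs a rate of order $\tfrac{1}{2}\left[\log|A|+H(A|R)_\psi\right]$, strictly larger than $\tfrac{1}{2}I(A;B|R)_\psi$ in general (consistent with the paper's remark and Appendix~\ref{sec:classical-example} that access to the $E$ systems is necessary to reach the CQMI). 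Asserting that ``the asymmetric decoupling requirement is exactly what yields the conditional mutual information'' describes the goal but supplies no mechanism; achieving a decoupling of $A_0$ from $B^n\hat{E}^n$ while negligibly disturbing $B^n\hat{E}^n$ is the nontrivial content of the achievability proofs of \cite{DY08,YD09,PhysRevA.78.030302} (via conditional-typicality decompositions, merging-based constructions, or decoupling relative to side information), and it is also the very condition the present paper isolates as ``conditional erasure.'' Without that ingredient your direct part does not go through.
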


The achievability part of the above theorem was simplified in
\cite{PhysRevA.78.030302}, which is the formulation of state redistribution
that we will use to characterize deconstruction cost.

\begin{remark}
\label{rem:unitary-encoding}The results of
\cite{PhysRevA.78.030302,BCT16,DHO14}\ establish that the encoding channel and
decoding channel for state redistribution can be chosen as unitaries, a key
fact that we will use in what follows. Let $U_{A^{n}E^{n}A^{\prime
}\rightarrow\bar{A}_{0}A_{0}\hat{E}^{n}}^{\mathcal{E}}$\ denote the unitary
encoder and $U_{\bar{A}_{0}R^{\prime}R^{n}\rightarrow\hat{A}^{n}\hat{R}%
^{n}R_{0}}^{\mathcal{D}}$ the unitary decoder for these protocols, and note
that the state $\xi_{\hat{A}^{n}B^{n}\hat{E}^{n}\hat{R}^{n}A_{0}R_{0}}$ in
\eqref{eq:final-state-redis-state} can be taken as a pure state as a
consequence. See Figure~\ref{fig:qsr}\ for a depiction of such a state
redistribution protocol.
\end{remark}

\begin{figure}[ptb]
\begin{center}
\includegraphics[
width=3.3399in
]{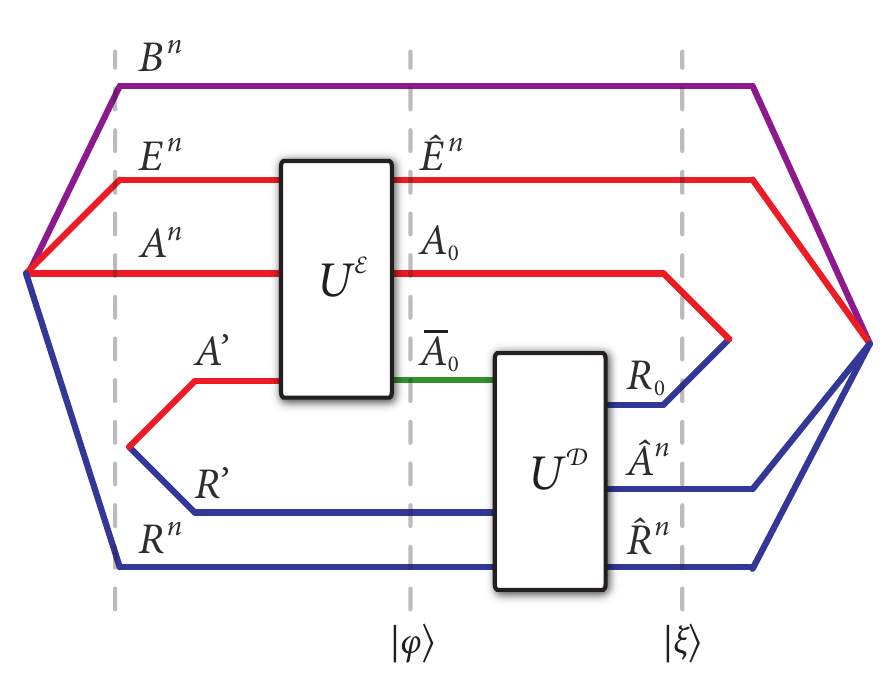}
\end{center}
\caption{Quantum state redistribution with a unitary encoding and decoding. By
using shared entanglement in systems $A^{\prime}$ and $R^{\prime}$ and
noiseless quantum communication of the system $\bar{A}_{0}$, a sender can
transfer her quantum systems $A^{n}$ to a receiver, such that the resulting
state of systems $\hat{A}^{n}B^{n}\hat{R}^{n}\hat{E}^{n}$ has arbitrarily high
fidelity with the initial state of systems $A^{n}B^{n}R^{n}E^{n}$. At the same
time, the protocol generates entanglement in the registers $A_{0}$ and $R_{0}%
$.}%
\label{fig:qsr}%
\end{figure}

We can also quantify the entanglement cost of a quantum state redistribution
protocol. In such a case, for $L\in\mathbb{N}$, we define an
$(n,M,L,\varepsilon)$ quantum state redistribution protocol specified exactly
as given above, except we set%
\begin{equation}
L\equiv|A^{\prime}|/|A_{0}|.
\end{equation}
With this convention, there is an entanglement cost if $L\geq1$ and there is
an entanglement gain if $L\leq1$. A rate pair $(R,E)$ is \textit{achievable}
for state redistribution of $\psi_{ABER}$ if for all $\varepsilon\in\left(
0,1\right)  $, $\delta>0$, and sufficiently large $n$, there exists an
$(n,2^{n\left[  R+\delta\right]  },2^{n\left[  E+\delta\right]  }%
,\varepsilon)$ state redistribution protocol. The achievable rate region of
state redistribution of $\psi_{ABER}$ is equal to the union of all rate pairs
which are achievable for redistribution of $\psi_{ABER}$.

Refs.~\cite{DY08,YD09}\ proved that the rate pair%
\begin{equation}
(I(A;B|R)_{\psi}/2,\left[  I(A;E)_{\psi}-I(A;R)_{\psi}\right]
/2)\label{eq:ach-rate-pair-redist}%
\end{equation}
is achievable and that the optimal rate region is equal to%
\begin{align}
R  & \geq\frac{1}{2}I(A;B|R)_{\psi},\label{eq:redist-region-1}\\
R+E  & \geq H(A|R)_{\psi}.\label{eq:redist-region-2}%
\end{align}
Thus, the rate pair in \eqref{eq:ach-rate-pair-redist} corresponds to an
optimal corner point of the region in
\eqref{eq:redist-region-1}--\eqref{eq:redist-region-2}. The protocol from
\cite{PhysRevA.78.030302} consumes entanglement at a rate equal to
$I(A;E)_{\psi}/2$ and generates entanglement at a rate equal to $I(A;R)_{\psi
}/2$.

\section{State deconstruction protocol\label{sec:deconst-general}}

Here we provide an operational definition for the \textit{deconstruction cost}
of a tripartite state $\rho_{ABE}$. We frame the problem in the formalism of
quantum Shannon theory \cite{W16}, which, as we will show, ultimately leads to
the CQMI being equal to the deconstruction cost after taking a limit. In what
follows, we consider two seemingly different models, called the local unitary
randomizing model and the Landauer--Bennett erasure model. In
Section~\ref{sec:model-equiv}, we show that these two models are in fact
equivalent to each other, in the sense that a protocol from one model can
simulate a protocol from the other, with the same resource consumption and performance.

\subsection{Local unitary randomizing model\label{sec:dec-protocol}}

We begin by defining a state deconstruction protocol in the local unitary
randomizing model. Let $n\in\mathbb{N}$, $M\in\mathbb{N}$, and $\varepsilon
\in\left[  0,1\right]  $. An $(n,M,\varepsilon)$ state deconstruction protocol
consists of an ensemble of $M$ unitaries $\{p_{i},U_{A^{n}A^{\prime}E^{n}}%
^{i}\}_{i=1}^{M}$ that lead to the following local unitary randomizing
channel:%
\begin{equation}
\mathcal{N}_{A^{n}A^{\prime}E^{n}}(\tau_{A^{n}A^{\prime}E^{n}})\equiv\sum
_{i}p_{i}U_{A^{n}A^{\prime}E^{n}}^{i}\tau_{A^{n}A^{\prime}E^{n}}%
(U_{A^{n}A^{\prime}E^{n}}^{i})^{\dag}, \label{eq:LUR-deconstruction}%
\end{equation}
for a density operator $\tau_{A^{n}A^{\prime}E^{n}}$, with system $A^{\prime}$
an auxiliary system. We also refer to such an action as an $\varepsilon
$-\textit{deconstruction operation} and are interested in its action on the
state $\rho_{ABE}^{\otimes n}\otimes\theta_{A^{\prime}}$, where $\theta
_{A^{\prime}}$ is an auxiliary density operator that plays the role of a
catalyst in the sense of\ \cite{MBDRC16}\ to help in the deconstruction task.
The state resulting from a deconstruction operation acting on $\rho
_{ABE}^{\otimes n}\otimes\theta_{A^{\prime}}$\ is as follows:%
\begin{equation}
\omega_{A^{n}A^{\prime}B^{n}E^{n}}\equiv\mathcal{N}_{A^{n}A^{\prime}E^{n}%
}(\rho_{ABE}^{\otimes n}\otimes\theta_{A^{\prime}}). \label{eq:LURed-state}%
\end{equation}

\begin{figure}[ptb]
\begin{center}
\includegraphics[
width=3.0078in
]{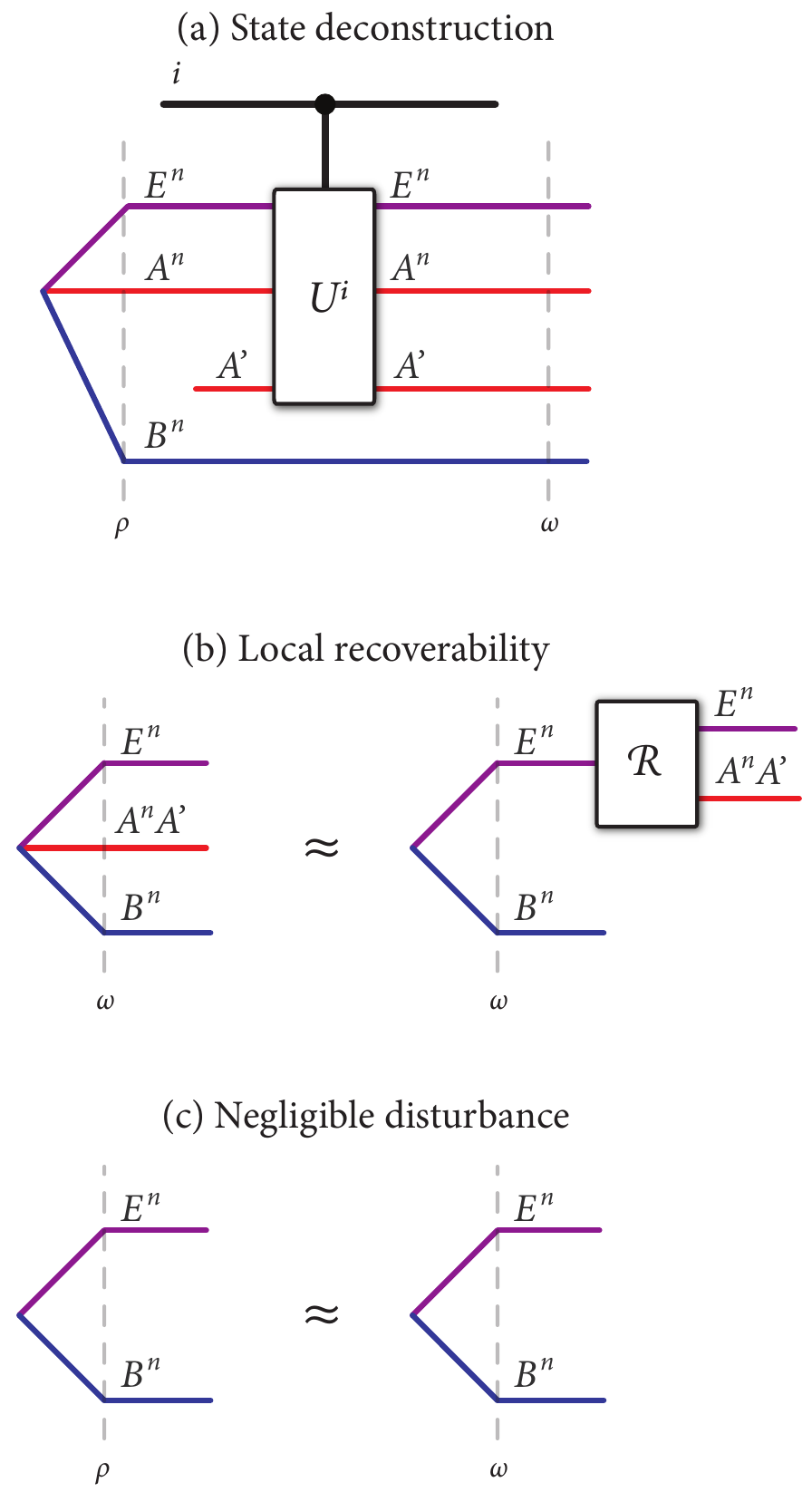}
\end{center}
\caption{Depiction of (a)\ a state deconstruction protocol in the local
unitary randomizing model along with the conditions of (b) local
recoverability and (c) negligible disturbance.}%
\label{fig:deconstruction}%
\end{figure}

We demand for such a deconstruction operation to satisfy the property of
\textit{negligible disturbance} and for the state resulting from the operation
to be \textit{locally recoverable}. In particular, the negligible disturbance
condition means that the deconstruction operation $\mathcal{N}_{A^{n}%
A^{\prime}E^{n}}$ causes little disturbance to the residual state of the
$B^{n}E^{n}$ systems, in the sense that%
\begin{equation}
F(\omega_{B^{n}E^{n}},\rho_{BE}^{\otimes n})\geq1-\varepsilon.
\label{eq:small-disturb-first}%
\end{equation}
The condition of local recoverability means that the resulting state
$\omega_{A^{n}A^{\prime}B^{n}E^{n}}$ is such that the $A^{n}A^{\prime}$
systems are locally recoverable by acting on the $E^{n}$ systems alone. That
is, there exists a recovery channel $\mathcal{R}_{E^{n}\rightarrow
A^{n}A^{\prime}E^{n}}$ such that%
\begin{equation}
F(\omega_{A^{n}A^{\prime}B^{n}E^{n}},\mathcal{R}_{E^{n}\rightarrow
A^{n}A^{\prime}E^{n}}(\omega_{B^{n}E^{n}}))\geq1-\varepsilon.
\label{eq:local-rec}%
\end{equation}
Equivalently, we demand for the following fidelity of recovery to be large:%
\begin{equation}
F(A^{n}A^{\prime};B^{n}|E^{n})_{\omega}\geq1-\varepsilon.
\end{equation}
Figure~\ref{fig:deconstruction}\ depicts a state deconstruction protocol in
the local unitary randomizing model.

\begin{definition}
[Achievable rate]\label{def:achievable-rate}A rate $R$ is \textit{achievable}
for state deconstruction of $\rho_{ABE}$ if for all $\varepsilon\in\left(
0,1\right)  $, $\delta>0$, and sufficiently large $n$, there exists an
$(n,2^{n\left[  R+\delta\right]  },\varepsilon)$ state deconstruction protocol.
\end{definition}

\begin{definition}
[Deconstruction cost]\label{def:dec-cost}The \textit{deconstruction cost}
$\mathcal{D}(A;B|E)_{\rho}$\ of a state $\rho_{ABE}$ is equal to the infimum
of all rates which are achievable for state deconstruction of $\rho_{ABE}$.
\end{definition}

\begin{remark}
\cite[Proposition~35]{BSW14} (refined in \cite[Theorem~11.10.5]{W16}) implies
that the deconstruction cost of $\rho_{ABE}$ is equal to the minimum rate of
noise needed to deconstruct the correlations in $\rho_{ABE}^{\otimes n}$ in
such a way that the resulting state has vanishing normalized CQMI.
Specifically, the state $\omega_{A^{n}A^{\prime}B^{n}E^{n}}$ resulting from an
$\left(  n,M,\varepsilon\right)  $ state deconstruction protocol is such that%
\begin{multline}
\frac{1}{n}I(A^{n}A^{\prime};B^{n}|\hat{E}^{n})_{\omega}\leq2\sqrt
{\varepsilon}\log\left\vert B\right\vert \\
+\frac{1}{n}\left(  1+\sqrt{\varepsilon}\right)  h_{2}(\sqrt{\varepsilon
}/\left[  1+\sqrt{\varepsilon}\right]  ).
\end{multline}

\end{remark}

\begin{remark}
Operational tasks related to state deconstruction were previously explored in
\cite{WSM15}, where a class of \textquotedblleft Markovianizing
operations\textquotedblright\ were defined and subsequently broadened in
\cite{WSM15a,BBW15}. Deconstruction operations are different in that we allow
for a catalyst, a unitary interaction between the $A^{n}E^{n}$ systems and the
catalyst, and we demand for the condition of negligible disturbance to hold.
Whereas our converse (Theorem~\ref{thm:CQMI-lower-bnd}) holds for the model of
\cite{WSM15} as well, the CQMI cannot be achieved: the fact that \cite{WSM15}
does not allow for an interaction with the $E$ systems leads to a strictly
larger optimal rate function based on the Koashi-Imoto
decomposition~\cite{koashi02} (at least for pure states). This proves that the
CQMI cannot be achieved without having access to the $E$ systems. The result
of \cite{WSM15}\ is motivated from questions in distributed computation
\cite{WSM15_2} but has the disadvantage that the Koashi-Imoto decomposition is
not continuous in the state.
\end{remark}

\begin{remark}
In Appendix~\ref{sec:classical-example}, we give a strictly classical example
that demonstrates how the conditional mutual information cannot be achieved
without having access to the $E$ systems. \end{remark}

\subsection{Landauer--Bennett erasure model}

\label{sec:landauer-erasure-model}We can think of deconstruction operations in
an alternative way, akin to the Landauer--Bennett model of erasure
\cite{L61,B73}\ and discussed in \cite[Remark~II.4]{GPW05}, in which we
interact the systems of interest unitarily (reversibly)\ with a catalyst and
subsequently perform a partial trace over some subsystem. The deconstruction
cost in this case is then related to the size of the system that we trace out.
In this alternative model, we define a deconstruction operation $\mathcal{N}%
_{A^{n}E^{n}\rightarrow A_{1}^{\prime}\hat{E}^{n}}$ as%
\begin{align}
\omega_{A_{1}^{\prime}B^{n}\hat{E}^{n}} &  \equiv\mathcal{N}_{A^{n}%
E^{n}\rightarrow A_{1}^{\prime}\hat{E}^{n}}(\rho_{ABE}^{\otimes n}%
)\label{eq:omega-state-LB}\\
&  \equiv\operatorname{Tr}_{A_{2}^{\prime}}\{\mathcal{U}_{A^{n}E^{n}A^{\prime
}\rightarrow A_{1}^{\prime}A_{2}^{\prime}\hat{E}^{n}}(\rho_{ABE}^{\otimes
n}\otimes\theta_{A^{\prime}})\},\label{eq:decons-op-landauer}%
\end{align}
with $\theta_{A^{\prime}}$ an arbitrary ancilla state and $\mathcal{U}%
_{A^{n}E^{n}A^{\prime}\rightarrow A_{1}^{\prime}A_{2}^{\prime}\hat{E}^{n}}$ a
unitary quantum channel. An $(n,M,\varepsilon)$ deconstruction protocol in
this case has $n$ defined again as the number of copies of $\rho_{ABE}$ and
$\varepsilon$ defined via \eqref{eq:small-disturb-first} and $F(A_{1}^{\prime
};B^{n}|\hat{E}^{n})_{\omega}\geq1-\varepsilon$.\ However, in this
Landauer--Bennett erasure model, we take $M$ defined as%
\begin{equation}
M\equiv\left\vert A_{2}^{\prime}\right\vert ^{2}.\label{eq:rate-Markov}%
\end{equation}
In this model, we take the convention of squaring the dimension of the removed
system $\left\vert A_{2}^{\prime}\right\vert ^{2}$ when calculating $M$,
because we are interested in measuring the amount of noise needed to remove
the $A_{2}^{\prime}$ system (i.e., the amount of noise needed to physically
implement a partial trace). One way to do so is to apply a randomizing channel
of the following form, which realizes a partial trace:%
\begin{multline}
\frac{1}{\left\vert A_{2}^{\prime}\right\vert ^{2}}\sum_{i=1}^{\left\vert
A_{2}^{\prime}\right\vert ^{2}}V_{A_{2}^{\prime}}^{i}\mathcal{U}_{A^{n}%
E^{n}A^{\prime}\rightarrow A_{1}^{\prime}A_{2}^{\prime}\hat{E}^{n}}(\rho
_{ABE}^{\otimes n}\otimes\theta_{A^{\prime}})(V_{A_{2}^{\prime}}^{i})^{\dag
}\label{eq:HW-partial-trace}\\
=\pi_{A_{2}^{\prime}}\otimes\operatorname{Tr}_{A_{2}^{\prime}}\{\mathcal{U}%
_{A^{n}E^{n}A^{\prime}\rightarrow A_{1}^{\prime}A_{2}^{\prime}\hat{E}^{n}%
}(\rho_{ABE}^{\otimes n}\otimes\theta_{A^{\prime}})\},
\end{multline}
where $\{V_{A_{2}^{\prime}}^{i}\}_{i=1}^{\left\vert A_{2}^{\prime}\right\vert
^{2}}$ is a unitary one-design and $\pi_{A_{2}^{\prime}}\equiv I_{A_{2}^{\prime}}/|A_{2}^{\prime}|$ is
the maximally mixed state. It is known that $\left\vert A_{2}^{\prime
}\right\vert ^{2}$ unitaries are necessary and sufficient for physically
implementing a partial trace in the above sense \cite{AMTW00}.

We can then define achievable rates and the deconstruction cost for this
alternative model just as in Definitions~\ref{def:achievable-rate}\ and
\ref{def:dec-cost}. This model might seem as if it is slightly different from
the local unitary randomizing one, but we show in the next section that they
are equivalent and thus lead to the same deconstruction cost.

\subsection{Equivalence of the two models\label{sec:model-equiv}}

In this section, we show that the local unitary randomizing model and the
Landauer--Bennett erasure models are equivalent, in the sense that they can
simulate one another with the same performance and resource consumption. This
equivalence was shown for a special case in \cite{MBDRC16}, and here we
generalize the argument to the settings considered in this paper. As a
consequence of our simulation argument, there is no need to consider two
different notions of deconstruction cost, since the simulation argument
implies that the costs are in fact the same.

First, we show that the local unitary randomizing model can simulate the
Landauer--Bennett erasure model. To this end, suppose that we are given a
catalyst state $\theta_{A^{\prime}}$ and an interaction unitary $U_{A^{n}%
E^{n}A^{\prime}\rightarrow A_{1}^{\prime}A_{2}^{\prime}\hat{E}^{n}}$, such
that the Landauer--Bennett erasure deconstruction operation is as given in
\eqref{eq:decons-op-landauer}. We can simulate such an operation by choosing
an ensemble of unitaries to be as follows:
\begin{equation}
\{1/|A_{2}^{\prime}|^{2},W_{A^{n}E^{n}A^{\prime}\rightarrow A_{1}^{\prime
}A_{2}^{\prime}\hat{E}^{n}}^{i}\}_{i=1}^{|A_{2}^{\prime}|^{2}},
\label{eq:simulating-ensemble}%
\end{equation}
where%
\begin{equation}
W_{A^{n}E^{n}A^{\prime}\rightarrow A_{1}^{\prime}A_{2}^{\prime}\hat{E}^{n}%
}^{i}\equiv V_{A_{2}^{\prime}}^{i}U_{A^{n}E^{n}A^{\prime}\rightarrow
A_{1}^{\prime}A_{2}^{\prime}\hat{E}^{n}}%
\end{equation}
and $\{V_{A_{2}^{\prime}}^{i}\}_{i=1}^{\left\vert A_{2}^{\prime}\right\vert
^{2}}$ is a set of Heisenberg--Weyl unitaries that realize a partial
trace.\ The result is that a local unitary randomizing channel in
\eqref{eq:LUR-deconstruction} formed from the ensemble in
\eqref{eq:simulating-ensemble} can realize the deconstruction operation in
\eqref{eq:decons-op-landauer}:%
\begin{multline}
\frac{1}{|A_{2}^{\prime}|^{2}}\sum_{i}W^{i}(\rho_{ABE}^{\otimes n}%
\otimes\theta_{A^{\prime}})W^{i\dag}\\
=\pi_{A_{2}^{\prime}}\otimes\operatorname{Tr}_{A_{2}^{\prime}}\{\mathcal{U}%
_{A^{n}E^{n}A^{\prime}\rightarrow A_{1}^{\prime}A_{2}^{\prime}\hat{E}^{n}%
}(\rho_{ABE}^{\otimes n}\otimes\theta_{A^{\prime}})\}\\
\equiv\pi_{A_{2}^{\prime}}\otimes\omega_{A_{1}^{\prime}B^{n}\hat{E}^{n}}.
\end{multline}
Both the negligible disturbance and the local recoverability conditions hold
with the same quality as in the original protocol. This is clear for the
negligible disturbance condition, and to see it for the local recoverability
condition, we can invoke a special case of the multiplicativity of fidelity of
recovery with respect to tensor-product states \cite{BT15}:%
\begin{equation}
F(A_{1}^{\prime}A_{2}^{\prime};B^{n}|\hat{E}^{n})_{\pi\otimes\omega}%
=F(A_{1}^{\prime};B^{n}|\hat{E}^{n})_{\omega}.
\end{equation}

Showing the other simulation requires a bit more effort. To this end, consider
an arbitrary ensemble of unitaries $\{p_{i},U_{A^{n}A^{\prime}E^{n}}%
^{i}\}_{i=1}^{M}$ and an ancilla $\theta_{A^{\prime}}$. We need to show how it
is possible to simulate the effect of a local unitary randomizing channel of
the form in \eqref{eq:LUR-deconstruction}\ built from this ensemble, by
bringing in an ancilla state, performing a global unitary, and ending with a
partial trace. We take the ancilla to be the following state:%
\begin{equation}
\pi_{S_{A}}\otimes\pi_{T_{A}}\otimes\sum_{i=1}^{M}p_{i}|i\rangle\langle
i|_{\hat{M}_{A}}\otimes\theta_{A^{\prime}},
\end{equation}
where $S_{A}$ and $T_{A}$ are quantum systems each having dimension equal to
$\sqrt{M}$. (Note that if $\sqrt{M}$ is not an integer, then we can
\textquotedblleft zero-pad\textquotedblright\ the probability distribution
$\{p_{i}\}$ such that its cardinality becomes a power of two---this has the
negligible effect of incrementing by one the number of bits needed to describe
the indices $i$ corresponding to the entries of the probability distribution
$\{p_{i}\}$ and at the same time ensures that $\sqrt{M}$ is an integer). It is
helpful to recall the following equality:%
\begin{equation}
\pi_{S_{A}}\otimes\pi_{T_{A}}=\frac{1}{M}\sum_{j,k}|\Phi^{j,k}\rangle
\langle\Phi^{j,k}|_{S_{A}T_{A}},
\end{equation}
where $\{|\Phi^{j,k}\rangle_{S_{A}T_{A}}\}$ denotes the Bell basis reviewed in
Section~\ref{sec:basics}. We take the unitary interaction between the ancilla
systems $S_{A}T_{A}\hat{M}_{A}A^{\prime}$\ and the data systems $A^{n}E^{n}$
to be a serial concatenation of the following two controlled unitaries:%
\begin{align}
&  \sum_{i}|i\rangle\langle i|_{\hat{M}_{A}}\otimes U_{A^{n}A^{\prime}E^{n}%
}^{i},\label{eq:1st-contr-U}\\
&  \sum_{j,k}|\Phi^{j,k}\rangle\langle\Phi^{j,k}|_{S_{A}T_{A}}\otimes
X_{\hat{M}_{A}}^{(j-1)\cdot d+k}. \label{eq:2nd-contr-U}%
\end{align}
The state resulting from applying these two controlled unitaries sequentially
(\eqref{eq:1st-contr-U} and then \eqref{eq:2nd-contr-U}) to the systems
$S_{A}T_{A}\hat{M}_{A}A^{n}E^{n}F$ is as follows:%
\begin{multline}
\frac{1}{M}\sum_{j,k}|\Phi^{j,k}\rangle\langle\Phi^{j,k}|_{S_{A}T_{A}}\\
\otimes\sum_{i}p_{i}X_{\hat{M}_{A}}^{(j-1)\cdot d+k}|i\rangle\langle
i|_{\hat{M}_{A}}[X_{\hat{M}_{A}}^{(j-1)\cdot d+k}]^{\dag}\\
\otimes U_{A^{n}A^{\prime}E^{n}}^{i}(\rho_{ABE}^{\otimes n}\otimes
\theta_{A^{\prime}})(U_{A^{n}A^{\prime}E^{n}}^{i})^{\dag}.
\end{multline}
After tracing over the $S_{A}$ register, which requires $\log M$ bits of noise
according to our convention in \eqref{eq:rate-Markov}, the state becomes as
follows:%
\begin{multline}
\pi_{T_{A}}\otimes\pi_{\hat{M}_{A}}\otimes\sum_{i}p_{i}U_{A^{n}A^{\prime}%
E^{n}}^{i}(\rho_{ABE}^{\otimes n}\otimes\theta_{A^{\prime}})(U_{A^{n}%
A^{\prime}E^{n}}^{i})^{\dag}\\
\equiv\pi_{T_{A}}\otimes\pi_{\hat{M}_{A}}\otimes\omega_{A^{n}A^{\prime}%
B^{n}E^{n}}.
\end{multline}
One can verify this explicitly, or see that it follows intuitively from a
cascade: tracing over system $S_{A}$ has the effect of \textquotedblleft
forgetting\textquotedblright\ $j$ and $k$, which has the effect of randomizing
the classical system $\hat{M}_{A}$ with a uniform mixture of the shift
operators $X_{\hat{M}_{A}}^{(j-1)\cdot d+k}$, which in turn has the effect of
\textquotedblleft forgetting\textquotedblright\ $i$, which then applies the
local unitary randomizing channel to the systems $A^{n}A^{\prime}E^{n}$. Both
the negligible disturbance and the local recoverability conditions hold with
the same quality as in the original protocol. This is clear for the negligible
disturbance condition, and to see it for the local recoverability condition,
we can invoke a special case of the multiplicativity of fidelity of recovery
with respect to tensor-product states \cite{BT15}:%
\begin{equation}
F(T_{A}\hat{M}_{A}A^{n}A^{\prime};B^{n}|E^{n})_{\pi\otimes\pi\otimes\omega
}=F(A^{n}A^{\prime};B^{n}|E^{n})_{\omega}.
\end{equation}

\section{Deconstruction cost is lower bounded by
CQMI\label{sec:CQMI-lower-bound}}

In this section, we prove that the deconstruction cost of a tripartite state
$\rho_{ABE}$\ is lower bounded by its conditional quantum mutual information
$I(A;B|E)_{\rho}$. We prove such a converse theorem in the Landauer--Bennett
erasure model. By the simulation argument given in
Section~\ref{sec:model-equiv}, this theorem also serves as a converse bound
for deconstruction cost in the local unitary randomizing model. For the
interested reader, Appendix~\ref{sec:alt-conv-proof}\ offers two alternative
converse proofs for optimality of the deconstruction cost in the local unitary
randomizing model. One of them has a flavor similar to the converse proof
given below, and the other is similar to those from prior works
\cite{GPW05,BBW15,WSM15a}.

\begin{theorem}
\label{thm:CQMI-lower-bnd}The conditional quantum mutual information
$I(A;B|E)_{\rho}$ of a tripartite state $\rho_{ABE}$\ is a lower bound on its
deconstruction cost $\mathcal{D}(A;B|E)_{\rho}$:%
\begin{equation}
I(A;B|E)_{\rho}\leq\mathcal{D}(A;B|E)_{\rho}.
\end{equation}

\end{theorem}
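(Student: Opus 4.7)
By the simulation argument of Section~\ref{sec:model-equiv}, it suffices to prove the bound in the Landauer--Bennett erasure model, so I consider an $(n,M,\varepsilon)$ protocol specified by a catalyst $\theta_{A'}$ and a unitary $\mathcal{U}_{A^n E^n A' \to A_1' A_2' \hat{E}^n}$, producing $\sigma_{A_1' A_2' B^n \hat{E}^n} \equiv \mathcal{U}(\rho_{ABE}^{\otimes n} \otimes \theta_{A'})$ with marginal $\omega_{A_1' B^n \hat{E}^n} = \operatorname{Tr}_{A_2'}\sigma$. The plan is to use the quantity $I(A_1' A_2'; B^n | \hat{E}^n)_\sigma$ of the pre-trace state as a pivot, sandwiching it between $n\,I(A;B|E)_\rho$ (minus a sublinear correction coming from the negligible disturbance condition) from below and $\log M$ (plus a sublinear correction coming from the local recoverability condition) from above.

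For the lower bound, the chain rule for mutual information gives
\begin{equation}
I(\hat{E}^n A_1' A_2'; B^n)_\sigma = I(\hat{E}^n; B^n)_\sigma + I(A_1' A_2'; B^n | \hat{E}^n)_\sigma.
\end{equation}
Since $\mathcal{U}$ acts as a local unitary on the left argument of the mutual information on the left-hand side, that quantity equals $I(A^n E^n A'; B^n)_{\rho^{\otimes n} \otimes \theta}$, and because $\theta_{A'}$ is in tensor product with everything else this collapses to $n\,I(AE;B)_\rho$. On the right-hand side, $I(\hat{E}^n; B^n)_\sigma = I(\hat{E}^n; B^n)_\omega$; combining \eqref{eq:small-disturb-first} with the Fuchs--van-de-Graaf inequality \eqref{eq:F-v-d-G-ineq} and the Fannes--Audenaert continuity of the three entropies composing the mutual information upper bounds this by $n\,I(E;B)_\rho + n\,\delta_1(\varepsilon)$, where $\delta_1(\varepsilon) \to 0$ as $\varepsilon \to 0$. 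Rearranging yields $I(A_1' A_2'; B^n | \hat{E}^n)_\sigma \geq n\,I(A;B|E)_\rho - n\,\delta_1(\varepsilon)$.

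For the upper bound, the chain rule applied differently gives
\begin{equation}
I(A_1' A_2'; B^n | \hat{E}^n)_\sigma = I(A_1'; B^n | \hat{E}^n)_\omega + I(A_2'; B^n | \hat{E}^n A_1')_\sigma.
\end{equation}
The local recoverability assumption together with \eqref{eq:loc-rec-low-CQMI} bounds the first summand by $n\,\delta_2(\varepsilon)$, where $\delta_2(\varepsilon)/1 \to 0$ after dividing by $n$ and taking $\varepsilon \to 0$, while the second is controlled by dimension: $I(A_2'; B^n | \hat{E}^n A_1')_\sigma \leq 2 H(A_2')_\sigma \leq 2\log|A_2'| = \log M$, exactly matching the convention $M = |A_2'|^2$ from \eqref{eq:rate-Markov}. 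Chaining the two bounds gives $n\,I(A;B|E)_\rho \leq \log M + n(\delta_1(\varepsilon) + \delta_2(\varepsilon))$; dividing by $n$ and passing to the limit over achievable rates delivers the theorem. The main delicate point is merely tracking the precise scaling of the two continuity corrections so that they vanish after normalization, but this is standard once one applies Fannes--Audenaert at the scale set by the purified distance bound $\sqrt{\varepsilon}$.
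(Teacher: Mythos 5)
Your proposal is correct and takes essentially the same route as the paper's converse: unitary invariance of the entropic quantity under the protocol unitary, continuity applied via the negligible-disturbance condition, the dimension bound $I(A_{2}^{\prime};B^{n}|\hat{E}^{n}A_{1}^{\prime})\leq2\log_{2}|A_{2}^{\prime}|=\log_{2}M$, and the bound \eqref{eq:loc-rec-low-CQMI} for locally recoverable states, with the paper merely organizing these steps through conditional entropies instead of your mutual-information pivot $I(A_{1}^{\prime}A_{2}^{\prime};B^{n}|\hat{E}^{n})_{\sigma}$. The only (harmless) difference is that your continuity step for $I(\hat{E}^{n};B^{n})$ via Fannes--Audenaert on three entropies incurs a correction scaling with $n\log|E|$ in addition to $n\log|B|$, whereas the paper's continuity of the conditional entropy $H(B^{n}|\hat{E}^{n})$ depends only on $\log|B|$ --- both corrections vanish after normalization in the limit $\varepsilon\rightarrow0$, so the conclusion is unaffected.
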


\begin{proof}
To prove this theorem, we employ entropy inequalities and properties of CQMI.
Consider a general $(n,M,\varepsilon)$ Landauer--Bennett state deconstruction
protocol as outlined in Section~\ref{sec:landauer-erasure-model}. Then the
following chain of inequalities holds
\begin{align}
&  nI(A;B|E)_{\rho}\nonumber\\
&  =I(A^{n};B^{n}|E^{n})_{\rho^{\otimes n}}\nonumber\\
&  =H(B^{n}|E^{n})_{\rho^{\otimes n}}-H(B^{n}|A^{n}E^{n})_{\rho^{\otimes n}%
}\nonumber\\
&  =H(B^{n}|E^{n})_{\rho^{\otimes n}}-H(B^{n}|A^{n}A^{\prime}E^{n}%
)_{\rho^{\otimes n}\otimes\theta}\nonumber\\
&  \leq H(B^{n}|\hat{E}^{n})_{\omega}+f(n,\varepsilon)-H(B^{n}|A_{1}^{\prime
}A_{2}^{\prime}\hat{E}^{n})_{\mathcal{U}(\rho^{\otimes n}\otimes\theta
)}\nonumber\\
&  \leq H(B^{n}|\hat{E}^{n})_{\omega}+f(n,\varepsilon)-H(B^{n}|A_{1}^{\prime
}\hat{E}^{n})_{\omega}+2\log_{2}\left\vert A_{2}^{\prime}\right\vert
\nonumber\\
&  =2\log_{2}\left\vert A_{2}^{\prime}\right\vert +I(A_{1}^{\prime};B^{n}%
|\hat{E}^{n})_{\omega}+f(n,\varepsilon)\nonumber\\
&  \leq2\log_{2}\left\vert A_{2}^{\prime}\right\vert +g(n,\varepsilon
)+f(n,\varepsilon). \label{eq:CQMI-converse-1st-block}%
\end{align}
The first equality follows because the CQMI is additive with respect to
tensor-product states. The second equality follows from the definition of
CQMI. The third equality follows because the conditional entropy is invariant
with respect to tensoring in a product state to be part of the conditioning
system. The first inequality follows because the conditional entropy is
invariant with respect to a local unitary acting on the conditioning system:%
\begin{equation}
H(B^{n}|A^{n}A^{\prime}E^{n})_{\rho^{\otimes n}\otimes\theta}=H(B^{n}%
|A_{1}^{\prime}A_{2}^{\prime}\hat{E}^{n})_{\mathcal{U}(\rho^{\otimes n}%
\otimes\theta)}.
\end{equation}
Also, we have applied the negligible disturbance condition from
\eqref{eq:small-disturb-first}, the Fuchs-van-de-Graaf inequalities in
\eqref{eq:F-v-d-G-ineq}, and the continuity of conditional entropy
\cite{AF04,Winter15}, with%
\begin{equation}
f(n,\varepsilon)=2\sqrt{\varepsilon}n\log\left\vert B\right\vert
+(1+\sqrt{\varepsilon})h_{2}(\sqrt{\varepsilon}/[1+\sqrt{\varepsilon}]).
\label{eq:f-func-continuity-AFW}%
\end{equation}
The second inequality follows from a rewriting and applying a dimension bound
for CQMI\ (see, e.g., \cite[Exercise~11.7.9]{W16}):%
\begin{multline}
H(B^{n}|A_{1}^{\prime}\hat{E}^{n})_{\mathcal{U}(\rho^{\otimes n}\otimes
\theta)}-H(B^{n}|A_{1}^{\prime}A_{2}^{\prime}\hat{E}^{n})_{\mathcal{U}%
(\rho^{\otimes n}\otimes\theta)}\\
=I(B^{n};A_{2}^{\prime}|A_{1}^{\prime}\hat{E}^{n})_{\mathcal{U}(\rho^{\otimes
n}\otimes\theta)}\leq2\log_{2}\left\vert A_{2}^{\prime}\right\vert .
\end{multline}
The last equality follows from the definition of CQMI. The final inequality
follows by applying the local recoverability condition $F(A_{1}^{\prime}%
;B^{n}|\hat{E}^{n})_{\omega}\geq1-\varepsilon$ and because locally recoverable
states have small CQMI as reviewed in \eqref{eq:loc-rec-low-CQMI}. In
particular, we can take%
\begin{equation}
g(n,\varepsilon)\equiv2n\sqrt{\varepsilon}\log\left\vert B\right\vert +\left(
1+\sqrt{\varepsilon}\right)  h_{2}(\sqrt{\varepsilon}/\left[  1+\sqrt
{\varepsilon}\right]  ). \label{eq:loc-rec-CQMI-cont}%
\end{equation}
Thus, recalling our convention that $M=\left\vert A_{2}^{\prime}\right\vert
^{2}$, we conclude that the following bound holds for any $\left(
n,M,\varepsilon\right)  $ state deconstruction protocol:%
\begin{equation}
I(A;B|E)_{\rho}\leq\frac{1}{n}\log_{2}M+\frac{1}{n}\left[  g(n,\varepsilon
)+f(n,\varepsilon)\right]  .
\end{equation}
By taking the limit as $n\rightarrow\infty$, then $\varepsilon\rightarrow0$,
and applying definitions, we can conclude the inequality $I(A;B|E)_{\rho}%
\leq\mathcal{D}(A;B|E)_{\rho}$.
\end{proof}

\section{From state redistribution to state deconstruction}

\label{sec:simulation-via-redis}To show that the deconstruction cost is
achievable (i.e., that $\mathcal{D}(A;B|E)_{\rho}\leq I(A;B|E)_{\rho}$), we
employ the quantum state redistribution protocol, reviewed in
Section~\ref{sec:QSR-review}. We begin by proving that a state redistribution
protocol implies the existence of a state deconstruction protocol.

\begin{theorem}
\label{thm:redis-sim-decon}An $\left(  n,M,\varepsilon\right)  $ protocol for
state redistribution of a four-system pure state $\psi_{ABER}$, as specified
in Section~\ref{sec:QSR-review}, realizes an $\left(  n,M^{2},4\varepsilon
\right)  $ protocol for state deconstruction of $\rho_{ABE}=\operatorname{Tr}%
_{R}\{\psi_{ABER}\}$, as specified in Section~\ref{sec:deconst-general}.
\end{theorem}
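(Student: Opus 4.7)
The plan is to realize the deconstruction operation as a Landauer--Bennett protocol built directly from the state redistribution encoder, and then invoke the equivalence of the two deconstruction models from Section~\ref{sec:model-equiv} to translate back to the local unitary randomizing setting if desired. Throughout, I would carry along the purification $\psi_{ABER}^{\otimes n}$ and an external ancilla $R'$ so that the analysis runs parallel to that of Section~\ref{sec:QSR-review}.

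Concretely, take the catalyst in \eqref{eq:decons-op-landauer} to be $\theta_{A'}=\pi_{A'}$ (the reduced state on $A'$ of the entanglement resource $\Phi_{A'R'}$), and take the deconstruction unitary to be the state redistribution encoder $U^{\mathcal{E}}_{A^n E^n A'\to\bar A_0 A_0 \hat E^n}$, identifying $A_1'=A_0$ and $A_2'=\bar{A}_0$. The resulting noise cost is $|A_2'|^2=|\bar{A}_0|^2=M^2$, as required. Using $\rho_{ABE}^{\otimes n}=\mathrm{Tr}_{R^n}\{\psi_{ABER}^{\otimes n}\}$ and $\pi_{A'}=\mathrm{Tr}_{R'}\{\Phi_{A'R'}\}$, the deconstructed state
\begin{equation}
\omega_{A_0 B^n\hat E^n}=\mathrm{Tr}_{\bar A_0}\{U^{\mathcal{E}}(\rho_{ABE}^{\otimes n}\otimes\pi_{A'})(U^{\mathcal{E}})^\dag\}
\end{equation}
coincides with the $A_0 B^n\hat E^n$-marginal of the post-encoding pure state $\varphi$ of \eqref{eq:phi-state-redist}.

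The key observation is that the decoder $U^{\mathcal{D}}$ acts only on $\bar A_0 R' R^n$, so the marginal of $\xi$ in \eqref{eq:final-state-redis-state} on $A_0 B^n \hat E^n$ is again equal to $\omega_{A_0 B^n\hat E^n}$. Taking the partial trace over $\hat A^n\hat R^n R_0$ in \eqref{eq:good-QSR} and using monotonicity of fidelity, together with the fact that the marginal of $\psi_{\hat A B\hat E\hat R}^{\otimes n}\otimes\Phi_{A_0R_0}$ on $A_0 B^n\hat E^n$ is exactly $\pi_{A_0}\otimes\rho_{BE}^{\otimes n}$, yields
\begin{equation}
F(\omega_{A_0 B^n\hat E^n},\,\pi_{A_0}\otimes\rho_{BE}^{\otimes n})\geq 1-\varepsilon.
\end{equation}
A further partial trace over $A_0$ gives $F(\omega_{B^n\hat E^n},\rho_{BE}^{\otimes n})\geq 1-\varepsilon$, establishing negligible disturbance in the sense of \eqref{eq:small-disturb-first}.

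For local recoverability, I would use the constant-output recovery $\mathcal{R}_{\hat E^n\to A_0\hat E^n}:\sigma\mapsto\pi_{A_0}\otimes\sigma$, so that $\mathcal{R}(\omega_{B^n\hat E^n})=\pi_{A_0}\otimes\omega_{B^n\hat E^n}$. Both this state and $\omega_{A_0 B^n\hat E^n}$ lie within purified distance $\sqrt{\varepsilon}$ of the common reference $\pi_{A_0}\otimes\rho_{BE}^{\otimes n}$ (the second bound obtained by tensoring $\pi_{A_0}$ onto the negligible-disturbance fidelity), so the triangle inequality for the purified distance $P(\cdot,\cdot)=\sqrt{1-F(\cdot,\cdot)}$ of \cite{GLN05,TCR09} gives $F(A_0;B^n|\hat E^n)_{\omega}\geq 1-4\varepsilon$. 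This produces the desired $(n,M^2,4\varepsilon)$ deconstruction protocol. The only real subtlety is bookkeeping of the purifying references $R^n R'$ under the partial traces; no tools beyond monotonicity of fidelity, the state redistribution guarantee \eqref{eq:good-QSR}, and the triangle inequality for purified distance are needed.
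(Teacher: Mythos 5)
Your proposal is correct and follows essentially the same route as the paper's proof: the encoder $U^{\mathcal{E}}$ with catalyst $\pi_{A'}$ serves as the Landauer--Bennett deconstruction unitary with $\bar{A}_0$ traced out (cost $M^2$), monotonicity of fidelity applied to \eqref{eq:good-QSR} gives negligible disturbance, and local recoverability follows by tensoring in $\pi_{A_0}$ and applying the triangle inequality for $\sqrt{1-F}$ around the common reference $\pi_{A_0}\otimes\rho_{BE}^{\otimes n}$, yielding $1-4\varepsilon$. This matches the paper's argument in all essentials.
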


\begin{proof}
Let $\psi_{ABER}$ be a purification of $\rho_{ABE}$. Given is an $\left(
n,M,\varepsilon\right)  $ state redistribution protocol, which by
Remark~\ref{rem:unitary-encoding}\ means that there is a unitary encoder
$U_{A^{n}E^{n}A^{\prime}\rightarrow\bar{A}_{0}A_{0}\hat{E}^{n}}^{\mathcal{E}}%
$\ and a unitary decoder $U_{\bar{A}_{0}R^{\prime}R^{n}\rightarrow\hat{A}%
^{n}\hat{R}^{n}R_{0}}^{\mathcal{D}}$ satisfying \eqref{eq:good-QSR}. We will
show the existence of an $\left(  n,M^{2},4\varepsilon\right)  $ protocol for
state deconstruction of $\rho_{ABE}$ in the Landauer--Bennett erasure model.
By the monotonicity of fidelity with respect to partial trace over the systems
$\hat{A}^{n}\hat{R}^{n}R_{0}$ \cite[Lemma~9.2.1]{W16}, Eq.~\eqref{eq:good-QSR}
implies that%
\begin{equation}
F(\xi_{A_{0}B^{n}\hat{E}^{n}},\pi_{A_{0}}\otimes\rho_{B\hat{E}}^{\otimes
n})\geq1-\varepsilon. \label{eq:small-disturb}%
\end{equation}

In our protocol for state deconstruction, we take the deconstruction operation
to be

\begin{enumerate}
\item tensoring in the maximally mixed state $\pi_{A^{\prime}}$,

\item application of the unitary $U_{A^{n}E^{n}A^{\prime}\rightarrow\bar
{A}_{0}A_{0}\hat{E}^{n}}^{\mathcal{E}}$,

\item a partial trace over the $\bar{A}_{0}$ system.
\end{enumerate}

\noindent Let%
\begin{align}
\omega_{A_{0}B^{n}\hat{E}^{n}}  &  \equiv\operatorname{Tr}_{\bar{A}_{0}%
}\{U^{\mathcal{E}}(\rho_{AB\hat{E}}^{\otimes n}\otimes\pi_{A^{\prime}%
})U^{\mathcal{E}\dag}\}\\
&  =\xi_{A_{0}B^{n}\hat{E}^{n}},
\end{align}
where $U^{\mathcal{E}}\equiv U_{A^{n}E^{n}A^{\prime}\rightarrow\bar{A}%
_{0}A_{0}\hat{E}^{n}}^{\mathcal{E}}$.

Now we show that the protocol satisfies the requirements of negligible
disturbance and local recoverability, as outlined in
Section~\ref{sec:landauer-erasure-model}. The condition of negligible
disturbance follows directly from \eqref{eq:small-disturb}, after a partial
trace over system $A_{0}$, because%
\begin{equation}
\xi_{B^{n}\hat{E}^{n}}=\operatorname{Tr}_{A_{0}}\{\omega_{A_{0}B^{n}\hat
{E}^{n}}\}.
\end{equation}

The condition of local recoverability follows rather directly as well from
\eqref{eq:small-disturb}. If the system $A_{0}$ is lost, then the remaining
state is $\xi_{B^{n}\hat{E}^{n}}$. We can then take the recovery channel to
merely tensor in a maximally mixed state $\pi_{A_{0}}$, and
\eqref{eq:small-disturb} guarantees that the resulting state is close to the
original one. Indeed, by employing the fact that $\sqrt{1-F(\rho,\sigma)}$ is
a distance measure \cite{GLN05} and thus obeys the triangle inequality, we
find that%
\begin{multline}
\sqrt{1-F(\xi_{A_{0}B^{n}\hat{E}^{n}},\pi_{A_{0}}\otimes\xi_{B^{n}\hat{E}^{n}%
})}\label{eq:triangle-ineq-1-F}\\
\leq\sqrt{1-F(\xi_{A_{0}B^{n}\hat{E}^{n}},\pi_{A_{0}}\otimes\rho_{B\hat{E}%
}^{\otimes n})}\\
+\sqrt{1-F(\pi_{A_{0}}\otimes\rho_{B\hat{E}}^{\otimes n},\pi_{A_{0}}\otimes
\xi_{B^{n}\hat{E}^{n}})}\leq2\sqrt{\varepsilon},
\end{multline}
where the second inequality follows from \eqref{eq:small-disturb} and the fact
that%
\begin{equation}
F(\pi_{A_{0}}\otimes\rho_{B\hat{E}}^{\otimes n},\pi_{A_{0}}\otimes\xi
_{B^{n}\hat{E}^{n}})=F(\rho_{B\hat{E}}^{\otimes n},\xi_{B^{n}\hat{E}^{n}}%
)\geq1-\varepsilon.
\end{equation}
Then we find that%
\begin{equation}
F(\xi_{A_{0}B^{n}\hat{E}^{n}},\pi_{A_{0}}\otimes\xi_{B^{n}\hat{E}^{n}}%
)\geq1-4\varepsilon,\label{eq:cond-decoupled-last}%
\end{equation}
concluding the proof.
\end{proof}

The following is then a direct corollary of Theorem~\ref{thm:redis-sim-decon},
the definitions of state redistribution and state deconstruction in
Sections~\ref{sec:QSR-review}\ and \ref{sec:deconst-general}, respectively,
and Theorem~\ref{thm:redist-CMI}:

\begin{corollary}
\label{cor:CQMI-upper-decons}The deconstruction cost $\mathcal{D}%
(A;B|E)_{\rho}$ of a tripartite state $\rho_{ABE}$\ is bounded from above by
its CQMI $I(A;B|E)_{\rho}$:%
\begin{equation}
\mathcal{D}(A;B|E)_{\rho}\leq I(A;B|E)_{\rho}.
\end{equation}

\end{corollary}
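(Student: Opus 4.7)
The plan is to combine Theorem~\ref{thm:redis-sim-decon} with Theorem~\ref{thm:redist-CMI} and then take the appropriate $n\to\infty$, $\varepsilon\to 0$ limits. First, I would introduce a purification $\psi_{ABER}$ of $\rho_{ABE}$ (always possible by enlarging the Hilbert space, and harmless since deconstruction only refers to the marginal $\rho_{ABE}$). By Theorem~\ref{thm:redist-CMI} applied to $\psi_{ABER}$, for every $\delta>0$ and every $\varepsilon'\in(0,1)$ there exists, for all sufficiently large $n$, an $(n,M_n,\varepsilon')$ state redistribution protocol for $\psi_{ABER}$ with $\tfrac{1}{n}\log_2 M_n \le \tfrac{1}{2} I(A;B|R)_{\psi}+\delta/2$. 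Theorem~\ref{thm:redis-sim-decon} then converts this into an $(n,M_n^{2},4\varepsilon')$ state deconstruction protocol for $\rho_{ABE}$, whose rate satisfies $\tfrac{1}{n}\log_2 M_n^2 \le I(A;B|R)_{\psi}+\delta$.

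Second, I would invoke a pure-state duality for the CQMI to rewrite $I(A;B|R)_{\psi}$ in terms of $E$. For any pure state $\psi_{ABER}$, the identities $H(X_S)_\psi=H(X_{S^c})_\psi$ on any bipartition yield
\begin{align}
I(A;B|R)_{\psi} &= H(AR)_\psi+H(BR)_\psi-H(ABR)_\psi-H(R)_\psi \notag \\
&= H(BE)_\psi+H(AE)_\psi-H(E)_\psi-H(ABE)_\psi \notag \\
&= I(A;B|E)_{\psi},
\end{align}
and since $\psi_{ABE}=\rho_{ABE}$ we obtain $I(A;B|R)_\psi=I(A;B|E)_\rho$. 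Hence the induced deconstruction protocol operates at rate at most $I(A;B|E)_\rho+\delta$.

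Finally, given any $\varepsilon\in(0,1)$, I set $\varepsilon'=\varepsilon/4$; then for every $\delta>0$ and all sufficiently large $n$ there is an $(n,2^{n[I(A;B|E)_\rho+\delta]},\varepsilon)$ deconstruction protocol for $\rho_{ABE}$. By Definitions~\ref{def:achievable-rate} and \ref{def:dec-cost}, the rate $I(A;B|E)_\rho$ is achievable, so $\mathcal{D}(A;B|E)_\rho\le I(A;B|E)_\rho$.

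I do not anticipate a genuine obstacle: the factor-of-two inflation of the communicated dimension in Theorem~\ref{thm:redis-sim-decon} (from $M$ to $M^{2}$) cancels exactly against the $\tfrac{1}{2}$ appearing in Theorem~\ref{thm:redist-CMI}, and the degradation of the error parameter from $\varepsilon'$ to $4\varepsilon'$ is washed out in the limit. The only conceptual step is the pure-state CQMI duality $I(A;B|R)_\psi=I(A;B|E)_\psi$, but this is a standard consequence of the entropy-on-complement identity for pure states.
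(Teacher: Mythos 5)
Your proof is correct and follows essentially the same route as the paper: the paper obtains this corollary directly by combining Theorem~\ref{thm:redis-sim-decon} with Theorem~\ref{thm:redist-CMI}, with the factor of $2$ from $M\mapsto M^{2}$ cancelling the $\tfrac{1}{2}$ in the redistribution cost. The only step the paper leaves implicit is the pure-state identity $I(A;B|R)_{\psi}=I(A;B|E)_{\psi}$, which you correctly make explicit, and your $\varepsilon'=\varepsilon/4$, $\delta$ bookkeeping matches the definitions of achievable rate and deconstruction cost.
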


As a consequence of Theorem~\ref{thm:CQMI-lower-bnd}\ and
Corollary~\ref{cor:CQMI-upper-decons}, we can conclude one of our main
results, as stated at the beginning of Section~\ref{sec:main-result}.

\begin{theorem}
\label{thm:main}The deconstruction cost $\mathcal{D}(A;B|E)_{\rho}$ of a
tripartite state $\rho_{ABE}$ is equal to its CQMI $I(A;B|E)_{\rho}$:%
\begin{equation}
\mathcal{D}(A;B|E)_{\rho}=I(A;B|E)_{\rho}.
\end{equation}

\end{theorem}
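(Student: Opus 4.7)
The plan is to combine the two bounds already established as Theorem~\ref{thm:CQMI-lower-bnd} and Corollary~\ref{cor:CQMI-upper-decons}, which together sandwich the deconstruction cost between $I(A;B|E)_\rho$ from below and above. No further quantitative work is required; what remains is simply to articulate that the combination is valid in both the Landauer--Bennett erasure model and the local unitary randomizing model, which is assured by the simulation argument of Section~\ref{sec:model-equiv} showing that the two models define the same operational quantity. The substantive difficulty of the main theorem therefore lies entirely in the two component results.

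For the converse direction, Theorem~\ref{thm:CQMI-lower-bnd} starts from $nI(A;B|E)_\rho = I(A^n;B^n|E^n)_{\rho^{\otimes n}}$, rewrites it via conditional entropies, uses unitary invariance of conditional entropy on the conditioning system to pass to the post-unitary state, applies the Alicki--Fannes--Winter continuity bound \eqref{eq:f-func-continuity-AFW} together with the negligible-disturbance hypothesis, invokes the $2\log_2|A_2'|$ dimension bound on the CQMI to extract the rate contribution $\log_2 M$, and finally uses the Fawzi--Renner-type inequality \eqref{eq:loc-rec-low-CQMI} and local recoverability to drive the residual CQMI term to zero in the $n\to\infty$, $\varepsilon\to 0$ limit.

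For the achievability direction, Corollary~\ref{cor:CQMI-upper-decons} purifies $\rho_{ABE}$ to $\psi_{ABER}$ and applies a quantum state redistribution protocol at communication rate $\tfrac{1}{2}I(A;B|R)_\psi = \tfrac{1}{2}I(A;B|E)_\rho$, as guaranteed by Theorem~\ref{thm:redist-CMI}, then traces out the communication register $\bar{A}_0$. The unitary QSR encoder supplies the Landauer--Bennett deconstruction unitary; the convention $M = |\bar{A}_0|^2$ from \eqref{eq:rate-Markov} exactly doubles the QSR rate to produce $I(A;B|E)_\rho$; and both the negligible-disturbance and local-recoverability conditions (the latter with recovery channel that simply tensors in $\pi_{A_0}$) follow from the QSR fidelity guarantee \eqref{eq:good-QSR} via the triangle inequality for $\sqrt{1-F}$ from \cite{GLN05}. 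The most delicate moments in the component arguments were (i) the sharp continuity and dimension-bound accounting in the converse, and (ii) the factor-of-two matching between QSR quantum communication cost and deconstruction noise cost that makes the two quantities coincide exactly rather than differ by a factor of two.
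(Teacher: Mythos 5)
Your proposal is correct and matches the paper's own route exactly: Theorem~\ref{thm:main} is obtained by combining the converse bound of Theorem~\ref{thm:CQMI-lower-bnd} with the achievability bound of Corollary~\ref{cor:CQMI-upper-decons} (itself derived from Theorems~\ref{thm:redis-sim-decon} and~\ref{thm:redist-CMI}), with the model equivalence of Section~\ref{sec:model-equiv} ensuring a single well-defined deconstruction cost. Your summaries of the two component arguments, including the factor-of-two bookkeeping via $M=|\bar{A}_0|^2$ and the duality $I(A;B|R)_{\psi}=I(A;B|E)_{\rho}$ for the purification, are accurate.
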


\subsection{Special case of classical side information}

The state deconstruction protocol can be simplified in the case that the
system $E$ is classical. If this is the case, then the tripartite state
$\rho_{ABE}$\ has the form $\rho_{ABE}=\sum_{e}p_{E}(e)\rho_{AB}^{e}%
\otimes|e\rangle\langle e|_{E}$, where $p_{E}(e)$ is a probability
distribution, $\{\rho_{AB}^{e}\}$ is a set of states, $\{|e\rangle_{E}\}$ is
an orthonormal basis, and the symbol~$e$ is chosen from an
alphabet~$\mathcal{E}$. In this case, we have
\begin{align}
\rho_{ABE}^{\otimes n} &  =\sum_{e^{n}}p_{E^{n}}(e^{n})\rho_{A^{n}B^{n}%
}^{e^{n}}\otimes|e^{n}\rangle\langle e^{n}|_{E^{n}},\\
p_{E^{n}}(e^{n}) &  \equiv\prod\limits_{j=1}^{n}p_{E}(e_{j}),\\
\rho_{A^{n}B^{n}}^{e^{n}} &  =\rho_{A_{1}B_{1}}^{e_{1}}\otimes\cdots
\otimes\rho_{A_{n}B_{n}}^{e_{n}},\\
|e^{n}\rangle_{E^{n}} &  =|e_{1}\rangle_{E_{1}}\otimes\cdots\otimes
|e_{n}\rangle_{E_{n}}.
\end{align}
The protocol proceeds by performing a typical subspace measurement of the
systems $E^{n}$ \cite{W16}, keeping only the classical sequences which are
typical (i.e., those with empirical distribution close to the distribution
$p_{E}$). All such sequences can be partitioned into $\left\vert
\mathcal{E}\right\vert $\ blocks, each consisting of the same symbol
$e\in\mathcal{E}$\ and with length $\approx np_{E}(e)$. For each block, we
then employ the erasure of correlations protocol from \cite{GPW05}, which
implies that $\approx np_{E}(e)I(A;B)_{\rho^{e}}$ bits of noise are used to
erase the correlations in a given block. Thus the total rate of noise needed
in this case is equal to $\sum_{e}p_{E}(e)I(A;B)_{\rho^{e}}=I(A;B|E)_{\rho}$.
The above protocol falls into the class of deconstruction operations because
it causes zero disturbance to the marginal state on systems $B^{n}E^{n}$.
Furthermore, the state afterward is locally recoverable. The result of the
erasure of correlations protocol is to produce a state close to one of the
form $\sum_{e^{n}}p_{E^{n}}(e^{n})\omega_{A^{n}}^{e^{n}}\otimes\omega_{B^{n}%
}^{e^{n}}\otimes|e^{n}\rangle\langle e^{n}|_{E^{n}}$, for which the recovery
procedure is clear:\ if system $A^{n}$ gets lost, look in system $E^{n}$ for
the classical sequence $e^{n}$ and then prepare the state $\omega_{A^{n}%
}^{e^{n}}$ in the $A$ systems.

One further observation is that the protocol given above does not require
access to a catalyst in this special case. It is largely open to determine
whether a catalyst is actually needed in the fully quantum case (i.e., when
the $E$ system does not admit a classical description).


\section{Conditional erasure}

\label{sec:cond-erasure}We now turn to conditional erasure and begin by
providing an operational definition of a conditional erasure protocol, doing
so in the Landauer--Bennett erasure model from
Section~\ref{sec:landauer-erasure-model}. There are some similarities between
state deconstruction and conditional erasure, but in our development for
conditional erasure, we also quantify the rate of noise being consumed or
generated by a given protocol. To this end, we distinguish and quantify two
types of noise, which we call active noise and passive noise.

Active noise is synonymous with a partial trace in the Landauer--Bennett
erasure model from Section~\ref{sec:landauer-erasure-model}. The amount of
active noise being applied in the operation in \eqref{eq:decons-op-landauer}
is equal to $M=\left\vert A_{2}^{\prime}\right\vert ^{2}$ and the rate of
active noise is equal to $\left[  \log_{2}M\right]  /n$. We use the term
active noise to describe this kind of noise because one needs to apply a
physical procedure, consisting of local randomizing unitaries, in order to
implement an active noise operation and realize a partial trace.

Passive noise is synonymous with a catalyst that is brought in to help
accomplish an erasure task. Here, we consider passive noise as a resource and
quantify it as follows: the amount of passive noise is equal to the dimension
$d$\ of the catalyst and the rate of passive noise is equal to $\left[
\log_{2}d\right]  /n$. We use the term passive noise to describe this kind of
noise because one only needs to bring in a maximally mixed state as a
resource: there is no need to apply local randomizing unitaries to create
passive noise. It is also clear that active noise can create passive noise but
not vice versa.

With these notions in mind, we can now define a conditional erasure protocol.
Let $n\in\mathbb{N}$, $M,L\in\mathbb{N}$, and $\varepsilon\in\left[
0,1\right]  $. An $(n,M,L,\varepsilon)$ conditional erasure protocol consists
of a unitary quantum channel $\mathcal{U}_{A^{n}E^{n}A^{\prime}\rightarrow
A_{1}^{\prime}A_{2}^{\prime}\hat{E}^{n}}$ and an auxiliary catalyst state
$\pi_{A^{\prime}}$, which is maximally mixed. The state at the end of the
protocol is $\omega_{A_{1}^{\prime}B^{n}\hat{E}^{n}}$, as given in
\eqref{eq:omega-state-LB}. The parameter $M$ is equal to $\left\vert
A_{2}^{\prime}\right\vert ^{2}$ as before. We require that a conditional
erasure protocol satisfies the property of negligible disturbance, as
specified in \eqref{eq:small-disturb-first}. We also require that the
resulting state $\omega_{A_{1}^{\prime}B^{n}\hat{E}^{n}}$ is such that the
$A_{1}^{\prime}$ system is decoupled from the $BE$ systems, in the sense that%
\begin{equation}
F(\omega_{A_{1}^{\prime}B^{n}\hat{E}^{n}},\pi_{A_{1}^{\prime}}\otimes
\omega_{B^{n}\hat{E}^{n}})\geq1-\varepsilon,\label{eq:cond-decoupled}%
\end{equation}
where $\pi_{A_{1}^{\prime}}$ is a maximally mixed state. We take the parameter%
\begin{equation}
L=\left(  \left\vert A^{\prime}\right\vert /\left\vert A_{1}^{\prime
}\right\vert \right)  ^{2},\label{eq:L-param-cond-erasure}%
\end{equation}
or equivalently, $\log_{2}L=2\left[  \log_{2}\left\vert A^{\prime}\right\vert
-\log_{2}\left\vert A_{1}^{\prime}\right\vert \right]  $. The parameter
$L$\ thus quantifies the gain or consumption of passive noise in a conditional
erasure protocol. If passive noise is gained in a conditional erasure
protocol, then it can be used as a resource for a future erasure task.

We can see by inspecting \eqref{eq:cond-decoupled}\ that conditional erasure
achieves the task of state deconstruction, with the local recovery channel
taken to be a preparation of the state $\pi_{A_{1}^{\prime}}$ after the system
$A_{1}^{\prime}$ of $\omega_{A_{1}^{\prime}B^{n}\hat{E}^{n}}$ is lost.

\subsection{Conditional erasure is equivalent to state redistribution}

In this section, we show that the task of conditional erasure is equivalent to
state redistribution, in the sense that the existence of a conditional erasure
protocol implies the existence of a state redistribution protocol and vice
versa. We begin with the following implication:

\begin{theorem}
\label{thm:redis-sim-cond-erasure}An $\left(  n,M,L,\varepsilon\right)  $
protocol for state redistribution of a four-system pure state $\psi_{ABER}$,
as specified in Section~\ref{sec:QSR-review}, realizes an $\left(
n,M^{2},L^{2},4\varepsilon\right)  $ conditional erasure protocol of
$\rho_{ABE}=\operatorname{Tr}_{R}\{\psi_{ABER}\}$, as specified in
Section~\ref{sec:cond-erasure}.
\end{theorem}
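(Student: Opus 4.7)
The plan is to mimic the argument of Theorem~\ref{thm:redis-sim-decon} while carefully tracking the passive-noise resource that is swept under the rug there. The conditional erasure protocol I would build is the same one: tensor in a maximally mixed catalyst $\pi_{A'}$, apply the state-redistribution unitary encoder $U^{\mathcal{E}}_{A^n E^n A' \to \bar{A}_0 A_0 \hat{E}^n}$, and trace out $\bar{A}_0$. The system kept at the end plays the role of $A_1'$ in the conditional erasure framework and is exactly $A_0$. This immediately gives the parameters: the traced-out system $\bar{A}_0$ has dimension $|\bar{A}_0|=M$, so the active-noise parameter is $|\bar{A}_0|^2=M^2$; the catalyst has dimension $|A'|$ and the remaining ``$A_1'$'' system has dimension $|A_0|$, so by \eqref{eq:L-param-cond-erasure} the passive-noise parameter is $(|A'|/|A_0|)^2 = L^2$, matching the state-redistribution convention $L=|A'|/|A_0|$.

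Next I would verify the two fidelity conditions. The key observation, as in Theorem~\ref{thm:redis-sim-decon}, is that feeding $\rho_{ABE}^{\otimes n} \otimes \pi_{A'}$ into the encoder and tracing out $\bar{A}_0$ produces exactly the same marginal $\omega_{A_0 B^n \hat E^n}$ on $A_0 B^n \hat E^n$ as one obtains from the full state redistribution execution on $\psi^{\otimes n}_{ABER}\otimes \Phi_{A'R'}$ followed by tracing out $\hat A^n \hat R^n R_0$; this is because the decoder $U^{\mathcal{D}}_{\bar A_0 R' R^n \to \hat A^n \hat R^n R_0}$ is unitary and acts on systems disjoint from $A_0 B^n \hat E^n$. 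Applying monotonicity of fidelity under partial trace to \eqref{eq:good-QSR} then yields $F(\omega_{A_0 B^n \hat E^n}, \pi_{A_0}\otimes \rho^{\otimes n}_{B\hat E}) \geq 1-\varepsilon$, exactly the starting inequality \eqref{eq:small-disturb} of the earlier proof. Tracing out $A_0$ gives the negligible-disturbance condition $F(\omega_{B^n \hat E^n},\rho_{BE}^{\otimes n}) \geq 1-\varepsilon$.

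For the decoupling condition I would reuse verbatim the triangle-inequality argument of \eqref{eq:triangle-ineq-1-F}: since $\sqrt{1-F(\cdot,\cdot)}$ is a metric, the two estimates $F(\omega_{A_0 B^n \hat E^n}, \pi_{A_0}\otimes \rho^{\otimes n}_{B\hat E})\geq 1-\varepsilon$ and $F(\pi_{A_0}\otimes\rho^{\otimes n}_{B\hat E}, \pi_{A_0}\otimes \omega_{B^n \hat E^n})\geq 1-\varepsilon$ combine to give $F(\omega_{A_0 B^n \hat E^n}, \pi_{A_0}\otimes \omega_{B^n \hat E^n})\geq 1-4\varepsilon$, which is \eqref{eq:cond-decoupled} with the role of $A_1'$ played by $A_0$ and the error parameter $4\varepsilon$. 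In this argument the only genuinely new ingredient beyond Theorem~\ref{thm:redis-sim-decon} is the bookkeeping that converts $|A'|/|A_0|$ into the squared quantity required by the conditional-erasure definition, which I expect to be the most error-prone (though conceptually trivial) part of the write-up.
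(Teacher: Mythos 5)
Your proposal is correct and follows essentially the same route as the paper, which simply invokes the proof of Theorem~\ref{thm:redis-sim-decon} (yielding \eqref{eq:cond-decoupled-last}, i.e., the decoupling condition \eqref{eq:cond-decoupled} with error $4\varepsilon$) and then notes that the state-redistribution entanglement parameter $\left\vert A^{\prime}\right\vert /\left\vert A_{0}\right\vert = L$ becomes $L^{2}$ under the convention \eqref{eq:L-param-cond-erasure}. Your explicit bookkeeping of the active-noise parameter $M^{2}=\vert\bar{A}_{0}\vert^{2}$ and of the identification $A_{1}^{\prime}=A_{0}$, $A_{2}^{\prime}=\bar{A}_{0}$ matches the paper's intended argument.
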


\begin{proof}
A proof of this theorem directly follows along the lines given in the proof of
Theorem~\ref{thm:redis-sim-decon}. Following the proof there, we arrive at
\eqref{eq:cond-decoupled-last}, which is equivalent to the desired condition
in \eqref{eq:cond-decoupled}. The parameter $L$ for the state redistribution
protocol is equal to $\left\vert A^{\prime}\right\vert /\left\vert
A_{0}\right\vert $, which becomes $L^{2}$ in the conditional erasure protocol
per our convention in \eqref{eq:L-param-cond-erasure}.
\end{proof}

\bigskip

\noindent We now state the other implication:

\begin{theorem}
An $\left(  n,M,L,\varepsilon\right)  $ protocol for conditional erasure of a
four-system pure state $\psi_{ABER}$, as specified in
Section~\ref{sec:cond-erasure}, realizes an $(n,\left\lceil \sqrt
{M}\right\rceil ,\left\lceil \sqrt{L}\right\rceil ,4\varepsilon)$ state
redistribution protocol of $\rho_{ABE}=\operatorname{Tr}_{R}\{\psi_{ABER}\}$,
as specified in Section~\ref{sec:QSR-review}.
\end{theorem}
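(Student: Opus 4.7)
The plan is to mirror the argument of Theorem~\ref{thm:redis-sim-decon}: use the conditional erasure unitary as the sender's encoder in a quantum state redistribution protocol, and then invoke Uhlmann's theorem (together with the purification principle) to construct the receiver's decoder. First I would purify the catalyst $\pi_{A'}$ by a maximally entangled state $\Phi_{A'R''}$, identifying $R''$ with the receiver's half of the shared entanglement in the state redistribution protocol (so $A'_{\text{QSR}}=A'$ and $R'_{\text{QSR}}=R''$). Taking the conditional erasure unitary $\mathcal{U}_{A^{n}E^{n}A'\to A_{1}'A_{2}'\hat{E}^{n}}$ as the encoder, I identify $\bar A_{0}=A_{2}'$ (the transmitted register) and $A_{0}=A_{1}'$ (the sender's residual entanglement register), which gives a communication cost $|\bar A_{0}|=\sqrt{M}\le\lceil\sqrt{M}\rceil$ and entanglement parameter $|A'_{\text{QSR}}|/|A_{0}|=|A'|/|A_{1}'|=\sqrt{L}\le\lceil\sqrt{L}\rceil$, with minor zero-padding of the Hilbert spaces if $\sqrt{M}$ or $\sqrt{L}$ is not an integer.

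Next, I would analyze the global pure state produced by the sender after applying $\mathcal{U}$ to $\psi_{ABER}^{\otimes n}\otimes\Phi_{A'R''}$; call it $|\sigma\rangle_{A_{0}A_{2}'\hat{E}^{n}B^{n}R^{n}R''}$. Since the partial trace over $R''$ of this state equals $\mathcal{U}(\psi_{ABER}^{\otimes n}\otimes\pi_{A'})\mathcal{U}^{\dag}$, the marginal $\sigma_{A_{0}B^{n}\hat{E}^{n}}$ agrees with the state $\omega_{A_{1}'B^{n}\hat{E}^{n}}$ produced by the conditional erasure protocol run on $\rho_{ABE}^{\otimes n}\otimes\pi_{A'}$. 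Combining the decoupling condition \eqref{eq:cond-decoupled} and the negligible disturbance condition \eqref{eq:small-disturb-first} via the triangle inequality for $\sqrt{1-F(\cdot,\cdot)}$ (in the same manner as in \eqref{eq:triangle-ineq-1-F}), I would obtain
\begin{equation}
F(\sigma_{A_{0}B^{n}\hat{E}^{n}},\pi_{A_{0}}\otimes\rho_{BE}^{\otimes n})\ge 1-4\varepsilon.
\end{equation}

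Now I would apply Uhlmann's theorem: $|\sigma\rangle$ is a purification of $\sigma_{A_{0}B^{n}\hat{E}^{n}}$, while the ideal state $|\tau\rangle \equiv |\Phi\rangle_{A_{0}R_{0}}\otimes|\psi\rangle_{\hat{A}B\hat{E}\hat{R}}^{\otimes n}$ (with $|R_{0}|=|A_{0}|$) is a purification of $\pi_{A_{0}}\otimes\rho_{BE}^{\otimes n}$, both on the same $A_{0}B^{n}\hat{E}^{n}$ systems. Uhlmann's theorem then supplies a partial isometry between the purifying systems $A_{2}'R^{n}R''$ and $\hat{A}^{n}\hat{R}^{n}R_{0}$ whose action transports $|\sigma\rangle$ to within fidelity $1-4\varepsilon$ of $|\tau\rangle$. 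I would then extend this partial isometry to a bona fide CPTP decoder $\mathcal{D}_{A_{2}'R^{n}R''\to\hat{A}^{n}\hat{R}^{n}R_{0}}$ by adjoining ancillas or a partial trace as needed to reconcile input and output dimensions (they agree only when $L=1$), preserving the fidelity bound by the monotonicity of fidelity under CPTP maps. The composition of the sender's encoder $\mathcal{U}$, the transmission of $\bar A_{0}=A_{2}'$, and the decoder $\mathcal{D}$ then furnishes an $(n,\lceil\sqrt{M}\rceil,\lceil\sqrt{L}\rceil,4\varepsilon)$ state redistribution protocol for $\psi_{ABER}$.

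The main obstacle I anticipate is the dimension bookkeeping required to construct the decoder: when $L>1$ the receiver's input registers are strictly larger than the target output registers, so the Uhlmann isometry naturally points the wrong way and one must invert it as a CPTP map with care. Once this is handled, the rest of the argument is a clean repackaging of the conditional erasure unitary as a state redistribution protocol, and the doubling of ranks in the passage from $|A_{2}'|^{2}$ and $(|A'|/|A_{1}'|)^{2}$ to $|\bar A_{0}|$ and $|A'_{\text{QSR}}|/|A_{0}|$ accounts for the square roots (and ceilings) in the parameter statement.
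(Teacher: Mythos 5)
Your proposal is correct and follows essentially the same route as the paper's proof: take the conditional erasure unitary with the purified catalyst $\Phi_{A'R'}$ as the encoder, combine the decoupling condition \eqref{eq:cond-decoupled} and negligible disturbance \eqref{eq:small-disturb-first} via the triangle inequality for $\sqrt{1-F}$ to obtain the $1-4\varepsilon$ fidelity bound, and invoke Uhlmann's theorem to construct the decoder, with the square roots in the parameters coming from the conventions $M=|A_{2}'|^{2}$ and $L=(|A'|/|A_{1}'|)^{2}$. The only difference is that you explicitly complete the Uhlmann map to a CPTP decoder when the input and output dimensions disagree (i.e., when $L\neq1$), a technical point the paper passes over by simply asserting an isometric channel; this is a harmless refinement rather than a different argument.
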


\begin{proof}
This follows simply by applying Uhlmann's theorem for fidelity \cite{U73} to a
conditional erasure protocol in order to realize a decoder for state
redistribution. To this end, suppose we are given a unitary quantum channel
$\mathcal{U}_{A^{n}E^{n}A^{\prime}\rightarrow A_{1}^{\prime}A_{2}^{\prime}%
\hat{E}^{n}}$ and an auxiliary catalyst state $\pi_{A^{\prime}}$, as part of a
conditional erasure protocol. Suppose further that they satisfy the negligible
disturbance condition in \eqref{eq:small-disturb-first} and the decoupled
condition in \eqref{eq:cond-decoupled}. Combining these via the triangle
inequality for $\sqrt{1-F(\rho,\sigma)}$ (similar to how we did previously in
\eqref{eq:triangle-ineq-1-F}), we find that the following condition holds%
\begin{equation}
F(\omega_{A_{1}^{\prime}B^{n}\hat{E}^{n}},\pi_{A_{1}^{\prime}}\otimes\psi
_{BE}^{\otimes n})\geq1-4\varepsilon.\label{eq:decoupled-and-neg-disturbed}%
\end{equation}
A purification of the state $\omega_{A_{1}^{\prime}B^{n}\hat{E}^{n}}$ is the
following state:%
\begin{equation}
\varsigma_{A_{1}^{\prime}A_{2}^{\prime}B^{n}\hat{E}^{n}R^{n}R^{\prime}}%
\equiv\mathcal{U}_{A^{n}E^{n}A^{\prime}\rightarrow A_{1}^{\prime}A_{2}%
^{\prime}\hat{E}^{n}}(\psi_{ABER}^{\otimes n}\otimes\Phi_{A^{\prime}R^{\prime
}}).
\end{equation}
That is, we obtain the state $\omega_{A_{1}^{\prime}B^{n}\hat{E}^{n}}$ by
tracing over the $A_{2}^{\prime}R^{n}R^{\prime}$ systems of the above state. A
purification of the state $\pi_{A_{1}^{\prime}}\otimes\psi_{BE}^{\otimes n}$
is the following state:%
\begin{equation}
\Phi_{A_{1}^{\prime}R_{1}^{\prime}}\otimes\psi_{ABER}^{\otimes n}.
\end{equation}
Thus, Uhlmann's theorem for fidelity applied to
\eqref{eq:decoupled-and-neg-disturbed}\ implies the existence of an isometric
channel $\mathcal{V}_{A_{2}^{\prime}R^{n}R^{\prime}\rightarrow R_{1}^{\prime
}A^{n}R^{n}}$ such that%
\begin{equation}
F(\mathcal{V}(\varsigma),\Phi_{A_{1}^{\prime}R_{1}^{\prime}}\otimes\psi
_{ABER}^{\otimes n})\geq1-4\varepsilon,
\end{equation}
where we have used the shorthand $\mathcal{V}(\varsigma)\equiv\mathcal{V}%
_{A_{2}^{\prime}R^{n}R^{\prime}\rightarrow R_{1}^{\prime}A^{n}R^{n}}%
(\varsigma_{A_{1}^{\prime}A_{2}^{\prime}B^{n}\hat{E}^{n}R^{n}R^{\prime}})$.
Thus, the channel $\mathcal{V}_{A_{2}^{\prime}R^{n}R^{\prime}\rightarrow
R_{1}^{\prime}A^{n}R^{n}}$ can function as a decoder for a quantum state
redistribution (QSR) protocol.

Summarizing, a purification $\Phi_{A^{\prime}R^{\prime}}$ of the catalyst
state $\pi_{A^{\prime}}$ functions as a maximally entangled resource in QSR,
the unitary channel $\mathcal{U}_{A^{n}E^{n}A^{\prime}\rightarrow
A_{1}^{\prime}A_{2}^{\prime}\hat{E}^{n}}$ functions as an encoder in QSR, the
system $A_{2}^{\prime}$ is sent over a noiseless quantum channel in QSR, the
isometric channel $\mathcal{V}_{A_{2}^{\prime}R^{n}R^{\prime}\rightarrow
R_{1}^{\prime}A^{n}R^{n}}$ functions as a decoder in QSR, and a purification
$\Phi_{A_{1}^{\prime}R_{1}^{\prime}}$ of the state $\pi_{A_{1}^{\prime}}$
functions as a maximally entangled resource shared between sender and receiver
at the end of the QSR protocol. This completes the proof.
\end{proof}

\subsection{Optimal rate region for conditional erasure}

We now define the achievable rate region for conditional erasure, which
consists of achievable rate pairs $(R_{A},R_{P})$, where $R_{A}$ is equal to
the rate of active noise and $R_{P}$ is equal to the rate of passive noise. A
rate pair $(R_{A},R_{P})$ is \textit{achievable} for conditional erasure of
$\psi_{ABER}$ if for all $\varepsilon\in\left(  0,1\right)  $, $\delta>0$, and
sufficiently large $n$, there exists an $(n,2^{n\left[  R_{A}+\delta\right]
},2^{n\left[  R_{P}+\delta\right]  },\varepsilon)$ conditional erasure
protocol. The achievable rate region of conditional erasure of $\psi_{ABER}$
is equal to the union of all rate pairs which are achievable for conditional
erasure of $\psi_{ABER}$.

Due to the equivalence between conditional erasure and state redistribution,
given in the previous section, and the results about quantum state
redistribution recalled in
\eqref{eq:redist-region-1}--\eqref{eq:redist-region-2}, we can immediately
conclude the following theorem:

\begin{theorem}
\label{thm:cond-erasure-region}The rate pair%
\begin{equation}
(I(A;B|R)_{\psi},I(A;E)_{\psi}-I(A;R)_{\psi})
\end{equation}
is achievable for conditional erasure of $\psi_{ABER}$, and the optimal
rate region is equal to%
\begin{align}
R_{A}  & \geq I(A;B|R)_{\psi},\\
R_{A}+R_{P}  & \geq2H(A|R)_{\psi}.
\end{align}

\end{theorem}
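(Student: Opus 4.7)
The plan is to translate the known optimal rate region for quantum state redistribution (QSR) into the stated rate region for conditional erasure via the two equivalence theorems proved in the preceding subsection. The key observation is the squaring convention: conditional erasure counts $M = |A_2'|^2$ and $L = (|A'|/|A_1'|)^2$, whereas QSR counts $M_{\text{QSR}} = |\bar{A}_0|$ and $L_{\text{QSR}} = |A'|/|A_0|$. After taking logarithms and dividing by $n$, this means a QSR rate pair $(R,E)$ translates to a conditional erasure rate pair $(2R, 2E)$ (in the asymptotic limit), and vice versa.

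For achievability, I would apply Theorem~\ref{thm:redis-sim-cond-erasure} to the QSR protocol of~\cite{PhysRevA.78.030302} that attains the corner point in \eqref{eq:ach-rate-pair-redist}, namely $(I(A;B|R)_\psi/2,\,[I(A;E)_\psi - I(A;R)_\psi]/2)$. After the doubling from the conversion, this gives the conditional erasure protocol achieving the corner point $(I(A;B|R)_\psi,\, I(A;E)_\psi - I(A;R)_\psi)$ claimed in the theorem. Time-sharing with arbitrary QSR protocols in the region~\eqref{eq:redist-region-1}--\eqref{eq:redist-region-2} then yields any rate pair in the doubled region $R_A \geq I(A;B|R)_\psi$, $R_A + R_P \geq 2H(A|R)_\psi$.

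For the converse, I would take an arbitrary $(n, M, L, \varepsilon)$ conditional erasure protocol and invoke the reverse simulation theorem to obtain an $(n, \lceil \sqrt{M} \rceil, \lceil \sqrt{L} \rceil, 4\varepsilon)$ QSR protocol. If $(R_A, R_P)$ is achievable for conditional erasure, this produces QSR rates $\tfrac{1}{n}\log_2 \lceil \sqrt{M} \rceil$ and $\tfrac{1}{n}\log_2 \lceil \sqrt{L} \rceil$, which tend to $R_A/2$ and $R_P/2$ respectively as $n \to \infty$ (the ceilings contribute only $o(1)$ corrections). The converse bounds~\eqref{eq:redist-region-1}--\eqref{eq:redist-region-2} for QSR then force $R_A/2 \geq I(A;B|R)_\psi/2$ and $R_A/2 + R_P/2 \geq H(A|R)_\psi$, which is exactly the stated rate region.

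The only delicate technical step is verifying that the ceiling in $\lceil \sqrt{M} \rceil$ is asymptotically harmless, but this is immediate from $\log_2 \lceil \sqrt{M} \rceil \leq \tfrac{1}{2}\log_2 M + \log_2(1 + 1/\sqrt{M})$, so after dividing by $n$ and letting $n \to \infty$ (for sequences achieving $R_A$), the correction vanishes. Since both directions of the equivalence preserve the $\varepsilon \to 0$ and $n \to \infty$ limits (with $\varepsilon \to 4\varepsilon$ at each conversion, still vanishing), no further subtlety arises.
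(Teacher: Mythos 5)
Your proposal is correct and follows essentially the same route as the paper: the paper proves Theorem~\ref{thm:cond-erasure-region} precisely by invoking the two equivalence theorems of the preceding subsection together with the known optimal rate region \eqref{eq:redist-region-1}--\eqref{eq:redist-region-2} for quantum state redistribution, with the factor-of-two coming from the conventions $M=|A_{2}^{\prime}|^{2}$ and $L=(|A^{\prime}|/|A_{1}^{\prime}|)^{2}$. Your explicit bookkeeping of the ceilings and the $\varepsilon\rightarrow4\varepsilon$ degradation simply fills in details the paper leaves implicit.
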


\begin{remark}
\label{rem:catalyst-sometimes-not-needed}The above theorem indicates that
sometimes a catalyst is not actually needed to complete the conditional
erasure task. In particular, if the inequality $I(A;E)_{\psi}\leq
I(A;R)_{\psi}$ holds, then the protocol generates passive noise and hence only
a vanishing, sublinear rate of passive noise is in fact needed to accomplish
the conditional erasure task. Indeed, we could double block the protocol into
$N$ blocks, each consisting of $n$ copies of $\psi_{ABER}$. For the first
block of the protocol, we could supply $\approx nI(A;E)_{\psi}$ bits of
passive noise and then the protocol would generate $\approx nI(A;R)_{\psi}$
bits of passive noise. Since the condition $I(A;E)_{\psi}\leq I(A;R)_{\psi}$
is assumed to hold, we could reinvest $\approx nI(A;E)_{\psi}$ bits of passive
noise for the second block of the protocol while generating $\approx
nI(A;R)_{\psi}$ bits of passive noise. For each block, we have an excess of
$\approx n\left[  I(A;R)_{\psi}-I(A;E)_{\psi}\right]  $ bits of passive noise
available. Repeating this procedure until the $N$th block, we find that the
rate of passive noise consumed is equal to $\approx nI(A;E)_{\psi}/nN$, since
it was only consumed in the first block, and this rate vanishes in the limit
as $n,N\rightarrow\infty$.
\end{remark}

\section{Quantum discord as einselection
cost\label{sec:einselection-cost-discord}}

Environment-induced superselection (abbrev.~\textit{einselection}) is a
process in which an interaction between a system of interest and a large
environment causes selective loss of information from the system \cite{Z03}.
The interaction with the environment has the effect of monitoring particular
observables of the system, such that only eigenstates of these observables can
persist in the system, being unaffected by the interaction. The quantum
discord was originally proposed as a measure of the decrease of correlations
after einselection is complete \cite{Z00,zurek01} and can be generalized to
include arbitrary measurements (POVMs) rather than just measurements
corresponding to system observables (see, e.g., \cite{KBCPV12,ABC16} for
reviews of discord and related measures).

To define the quantum discord, we begin with a bipartite state $\rho_{AB}$ and
a positive operator-valued measure (POVM) $\Lambda\equiv\{\Lambda_{A}^{x}\}$,
with $\Lambda_{A}^{x}\geq0$ for all $x$ and $\sum_{x}\Lambda_{A}^{x}=I_{A}$.
The (unoptimized)\ quantum discord is a measure of the loss of correlation
between $A$ and $B$ under the measurement $\Lambda$:%
\begin{equation}
D(\overline{A};B)_{\rho,\Lambda}\equiv I(A;B)_{\rho}-I(X;B)_{\zeta},
\label{eq:discord}%
\end{equation}
where%
\begin{equation}
\zeta_{XB}\equiv\sum_{x}|x\rangle\langle x|_{X}\otimes\operatorname{Tr}%
_{A}\{\Lambda_{A}^{x}\rho_{AB}\}. \label{eq:post-meas-state}%
\end{equation}

Here we continue with the main theme of this paper, namely, erasure of
correlations, and define an operational task that we call an
\textit{einselection-simulation protocol}, which is a simulation of the
einselection process via local randomizing unitaries. The starting point for
such a protocol is a bipartite state $\rho_{AB}$ and a POVM $\Lambda
\equiv\{\Lambda_{A}^{x}\}$, and the objective is to determine the minimum rate
of noise needed to apply to the $A$ system of $\rho_{AB}$, such that the
resulting state $\sigma_{AB}$\ is approximately einselected. By this, we mean that

\begin{enumerate}
\item there is a measurement corresponding to $\sigma_{AB}$, such the state
$\sigma_{AB}$ is locally recoverable after performing this measurement on
system $A$ of $\sigma_{AB}$, and

\item the corresponding post-measurement state is indistinguishable from the
post-measurement state in~\eqref{eq:post-meas-state}.
\end{enumerate}

\noindent By \cite[Proposition~21]{SW14}, the state $\sigma_{AB}$ having
negligible discord is equivalent to the condition of local recoverability of
$\sigma_{AB}$ after a measurement is performed on system$~A$.

More formally, for $n,M\in\mathbb{N}$ and $\varepsilon\in\left[  0,1\right]
$, we define an $(n,M,\varepsilon)$ einselection-simulation protocol for a
state $\rho_{AB}$ and a POVM\ $\Lambda_{A}\equiv\{\Lambda_{A}^{x}\}$ to
consist of an ensemble $\{p_{i},U_{A^{n}A^{\prime}}^{i}\}_{i=1}^{M}$ of
einselection-simulating unitaries, a catalyst state $\theta_{A^{\prime}}$, and
a measurement channel $\mathcal{M}_{A^{n}A^{\prime}\rightarrow X^{n}}$ such
that the state $\sigma_{A^{n}A^{\prime}B^{n}}$ resulting from local unitary
randomization%
\begin{equation}
\sigma_{A^{n}A^{\prime}B^{n}}\equiv\sum_{i=1}^{M}p_{i}U_{A^{n}A^{\prime}}%
^{i}\left(  \rho_{AB}^{\otimes n}\otimes\theta_{A^{\prime}}\right)  \left(
U_{A^{n}A^{\prime}}^{i}\right)  ^{\dag}%
\end{equation}
and the measurement channel $\mathcal{M}_{A^{n}A^{\prime}\rightarrow X^{n}}$
satisfy the following two requirements:

\begin{enumerate}
\item The state $\sigma_{A^{n}A^{\prime}B^{n}}$ is locally recoverable from
the classical system $X^{n}$ after the measurement channel $\mathcal{M}%
_{A^{n}A^{\prime}\rightarrow X^{n}}$ is applied, in the sense that there
exists a preparation channel $\mathcal{P}_{X^{n}\rightarrow A^{n}A^{\prime}}$
such that%
\begin{equation}
F(\sigma_{A^{n}A^{\prime}B^{n}},(\mathcal{P}\circ\mathcal{M})(\sigma
_{A^{n}A^{\prime}B^{n}}))\geq1-\varepsilon, \label{eq:ein-local-rec}%
\end{equation}
where $\mathcal{P}\equiv\mathcal{P}_{X^{n}\rightarrow A^{n}A^{\prime}}$ and
$\mathcal{M}\equiv\mathcal{M}_{A^{n}A^{\prime}\rightarrow X^{n}}$. In this
sense, we say that $\sigma_{A^{n}A^{\prime}B^{n}}$ has been approximately
einselected. In \cite{SW14}, this was described as the state $\sigma
_{A^{n}A^{\prime}B^{n}}$ being negligibly disturbed by the action of an
entanglement-breaking channel.

\item The post-measurement state $\mathcal{M}_{A^{n}A^{\prime}\rightarrow
X^{n}}$ is indistinguishable from many copies of the post-measurement state in
\eqref{eq:post-meas-state}, in the sense that%
\begin{equation}
F(\mathcal{M}_{A^{n}A^{\prime}\rightarrow X^{n}}(\sigma_{A^{n}A^{\prime}B^{n}%
}),\zeta_{XB}^{\otimes n})\geq1-\varepsilon. \label{eq:ein-faithful}%
\end{equation}
This latter condition ensures that the einselection-simulating unitaries
perform a \textit{faithful} simulation of the einselection process: they do
not destroy the correlations remaining between $X$ and $B$ after the
measurement $\Lambda_{A}$\ occurs (i.e., they only destroy the correlations in
$\rho_{AB}$ lost in the application of the measurement $\Lambda_{A}$).
\end{enumerate}

\begin{definition}
[Achievable rate]A rate $R$ of einselection simulation for a state $\rho_{AB}$
and a POVM\ $\Lambda_{A}$ is achievable if for all $\varepsilon\in(0,1)$,
$\delta>0$, and sufficiently large~$n$, there exists an $(n,2^{n\left[
R+\delta\right]  },\varepsilon)$ einselection-simulation protocol.
\end{definition}

\begin{definition}
[Einselection cost]The einselection cost $\mathcal{E}(\rho_{AB},\Lambda_{A}%
)$\ of a state $\rho_{AB}$ and a POVM\ $\Lambda_{A}$ is equal to the infimum
of all achievable rates for einselection simulation of $\rho_{AB}$ and
$\Lambda_{A}$.
\end{definition}

Our main result in this section is the following physical meaning for the
quantum discord:

\begin{theorem}
The einselection cost $\mathcal{E}(\rho_{AB},\Lambda_{A})$\ of a state
$\rho_{AB}$ and a POVM\ $\Lambda_{A}$ is equal to its quantum discord
$D(\overline{A};B)_{\rho,\Lambda}$:%
\begin{equation}
\mathcal{E}(\rho_{AB},\Lambda_{A})=D(\overline{A};B)_{\rho,\Lambda},
\end{equation}
where $D(\overline{A};B)_{\rho,\Lambda}$ is defined in \eqref{eq:discord}.
\end{theorem}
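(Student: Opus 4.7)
The plan is to reduce the theorem to Theorem~\ref{thm:main} by recasting the quantum discord as a CQMI of a canonical extension of $\rho_{AB}$. Let $V_{A\rightarrow XE}$ denote a Naimark/Stinespring isometry for the measurement channel associated to the POVM $\Lambda_{A}$, and set $\tau_{XEB}\equiv V\rho_{AB}V^{\dag}$, so that $\tau_{XB}=\zeta_{XB}$ by construction. Since $V$ acts isometrically on the $A$ system, $I(A;B)_{\rho}=I(XE;B)_{\tau}$ while $I(X;B)_{\zeta}=I(X;B)_{\tau}$, and the chain rule yields
\begin{equation}
D(\overline{A};B)_{\rho,\Lambda}=I(E;B|X)_{\tau}.
\end{equation}
This identifies the discord with the deconstruction cost of $\tau_{EBX}$, in which $X$ plays the role of the conditioning system $E$ of Theorem~\ref{thm:main}.

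For achievability $\mathcal{E}(\rho_{AB},\Lambda_{A})\leq D(\overline{A};B)_{\rho,\Lambda}$, I would concatenate three operations into an einselection-simulation protocol: (i) embed $A^{n}$ into $X^{n}E^{n}$ via $V^{\otimes n}$, absorbing any ancilla needed to make this a unitary into the catalyst $A^{\prime}$; (ii) apply a near-optimal $(n,2^{n[D+\delta]},\varepsilon)$ state deconstruction protocol to $\tau_{XEB}^{\otimes n}$, which by Theorem~\ref{thm:main} locally randomizes $X^{n}E^{n}$ so that $E^{n}$ together with the deconstruction catalyst is locally recoverable from $X^{n}$ and the $X^{n}B^{n}$ marginal stays close to $\zeta_{XB}^{\otimes n}$; (iii) take $\mathcal{M}_{A^{n}A^{\prime}\rightarrow X^{n}}$ to be the standard-basis measurement of the $X^{n}$ register. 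Condition~\eqref{eq:ein-faithful} follows from the negligible-disturbance guarantee together with the classicality of $\zeta_{XB}$ on $X$; condition~\eqref{eq:ein-local-rec} is satisfied by choosing $\mathcal{P}_{X^{n}\rightarrow A^{n}A^{\prime}}$ to be the deconstruction recovery channel followed by the inverse of the embedding, with the error budget controlled via the triangle inequality for $\sqrt{1-F(\cdot,\cdot)}$ as in~\eqref{eq:triangle-ineq-1-F}.

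For the converse $\mathcal{E}(\rho_{AB},\Lambda_{A})\geq D(\overline{A};B)_{\rho,\Lambda}$, I would follow the template of the proof of Theorem~\ref{thm:CQMI-lower-bnd}. Given an $(n,M,\varepsilon)$ einselection-simulation protocol producing $\sigma_{A^{n}A^{\prime}B^{n}}$, dilate the random unitary channel $\mathcal{N}$ via a classical flag register $J$ of dimension $M$ to obtain $\hat{\sigma}_{A^{n}A^{\prime}B^{n}J}$, and write
\begin{equation}
nI(A;B)_{\rho}=I(A^{n}A^{\prime};B^{n})_{\sigma}+I(J;B^{n}|A^{n}A^{\prime})_{\hat{\sigma}},
\end{equation}
where the last term is bounded by $H(J|A^{n}A^{\prime})\leq\log M$ thanks to the classicality of $J$. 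The recoverability condition~\eqref{eq:ein-local-rec} makes $\mathcal{P}\circ\mathcal{M}$ approximately the identity on $\sigma$; combining continuity of mutual information with data processing through $\mathcal{P}$ gives $I(A^{n}A^{\prime};B^{n})_{\sigma}\leq I(X^{n};B^{n})_{\mathcal{M}(\sigma)}+g(n,\varepsilon)$. Faithfulness~\eqref{eq:ein-faithful} together with \eqref{eq:F-v-d-G-ineq} and continuity then yields $I(X^{n};B^{n})_{\mathcal{M}(\sigma)}\leq nI(X;B)_{\zeta}+f(n,\varepsilon)$. Assembling these estimates gives $nD(\overline{A};B)_{\rho,\Lambda}\leq\log M+f(n,\varepsilon)+g(n,\varepsilon)$, and the standard $n\rightarrow\infty$, $\varepsilon\rightarrow 0$ limit concludes.

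The main technical obstacle is the sharpness of the $\log M$ term: a generic Stinespring dilation of the random unitary channel only yields $I(J;B^{n}|A^{n}A^{\prime})\leq 2\log M$, which would fall short of achievability by a factor of two. Exploiting the classicality of the flag register $J$ so that $H(J|A^{n}A^{\prime})\leq H(J)\leq\log M$ is a sharp upper bound is essential, in parallel with the original Groisman--Popescu--Winter analysis of correlation erasure. A secondary subtlety is relating the approximate-identity property $\sigma\approx\mathcal{P}\circ\mathcal{M}(\sigma)$ to the MI gap, which is handled cleanly by noting that $\mathcal{P}\circ\mathcal{M}$ is a CPTP map on $A^{n}A^{\prime}$ factoring through the smaller register $X^{n}$, so that data processing and continuity immediately deliver the desired bound.
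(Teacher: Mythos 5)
Your proposal is correct and follows essentially the same route as the paper: achievability by running a state deconstruction protocol on the Naimark-dilated state $\mathcal{V}_{A\rightarrow XE}(\rho_{AB})$ at rate $I(E;B|X)=D(\overline{A};B)_{\rho,\Lambda}$, and a converse that exploits the classical flag register so that the dimension bound gives $\log_{2}M$ rather than $2\log_{2}M$, combined with faithfulness, local recoverability, data processing through $\mathcal{P}\circ\mathcal{M}$, and continuity. The only cosmetic differences are that you derive $D=I(E;B|X)$ directly from the chain rule instead of citing it, phrase the converse with mutual informations rather than conditional entropies, and leave the frame bookkeeping in the achievability step slightly loose (the paper conjugates the randomizing unitaries as $V^{\dag\otimes n}U^{\dag}W^{i}UV^{\otimes n}$ so that the simulating unitaries, measurement, and preparation channels all act consistently on $A^{n}A^{\prime}$).
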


\begin{proof}
A proof of the above theorem requires two parts:\ the achievability part and
the converse. We begin with the converse, and note that it bears some
similarities to a converse given in Appendix~\ref{sec:alt-conv-proof} and the
proof of \cite[Proposition~21]{SW14}. Consider an arbitrary $(n,M,\varepsilon
)$ einselection simulation protocol for $\rho_{AB}$ and $\Lambda_{A}$, which
consists of $\{p_{i},U_{A^{n}A^{\prime}}^{i}\}_{i=1}^{M}$, $\theta_{A^{\prime
}}$, $\mathcal{M}_{A^{n}A^{\prime}\rightarrow X^{n}}$, and $\mathcal{P}%
_{X^{n}\rightarrow A^{n}A^{\prime}}$ as defined above. Let $\sigma_{\hat
{M}A^{n}A^{\prime}B^{n}}$ denote the following state:%
\begin{equation}
\sigma_{\hat{M}A^{n}A^{\prime}B^{n}}\equiv\sum_{i=1}^{M}p_{i}|i\rangle\langle
i|_{\hat{M}}\otimes U_{A^{n}A^{\prime}}^{i}(\rho_{AB}^{\otimes n}\otimes
\theta_{A^{\prime}})\left(  U_{A^{n}A^{\prime}}^{i}\right)  ^{\dag},
\end{equation}
and let $\kappa_{\hat{M}X^{n}B^{n}}$ denote the following state after the
measurement channel $\mathcal{M}_{A^{n}A^{\prime}\rightarrow X^{n}}$ acts%
\begin{equation}
\kappa_{\hat{M}X^{n}B^{n}}\equiv\mathcal{M}_{A^{n}A^{\prime}\rightarrow X^{n}%
}(\sigma_{\hat{M}A^{n}A^{\prime}B^{n}}),
\end{equation}
For such a protocol, the following chain of inequalities holds%
\begin{align}
&  \!\!\!\! nD(\overline{A};B)_{\rho,\Lambda}\nonumber\\
&  =n\left[  H(B|X)_{\zeta}-H(B|A)_{\rho}\right] \\
&  =H(B^{n}|X^{n})_{\zeta^{\otimes n}}-H(B^{n}|A^{n})_{\rho^{\otimes n}}\\
&  \leq H(B^{n}|X^{n})_{\kappa}+f(n,\varepsilon)-H(B^{n}|A^{n})_{\rho^{\otimes
n}}.
\end{align}
The first equality follows from a simple manipulation of the definition in
\eqref{eq:discord}, noting that $H(B)_{\rho}=H(B)_{\zeta}$. The second
equality follows from additivity of the conditional entropies with respect to
tensor-product states. The inequality follows from \eqref{eq:ein-faithful}
(faithfulness of the einselection simulation), the Fuchs-van-de-Graaf
inequalities in \eqref{eq:F-v-d-G-ineq}, and \cite[Lemma~2]{Winter15}, with%
\begin{equation}
f(n,\varepsilon)\equiv2n\sqrt{\varepsilon}\log\left\vert B\right\vert
+(1+\sqrt{\varepsilon})h_{2}(\sqrt{\varepsilon}/\left[  1+\sqrt{\varepsilon
}\right]  ).
\end{equation}
We now focus on bounding the two entropic terms $H(B^{n}|X^{n})_{\kappa}$ and
$-H(B^{n}|A^{n})_{\rho^{\otimes n}}$ separately. Consider that%
\begin{align}
H(B^{n}|X^{n})_{\kappa}  &  \leq H(B^{n}|A^{n}A^{\prime})_{\mathcal{P}%
(\kappa)}\\
&  \leq H(B^{n}|A^{n}A^{\prime})_{\sigma}+f(n,\varepsilon).
\end{align}
The first inequality follows because the conditional entropy does not decrease
under the action of a channel on the conditioning system, in this case the
channel being the preparation channel $\mathcal{P}_{X^{n}\rightarrow
A^{n}A^{\prime}}$. The second inequality follows from the local recoverability
condition in \eqref{eq:ein-local-rec}, the Fuchs-van-de-Graaf inequalities in
\eqref{eq:F-v-d-G-ineq}, and \cite[Lemma~2]{Winter15}. We now bound the term
$-H(B^{n}|A^{n})_{\rho^{\otimes n}}$ from above%
\begin{align}
-H(B^{n}|A^{n})_{\rho^{\otimes n}}  &  =-H(B^{n}|\hat{M}A^{n}A^{\prime
})_{\sigma_{\hat{M}}\otimes\rho^{\otimes n}\otimes\theta}\\
&  =-H(B^{n}|\hat{M}A^{n}A^{\prime})_{\sigma}\\
&  \leq-H(B^{n}|A^{n}A^{\prime})_{\sigma}+\log_{2}|\hat{M}|.
\end{align}
The first equality follows because the conditional entropy is invariant with
respect to tensoring in the product states $\sigma_{\hat{M}}\otimes
\theta_{A^{\prime}}$ to be part of the conditioning system, with $\sigma
_{\hat{M}}=\sum_{i=1}^{M}p_{i}|i\rangle\langle i|_{\hat{M}}$. The second
equality follows because the conditional entropy is invariant with respect to
the following controlled unitary acting on the systems~$\hat{M}A^{n}A^{\prime
}$ of $\sigma_{\hat{M}}\otimes\rho_{AB}^{\otimes n}\otimes\theta_{A^{\prime}}%
$:%
\begin{equation}
\sum_{i=1}^{M}|i\rangle\langle i|_{\hat{M}}\otimes U_{A^{n}A^{\prime}}^{i}.
\end{equation}
The inequality follows from a rewriting and a dimension bound for CQMI
\cite[Exercise~11.7.9]{W16} when one of the conditioned systems is classical
(in this case system $\hat{M}$):%
\begin{multline}
H(B^{n}|A^{n}A^{\prime})_{\sigma}-H(B^{n}|\hat{M}A^{n}A^{\prime})_{\sigma}\\
=I(B^{n};\hat{M}|A^{n}A^{\prime})_{\sigma}\leq\log_{2}|\hat{M}|.
\end{multline}
Putting everything together, we find the following lower bound on the rate of
an arbitrary $(n,M,\varepsilon)$ einselection simulation protocol:%
\begin{equation}
D(\overline{A};B)_{\rho,\Lambda}\leq\frac{1}{n}\log_{2}|\hat{M}|+\frac{1}%
{n}\left[  f(n,\varepsilon)+g(n,\varepsilon)\right]  .
\end{equation}
Taking the limit as $n\rightarrow\infty$ and then as $\varepsilon\rightarrow
0$, we can conclude that the quantum discord is a lower bound on the
einselection cost:%
\begin{equation}
D(\overline{A};B)_{\rho,\Lambda}\leq\mathcal{E}(\rho_{AB},\Lambda_{A}).
\label{eq:dis-ein-cost-conv}%
\end{equation}

We now turn to the achievability part, which makes use of a state
deconstruction protocol. Let $V_{AE_{0}\rightarrow XE}$ denote a unitary
extension of a measurement channel corresponding to the POVM$~\Lambda_{A}$. In
particular, we can define $V_{AE_{0}\rightarrow XE}$ as follows by its action
on a state vector $|\psi\rangle_{A}$:%
\begin{equation}
V_{AE_{0}\rightarrow XE}|\psi\rangle_{A}|0\rangle_{E_{0}}\equiv\sum_{x}\left(
\sqrt{\Lambda_{A}^{x}}|\psi\rangle_{A}\right)  _{\bar{E}}|x\rangle
_{X}|x\rangle_{\tilde{E}},
\end{equation}
where we set $E\equiv\bar{E}\tilde{E}$ and $\{|x\rangle\}_{x}$ is an
orthonormal basis. We define the isometric channel $\mathcal{V}_{A\rightarrow
XE}(\psi_{A})\equiv V(\psi_{A}\otimes|0\rangle\langle0|_{E_{0}})V^{\dag}$ and
note that tracing over system $E$ gives back the original measurement channel:%
\begin{equation}
\operatorname{Tr}_{E}\{\mathcal{V}_{A\rightarrow XE}(\psi_{A})\}=\sum
_{x}\operatorname{Tr}\{\Lambda_{A}^{x}\psi_{A}\}|x\rangle\langle x|_{X}.
\end{equation}

We now show how an $(n,M,\varepsilon)$ state deconstruction protocol for the
state $\rho_{XEB}\equiv\mathcal{V}_{A\rightarrow XE}(\rho_{AB})$ leads to an
$(n,M,9\varepsilon)$\ einselection-simulation protocol for $\rho_{AB}$ and
$\Lambda_{A}$. We consider a state deconstruction protocol in the
Landauer--Bennett erasure model. To this end, let $\mathcal{U}_{E^{n}%
X^{n}E^{\prime}\rightarrow E_{1}^{\prime}E_{2}^{\prime}X^{n}}$ be a unitary
channel, and let $\theta_{E^{\prime}}$ denote an ancilla state. Let
$\omega_{E_{1}^{\prime}E_{2}^{\prime}X^{n}B^{n}}$ denote the following state
resulting from a deconstruction operation:%
\begin{equation}
\omega_{E_{1}^{\prime}X^{n}B^{n}}\equiv\operatorname{Tr}_{E_{2}^{\prime}%
}\{\mathcal{U}_{E^{n}X^{n}E^{\prime}\rightarrow E_{1}^{\prime}E_{2}^{\prime
}X^{n}}(\rho_{XEB}^{\otimes n}\otimes\theta_{E^{\prime}})\}.
\end{equation}
The following two properties, discussed in
Section~\ref{sec:landauer-erasure-model}, hold for a state deconstruction protocol:

\begin{enumerate}
\item There exists a recovery channel $\mathcal{R}_{X^{n}\rightarrow
X^{n}E_{1}^{\prime}}$ such that system $E_{1}^{\prime}$ is locally recoverable
from $X^{n}$:%
\begin{equation}
F(\omega_{E_{1}^{\prime}X^{n}B^{n}},\mathcal{R}_{X^{n}\rightarrow X^{n}%
E_{1}^{\prime}}(\omega_{X^{n}B^{n}}))\geq1-\varepsilon.
\label{eq:rec-for-ein-2}%
\end{equation}

\item The deconstruction protocol causes neliglible disturbance to the
marginal state on systems $X^{n}B^{n}$:%
\begin{equation}
F(\omega_{X^{n}B^{n}},\rho_{XB}^{\otimes n})\geq1-\varepsilon.
\label{eq:neg-dis-for-ein}%
\end{equation}

\end{enumerate}

We now specify the components of the einselection-simulation protocol. It
consists of the following ensemble of unitaries:%
\begin{equation}
\left\{  1/M,V^{\dag\otimes n}U^{\dag}W_{E_{2}^{\prime}}^{i}UV^{\otimes
n}\right\}  _{i=1}^{M},
\end{equation}
where $U$ is the unitary operator corresponding to the unitary channel
$\mathcal{U}_{E^{n}X^{n}E^{\prime}\rightarrow E_{1}^{\prime}E_{2}^{\prime
}X^{n}}$ and $\{W_{E_{2}^{\prime}}^{i}\}_{i=1}^{M}$ is a Heisenberg--Weyl set
of unitaries for system $E_{2}^{\prime}$. The ancilla state for einselection
simulation is $\theta_{E^{\prime}}\otimes|0\rangle\langle0|_{E_{0}}$, such
that the resulting approximately einselected\ state $\sigma_{A^{n}A^{\prime
}B^{n}}$ is as follows%
\begin{equation}
\sigma_{A^{n}A^{\prime}B^{n}}\equiv V^{\dag\otimes n}U^{\dag}(\omega
_{E_{1}^{\prime}X^{n}B^{n}}\otimes\pi_{E_{2}^{\prime}})UV^{\otimes n},
\end{equation}
where we are setting system $A^{\prime}\equiv E^{\prime}E_{0}$, since the
systems $E^{\prime}E_{0}$\ will now serve as the ancilla system $A^{\prime}$
for an einselection-simulation protocol. We define the measurement channel
$\mathcal{M}_{A^{n}A^{\prime}\rightarrow X^{n}}$ as follows:%
\begin{multline}
\mathcal{M}_{A^{n}A^{\prime}\rightarrow X^{n}}(\tau_{A^{n}A^{\prime}}%
)\equiv\overline{\Delta}_{X^{n}}\circ\\
\operatorname{Tr}_{E_{1}^{\prime}E_{2}^{\prime}}\{\mathcal{U}_{E^{n}%
X^{n}E^{\prime}\rightarrow E_{1}^{\prime}E_{2}^{\prime}X^{n}}(\mathcal{V}%
_{A\rightarrow XE}^{\otimes n}(\tau_{A^{n}A^{\prime}}))\},
\end{multline}
where $\overline{\Delta}_{X^{n}}$ denotes a completely dephasing channel,
defined as%
\begin{align}
\overline{\Delta}_{X^{n}}(\xi_{X^{n}})  &  \equiv\sum_{x^{n}}|x^{n}%
\rangle\langle x^{n}|_{X^{n}}\xi_{X^{n}}|x^{n}\rangle\langle x^{n}|_{X^{n}},\\
|x^{n}\rangle &  \equiv|x_{1}\rangle_{X_{1}}\otimes\cdots\otimes|x_{n}%
\rangle_{X_{n}}.
\end{align}
We take the preparation channel $\mathcal{P}_{X^{n}\rightarrow A^{n}A^{\prime
}}$ to be%
\begin{multline}
\mathcal{P}_{X^{n}\rightarrow A^{n}A^{\prime}}(\xi_{X^{n}})\\
\equiv V^{\dag\otimes n}U^{\dag}(\mathcal{R}_{X^{n}\rightarrow X^{n}%
E_{1}^{\prime}}(\xi_{X^{n}})\otimes\pi_{E_{2}^{\prime}})UV^{\otimes n},
\end{multline}
which consists of applying the recovery channel $\mathcal{R}_{X^{n}\rightarrow
X^{n}E_{1}^{\prime}}$, appending the maximally mixed state $\pi_{E_{2}%
^{\prime}}$, inverting the deconstruction unitary $U$, and inverting the
unitary dilation $V^{\otimes n}$ of the measurement channel for $\Lambda
_{A}^{\otimes n}$.

We now demonstrate that the two conditions for einselection simulation hold.
We begin by establishing the faithfulness condition in
\eqref{eq:ein-faithful}. From definitions, we have that%
\begin{equation}
\mathcal{M}_{A^{n}A^{\prime}\rightarrow X^{n}}(\sigma_{AA^{\prime}B^{n}%
})=\overline{\Delta}_{X^{n}}(\omega_{X^{n}B^{n}}).
\label{eq:rewrite-measured-state-ein}%
\end{equation}
Furthermore, the negligible disturbance condition in
\eqref{eq:neg-dis-for-ein} and the monotonicity of fidelity with respect to
quantum channels imply that%
\begin{equation}
F(\overline{\Delta}_{X^{n}}(\omega_{X^{n}B^{n}}),\overline{\Delta}_{X^{n}%
}(\rho_{XB}^{\otimes n}))\geq1-\varepsilon.
\end{equation}
But this is equivalent to%
\begin{equation}
F(\mathcal{M}_{A^{n}A^{\prime}\rightarrow X^{n}}(\sigma_{AA^{\prime}B^{n}%
}),\rho_{XB}^{\otimes n})\geq1-\varepsilon, \label{eq:faithful-ein-ach}%
\end{equation}
by applying \eqref{eq:rewrite-measured-state-ein}\ and the fact that the
classical--quantum state $\rho_{XB}^{\otimes n}$ is invariant under the action
of the dephasing channel $\overline{\Delta}_{X^{n}}$. So this establishes the
faithfulness condition in \eqref{eq:ein-faithful}.

We now establish the local recoverability condition in
\eqref{eq:ein-local-rec}. Consider that \eqref{eq:faithful-ein-ach},
\eqref{eq:neg-dis-for-ein}, the triangle inequality for the metric $\sqrt
{1-F}$, and a rewriting imply that%
\begin{equation}
F(\mathcal{M}_{A^{n}A^{\prime}\rightarrow X^{n}}(\sigma_{AA^{\prime}B^{n}%
}),\omega_{X^{n}B^{n}})\geq1-4\varepsilon. \label{eq:close-to-deconstructed-1}%
\end{equation}
The monotonicity of fidelity with respect to quantum channels applied to
\eqref{eq:close-to-deconstructed-1}\ then implies that%
\begin{equation}
F((\mathcal{P}\circ\mathcal{M})(\sigma_{AA^{\prime}B^{n}}),\mathcal{P}%
_{X^{n}\rightarrow A^{n}A^{\prime}}(\omega_{X^{n}B^{n}}))\geq1-4\varepsilon.
\label{eq:ein-almost-done-1}%
\end{equation}
Invariance of the fidelity in \eqref{eq:rec-for-ein-2} with respect to
tensoring in $\pi_{E_{2}^{\prime}}$, applying the unitary $U^{\dag}$ followed
by $V^{\dag\otimes n}$, and applying definitions implies that%
\begin{equation}
F(\sigma_{A^{n}A^{\prime}B^{n}},\mathcal{P}_{X^{n}\rightarrow A^{n}A^{\prime}%
}(\omega_{X^{n}B^{n}}))\geq1-\varepsilon. \label{eq:ein-almost-done-2}%
\end{equation}
We can then apply the triangle inequality to \eqref{eq:ein-almost-done-1} and
\eqref{eq:ein-almost-done-2}\ with respect to the metric $\sqrt{1-F}$ and
rewrite to find that%
\begin{equation}
F(\sigma_{A^{n}A^{\prime}B^{n}},(\mathcal{P}\circ\mathcal{M})(\sigma
_{AA^{\prime}B^{n}}))\geq1-9\varepsilon.
\end{equation}
This then establishes the local recoverability condition in
\eqref{eq:ein-local-rec}. Thus, we have demonstrated that an $(n,M,\varepsilon
)$ state deconstruction protocol leads to an $(n,M,9\varepsilon)$
einselection-simulation protocol.

What remains is to show that the discord is an achievable rate for
einselection simulation. In our protocol for state deconstruction (the
particular setup considered here), an achievable rate is%
\begin{equation}
\frac{1}{n}\log_{2}\left\vert M\right\vert \approx I(E;B|X)_{\mathcal{V}%
(\rho)},
\end{equation}
which implies via the simulation argument given above that
$I(E;B|X)_{\mathcal{V}(\rho)}$ is an achievable rate for einselection
simulation. It is known from \cite{P12,SBW14} that%
\begin{equation}
D(\overline{A};B)_{\rho,\Lambda}=I(E;B|X)_{\mathcal{V}(\rho)},
\label{eq:discord-CMI}%
\end{equation}
where $\mathcal{V}(\rho)=\mathcal{V}_{A\rightarrow XE}(\rho_{AB})$, and so we
establish the inequality%
\begin{equation}
D(\overline{A};B)_{\rho,\Lambda}\geq\mathcal{E}(\rho_{AB},\Lambda_{A}),
\end{equation}
completing the proof when combined with \eqref{eq:dis-ein-cost-conv}.
\end{proof}

\begin{remark}
The operational interpretation for quantum discord given here builds upon the
previous interpretation from \cite[Section~6(c)]{W14}, given in terms of
quantum state redistribution (see \cite{MD11,CABMPW11}\ for other operational,
information-theoretic interpretations of discord). In \cite[Section~6(c)]%
{W14}, it was established via the relation in \eqref{eq:discord-CMI}\ that the
discord is equal to twice the rate of quantum communication needed in a state
redistribution protocol to transmit the environment system $E$ of
$\mathcal{V}_{A\rightarrow XE}(\rho_{AB})$ to an inaccessible environmental
system $R$, which purifies the state $\rho_{AB}$. The interpretation written
there is that discord \textquotedblleft characterizes the amount of quantum
information lost in the measurement process.\textquotedblright\ On the one
hand, we now see that the einselection-simulation protocol discussed above
perhaps gives a more natural operational interpretation of quantum discord, in
the original spirit of the discussions from \cite{Z00,zurek01}. On the other
hand, we see that at the core of the achievability proof above is the state
redistribution protocol and the method from \cite[Section~6(c)]{W14}, given
that we showed in Section~\ref{sec:simulation-via-redis}\ how state
redistribution can simulate state deconstruction.
\end{remark}

\section{Squashed entanglement}

Our main result in Theorem~\ref{thm:main} also provides an operational
interpretation of the squashed entanglement \cite{CW04}, which is an
entanglement measure satisfying many desirable properties (see \cite{LW14}%
\ and references therein). A communication-theoretic interpretation for
squashed entanglement was given in \cite{O08}, and our interpretation here
largely follows the interpretation of \cite{O08}. There, it was argued that
squashed entanglement of $\rho_{AB}$\ is equal to the fastest rate at which
Alice could send her systems to a third party possessing the best possible
quantum side information to help in decoding.

Recall that the squashed entanglement of a bipartite state $\rho_{AB}$ is
defined as%
\[
E_{\text{sq}}(A;B)_{\rho}\equiv\frac{1}{2}\inf_{\zeta_{ABE}}\left\{
I(A;B|E)_{\zeta}:\rho_{AB}=\operatorname{Tr}_{E}\{\zeta_{ABE}\}\right\}  .
\]
Due to its connection with CQMI, we thus see that the squashed entanglement is
equal to half the minimum rate of noise needed in a deconstruction operation
if Alice has available the best possible third correlated system $E$ to help
in the deconstruction task. That is, suppose that the state that Alice and Bob
begin with is $\rho_{AB}$. If there is no third system available, then the
deconstruction task reduces to decorrelating and the optimal rate of noise for
deconstructing is equal to the mutual information $I(A;B)_{\rho}$. However, if
Alice is provided with a third system $E$, such that the global state is
$\zeta_{ABE}$ with $\rho_{AB}=\operatorname{Tr}_{E}\{\zeta_{ABE}\}$, then the
rate of noise needed to achieve deconstruction is equal to $I(A;B|E)$ and
could potentially be reduced, such that fewer local randomizing unitaries are
needed in a deconstruction operation. By inspecting the formula for squashed
entanglement, we see that $E_{\text{sq}}(A;B)_{\rho}$ is equal to half the
minimum rate of noise needed in a deconstruction operation if optimal quantum
side information in $E$ is available. Also, loosely speaking, we see that the
more entangled a state is (as measured by $E_{\text{sq}}$), the more difficult
it is to deconstruct it with respect to any possible third system$~E$.

Applying the insights of \cite{LW14}, we see that squashed entanglement is
equal to half the minimum rate of noise needed to produce a state on Alice,
Bob, and Eve's systems, such that Alice's system of the resulting state is
locally recoverable from Eve's system. By \cite{LW14}, the resulting state is
thus highly extendible and furthermore arbitrarily close to a separable state
in 1-LOCC distance in the many-copy limit. In more detail, let $\omega
_{A^{\prime}B^{n}E^{n}}$ denote the state resulting from applying a state
deconstruction protocol to $\rho_{ABE}^{\otimes n}$, where $\rho_{ABE}$ is an
extension of $\rho_{AB}$. Using the argument from \cite{LW14} (repeated in
\cite{SW14}), along with the fact that $\sqrt{1-F(\rho,\sigma)}$ is a distance
measure \cite{GLN05}\ and thus obeys the triangle inequality, we find that
$F(A^{\prime};B^{n}|E^{n})_{\omega}\geq1-\varepsilon$ implies that%
\begin{equation}
\sup_{\gamma_{A^{\prime}B^{n}}\in\mathcal{E}_{k}(A^{\prime}:B^{n})}%
F(\omega_{A^{\prime}B^{n}},\gamma_{A^{\prime}B^{n}})\geq1-k^{2}\varepsilon,
\end{equation}
where $\mathcal{E}_{k}(A^{\prime}\!\!:\!\!B^{n})$ denotes the set of
$k$-extendible states, defined as the set of all states $\gamma_{A^{\prime
}B^{n}}$ such that there exists a $k$-extension $\gamma_{A_{1}^{\prime}\cdots
A_{k}^{\prime}B^{n}}$, with $\gamma_{A_{1}^{\prime}\cdots A_{k}^{\prime}B^{n}%
}$ invariant with respect to permutations of the systems $A_{1}^{\prime}\cdots
A_{k}^{\prime}$ and $\gamma_{A^{\prime}B^{n}}=\operatorname{Tr}_{A_{2}%
^{\prime}\cdots A_{k}^{\prime}B^{n}}\{\gamma_{A_{1}^{\prime}\cdots
A_{k}^{\prime}B^{n}}\}$ \cite{W89a,DPS04}. Since we can take $\varepsilon$ to
be an exponentially decreasing function of $n$ \cite{YD09}, we can take $k$
growing to infinity, say, proportional to $n^{2}$, such that $\sup
_{\gamma_{A^{\prime}B^{n}}\in\mathcal{E}_{k}(A^{\prime}:B^{n})}F(\omega
_{A^{\prime}B^{n}},\gamma_{A^{\prime}B^{n}})\rightarrow1$ as $k,n\rightarrow
\infty$. Thus, the squashed entanglement can be interpreted in terms of
$k$-extendibility as stated above.

To get the statement about 1-LOCC\ distance to separable states, we need only
apply a result from \cite{BH13}, which states that the 1-LOCC\ distance
between $k$-extendible states and separable states can be bounded from above
by a term $\propto\sqrt{(\log_{2}\left\vert A^{\prime}\right\vert )/k}$. In
our case, $\log_{2}\left\vert A^{\prime}\right\vert $ is linear in $n$, and
with $k\propto n^{2}$, the 1-LOCC distance between $k$-extendible states and
separable states vanishes in the large $n$ limit.

\section{Discussion\label{sec:conclu}}

We have provided an operational interpretation of the conditional quantum
mutual information (CQMI) $I(A;B|E)_{\rho}$ of a tripartite state $\rho_{ABE}%
$\ as the minimal rate of noise needed to apply in a deconstruction operation,
such that it has negligible disturbance of the marginal state $\rho_{BE}$
while producing a state that is locally recoverable from system $E$ alone.
Equivalently, we find that CQMI is equal to the minimal rate of noise needed
to result in a state that has vanishing normalized CQMI. The method for
showing achievability of CQMI\ in such a state deconstruction task relies upon
the quantum state redistribution protocol \cite{DY08,YD09}. We showed how the
state deconstruction protocol simplifies significantly if the system $E$ is
classical. We also considered the task of conditional erasure, in which the
goal is to apply a noisy operation to the $AE$ systems such that the $BE$
systems are negligibly disturbed and the resulting $A$ system is decoupled
from the $BE$ systems. We find again that the minimal rate of noise for
conditional erasure is equal to the CQMI\ $I(A;B|E)$. We also provided new
operational interpretations of quantum correlation measures which have
CQMI\ at their core, including quantum discord \cite{Z00,zurek01}\ and
squashed entanglement \cite{CW04}. We should also mention that our operational interpretation of CQMI seems natural in the context of the recent contribution of \cite{DHW16}, which discussed scrambling of information due to bipartite unitary interactions.

Going forward from here, we suspect that it should be possible to generalize
our results to multipartite CQMI\ quantities \cite{YHHHOS09,AHS08}. We also
think there are major obstacles to be overcome before we can determine a
satisfying one-shot generalization of these results, just as there are
obstacles in doing so for quantum state redistribution
\cite{BCT16,DHO14,ADJ14}. We would also like to know whether the CQMI\ is
generally achievable for the task of state deconstruction if no catalyst is
available. Remark~\ref{rem:catalyst-sometimes-not-needed} discusses how
a catalyst is sometimes  not actually needed for state deconstruction or
conditional erasure, but we would like to know whether this might generally be
the case.

\begin{acknowledgments}
We are indebted to Siddhartha Das, Gilad Gour, Matt Hastings, Mio Murao, Marco
Piani, Kaushik Seshadreesan, Eyuri Wakakuwa, and Andreas Winter for valuable
discussions. We also acknowledge the catalyzing role of the open problems
session at Beyond IID\ 2016, which ultimately led to the solution of the
i.i.d.~result presented here. MB acknowledges funding by the SNSF through a
fellowship, funding by the Institute for Quantum Information and Matter
(IQIM), an NSF Physics Frontiers Center (NFS Grant PHY-1125565) with support
of the Gordon and Betty Moore Foundation (GBMF-12500028), and funding support
form the ARO grant for Research on Quantum Algorithms at the IQIM
(W911NF-12-1-0521). CM acknowledges financial support from the European Research Council (ERC Grant Agreement no.~337603), the Danish Council for Independent Research (Sapere Aude) and VILLUM FONDEN via the QMATH Centre of Excellence (Grant No.~10059). MMW\ acknowledges support from the NSF\ under
Award no.~CCF-1350397.
\end{acknowledgments}

\appendix

\section{Alternative converse proofs\label{sec:alt-conv-proof}}

In this appendix, we detail two alternative converse proofs for
Theorem~\ref{thm:CQMI-lower-bnd}, which are tailored to the local unitary
randomizing model. One of the proofs bears similarities to the current proof
of Theorem~\ref{thm:CQMI-lower-bnd} and the other is similar to those
appearing in prior work \cite{GPW05,BBW15,WSM15a}. We begin with the former proof.

Let $\{p_{i},U_{A^{n}A^{\prime}E^{n}}^{i}\}_{i=1}^{M}$ denote any ensemble of
unitaries and $\theta_{A^{\prime}}$ a corresponding ancilla state realizing an
$(n,M,\varepsilon)$ state deconstruction protocol, such that the
deconstruction operation $\mathcal{N}_{A^{n}A^{\prime}E^{n}}$\ is as given in
\eqref{eq:LUR-deconstruction} and satisfies the conditions of negligible
disturbance in \eqref{eq:small-disturb-first}\ and local recoverability in \eqref{eq:local-rec}.

Let $\sigma_{\hat{M}A^{n}B^{n}E^{n}A^{\prime}}$ denote the following state:%
\begin{multline}
\sigma_{\hat{M}A^{n}B^{n}E^{n}A^{\prime}}\equiv\\
\sum_{i=1}^{M}p_{i}|i\rangle\langle i|_{\hat{M}}\otimes U_{A^{n}A^{\prime
}E^{n}}^{i}(\rho_{ABE}^{\otimes n}\otimes\theta_{A^{\prime}})(U_{A^{n}%
A^{\prime}E^{n}}^{i})^{\dag}.
\end{multline}
Then consider that%
\begin{align}
&  \!\!\!\!nI(A;B|E)_{\rho}\nonumber\\
&  =I(A^{n};B^{n}|E^{n})_{\rho^{\otimes n}}\\
&  =H(B^{n}|E^{n})_{\rho^{\otimes n}}-H(B^{n}|A^{n}E^{n})_{\rho^{\otimes n}}\\
&  \leq H(B^{n}|E^{n})_{\sigma}+f(n,\varepsilon)-H(B^{n}|A^{n}E^{n}%
)_{\rho^{\otimes n}}.
\end{align}
The inequality follows for a similar reason as given for the first inequality
in \eqref{eq:CQMI-converse-1st-block}, with $f(n,\varepsilon)$ chosen as in
\eqref{eq:f-func-continuity-AFW}. We now focus on bounding the term
$-H(B^{n}|A^{n}E^{n})_{\rho^{\otimes n}}$:%
\begin{align}
&  \!\!\!\!\!\!\!-H(B^{n}|A^{n}E^{n})_{\rho^{\otimes n}}\nonumber\\
&  =-H(B^{n}|\hat{M}A^{n}E^{n}A^{\prime})_{\sigma_{\hat{M}}\otimes
\rho^{\otimes n}\otimes\theta}\\
&  =-H(B^{n}|\hat{M}A^{n}E^{n}A^{\prime})_{\sigma}\\
&  \leq-H(B^{n}|A^{n}E^{n}A^{\prime})_{\sigma}+\log_{2}|\hat{M}|.
\end{align}
The first equality follows because we are tensoring in the product states
$\sigma_{\hat{M}}=\sum_{i=1}^{M}p_{i}|i\rangle\langle i|_{\hat{M}}$ and
$\theta_{A^{\prime}}$ for the conditioning system of the conditional entropy,
which leave it invariant. The second equality follows because the conditional
entropy is invariant under the application of the following controlled unitary
to the systems $\hat{M}A^{n}E^{n}A^{\prime}$ of $\sigma_{\hat{M}}\otimes
\rho_{ABE}^{\otimes n}\otimes\theta_{A^{\prime}}$:%
\begin{equation}
\sum_{i}|i\rangle\langle i|_{\hat{M}}\otimes U_{A^{n}A^{\prime}E^{n}}^{i}.
\end{equation}
The inequality follows from a rewriting and a dimension bound for CQMI
\cite[Exercise~11.7.9]{W16}:%
\begin{multline}
H(B^{n}|A^{n}E^{n}A^{\prime})_{\sigma}-H(B^{n}|\hat{M}A^{n}E^{n}A^{\prime
})_{\sigma}\\
=I(B^{n};\hat{M}|A^{n}E^{n}A^{\prime})_{\sigma}\leq\log_{2}|\hat{M}|.
\end{multline}
Combining these inequalities, we find that%
\begin{align}
&  \!\!\!\!nI(A;B|E)_{\rho}\nonumber\\
&  \leq H(B^{n}|E^{n})_{\sigma}-H(B^{n}|A^{n}E^{n}A^{\prime})_{\sigma
}\nonumber\\
&  \qquad+f(n,\varepsilon)+\log_{2}|\hat{M}|\\
&  =I(A^{n}A^{\prime};B^{n}|E^{n})_{\sigma}+f(n,\varepsilon)+\log_{2}|\hat
{M}|\\
&  \leq g(n,\varepsilon)+f(n,\varepsilon)+\log_{2}|\hat{M}|.
\end{align}
The inequality follows by applying the local recoverability condition in
\eqref{eq:local-rec} and because locally recoverable states have small CQMI as
reviewed in \eqref{eq:loc-rec-low-CQMI}, where we choose $g(n,\varepsilon)$ as
in \eqref{eq:loc-rec-CQMI-cont}. We can then rewrite this as%
\begin{equation}
I(A;B|E)_{\rho}\leq\frac{1}{n}\log_{2}|\hat{M}|+\frac{1}{n}\left[
g(n,\varepsilon)+f(n,\varepsilon)\right]  .
\end{equation}
By taking the limit as $n\rightarrow\infty$, then $\varepsilon\rightarrow0$,
and applying definitions, we can conclude the inequality $I(A;B|E)_{\rho}%
\leq\mathcal{D}(A;B|E)_{\rho}$.

We now detail the other proof, which is similar to those given in
\cite{GPW05,BBW15,WSM15a}. We define the following pure state:%
\begin{multline}
|\varphi\rangle_{M_{1}M_{2}A^{n}A^{\prime}E^{n}B^{n}R^{n}R^{\prime}}\equiv\\
\sum_{i}\sqrt{p_{i}}|i\rangle_{M_{1}}|i\rangle_{M_{2}}U_{A^{n}A^{\prime}E^{n}%
}^{i}|\psi\rangle_{ABER}^{\otimes n}\otimes|\theta\rangle_{A^{\prime}%
R^{\prime}},
\end{multline}
where $|\psi\rangle_{ABER}$ purifies $\rho_{ABE}$ and $|\theta\rangle
_{A^{\prime}R^{\prime}}$ purifies the ancilla $\theta_{A^{\prime}}$.\ The
state $|\varphi\rangle$ above\ is a purification of $\omega_{A^{n}A^{\prime
}B^{n}E^{n}}$ in \eqref{eq:LURed-state}\ and is helpful in our analysis.
Consider that%
\begin{align}
&  \!\!\!\!\!\!\log_{2}M\geq H(\{p_{i}\})=H(M_{1})_{\varphi}\\
&  =H(M_{2}A^{n}A^{\prime}E^{n}B^{n}R^{n}R^{\prime})_{\varphi}\\
&  \geq H(A^{n}A^{\prime}E^{n}B^{n}R^{n}R^{\prime})_{\varphi}\\
&  \geq H(A^{n}A^{\prime}E^{n}B^{n})_{\varphi}-H(R^{n}R^{\prime})_{\varphi}\\
&  =H(A^{n}A^{\prime}E^{n}B^{n})_{\omega}-H(R^{n}R^{\prime})_{\psi^{\otimes
n}\otimes\theta}\\
&  =H(A^{n}A^{\prime}E^{n}B^{n})_{\omega}-H(A^{n}B^{n}E^{n}A^{\prime}%
)_{\psi^{\otimes n}\otimes\theta}\\
&  =H(A^{n}A^{\prime}E^{n}B^{n})_{\omega}-nH(ABE)_{\psi}-H(A^{\prime}%
)_{\theta}.
\end{align}
The first inequality follows because the logarithm of the cardinality of the
probability distribution $\{p_{i}\}$ is an upper bound on its entropy
$H(\{p_{i}\})$. The first equality follows because the reduced state of
$\varphi$ on system $M_{1}$ is classical with probability distribution
$\{p_{i}\}$. The second equality follows because the entropies of the
marginals of a bipartite pure state are equal (the bipartite cut here being
between system $M_{1}$ and systems $M_{2}A^{n}A^{\prime}E^{n}B^{n}%
R^{n}R^{\prime}$). The second inequality follows the entropy cannot decrease
when adding a classical system (in this case, the $M_{2}$\ system of the
reduced state on systems $M_{2}A^{n}A^{\prime}E^{n}B^{n}R^{n}R^{\prime}$ is
classical, being decohered after a partial trace over system $M_{1}$). The
third inequality is a consequence of the Araki--Lieb triangle inequality
\cite{araki1970}, which states that $H(KL)_{\tau}\geq H(K)_{\tau}-H(L)_{\tau}$
for a bipartite state $\tau_{KL}$. The third equality follows because
$\varphi_{A^{n}A^{\prime}E^{n}B^{n}}=\omega_{A^{n}A^{\prime}E^{n}B^{n}}$ and
$\varphi_{R^{n}R^{\prime}}=\psi_{R}^{\otimes n}\otimes\theta_{R^{\prime}}$.
The fourth equality follows because the state $\psi_{ABER}^{\otimes n}%
\otimes\theta_{A^{\prime}R^{\prime}}$ is pure, so that $H(R^{n}R^{\prime
})_{\psi^{\otimes n}\otimes\theta}=H(A^{n}B^{n}E^{n}A^{\prime})_{\psi^{\otimes
n}\otimes\theta}$. The last equality follows because entropy is additive with
respect to tensor-product states. Focusing on the term $H(A^{n}A^{\prime}%
E^{n}B^{n})_{\varphi}$, we continue with%
\begin{align}
&  H(A^{n}A^{\prime}E^{n}B^{n})_{\omega}\nonumber\\
&  \geq H(A^{n}A^{\prime}E^{n})_{\omega}+H(B^{n}|E^{n})_{\omega}%
-g(n,\varepsilon)\\
&  \geq H(A^{n}A^{\prime}E^{n})_{\omega}+H(B^{n}|E^{n})_{\psi^{\otimes n}%
}-g(n,\varepsilon)-f(n,\varepsilon)\\
&  =H(A^{n}A^{\prime}\hat{E}^{n})_{\omega}+nH(B|E)_{\psi}-g(n,\varepsilon
)-f(n,\varepsilon).
\end{align}
The first inequality follows by applying the local recoverability condition in
\eqref{eq:local-rec}, the Fuchs-van-de-Graaf inequalities in
\eqref{eq:F-v-d-G-ineq}, and because locally recoverable states have small
CQMI as reviewed in \eqref{eq:loc-rec-low-CQMI}. In particular, we can take%
\begin{equation}
g(n,\varepsilon)\equiv2n\sqrt{\varepsilon}\log\left\vert B\right\vert +\left(
1+\sqrt{\varepsilon}\right)  h_{2}(\sqrt{\varepsilon}/\left[  1+\sqrt
{\varepsilon}\right]  ),
\end{equation}
and find that%
\begin{equation}
I(A^{n}A^{\prime};B^{n}|E^{n})_{\omega}\leq g(n,\varepsilon),
\end{equation}
which when rewritten is equivalent to the first inequality. The second
inequality follows from the negligible disturbance condition from
\eqref{eq:small-disturb-first}, the Fuchs-van-de-Graaf inequalities in
\eqref{eq:F-v-d-G-ineq}, and the continuity of conditional quantum entropy
\cite{AF04,Winter15}, with%
\begin{equation}
f(n,\varepsilon)=2\sqrt{\varepsilon}n\log\left\vert B\right\vert
+(1+\sqrt{\varepsilon})h_{2}(\sqrt{\varepsilon}/[1+\sqrt{\varepsilon}]).
\end{equation}
The equality holds because entropy is additive with respect to tensor-product
states. Now focusing on the term $H(A^{n}A^{\prime}E^{n})_{\omega}$, we
continue with%
\begin{align}
H(A^{n}A^{\prime}E^{n})_{\omega}  &  \geq\sum_{i}p_{i}H(A^{n}A^{\prime}%
E^{n})_{U^{i}(\psi^{\otimes n}\otimes\theta)U^{i\dag}}\\
&  =\sum_{i}p_{i}H(A^{n}E^{n}A^{\prime})_{\psi^{\otimes n}\otimes\theta}\\
&  =nH(AE)_{\psi}+H(A^{\prime})_{\theta}.
\end{align}
The first inequality follows from the concavity of quantum entropy. The first
equality follows from unitary invariance of entropy, and the last again from
additivity of entropy for tensor-product states. Putting everything together,
we find that the following bound holds for any $\left(  n,M,\varepsilon
\right)  $ state deconstruction protocol:%
\begin{equation}
\frac{1}{n}\log_{2}M+\frac{1}{n}\left[  g(n,\varepsilon)+f(n,\varepsilon
)\right]  \geq I(A;B|E)_{\rho}.
\end{equation}
By taking the limit as $n\rightarrow\infty$, then $\varepsilon\rightarrow0$,
and applying definitions, we can conclude the inequality $I(A;B|E)_{\rho}%
\leq\mathcal{D}(A;B|E)_{\rho}$.

\section{Requirement of the access to the conditioning system: Classical
example}

\label{sec:classical-example}

\begin{figure}
	\includegraphics[width=\columnwidth]{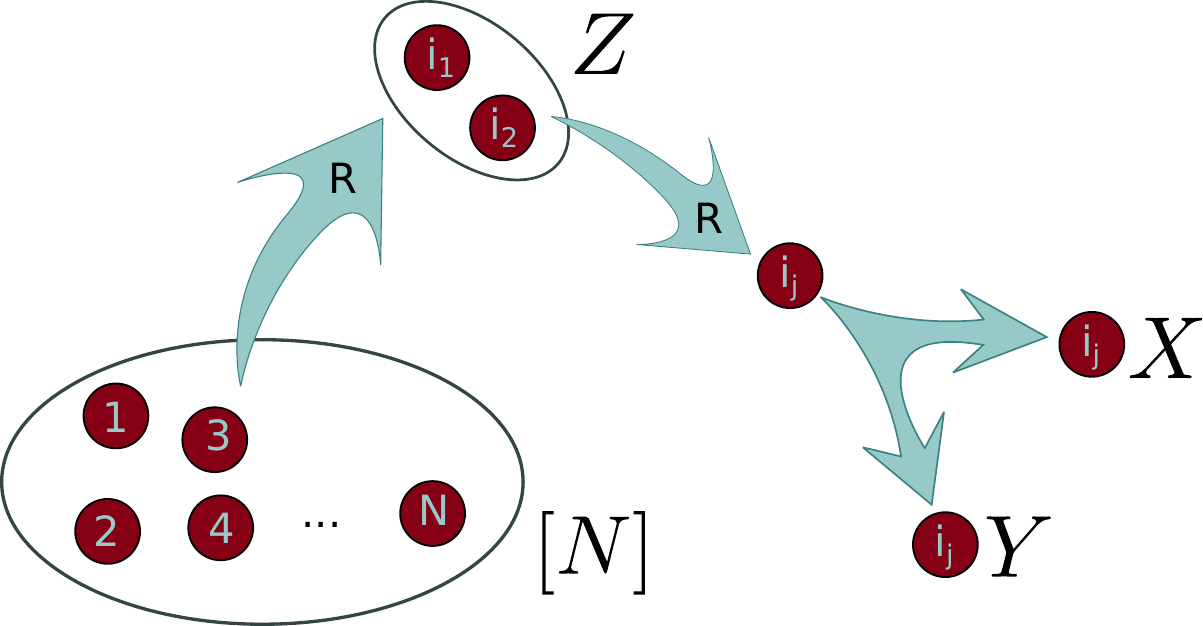}
	\caption{As an example justifying why conditional correlations cannot be erased using a noise rate equal to the CQMI without conditioning on the pivot system, consider random variables $X$, $Y$, and $Z$ constructed in the following way. $Z$ is constructed by picking a random pair of numbers $i_1\neq i_2$ from $[N]=\{1,2,...,N\}$, and $X=Y=i_b$ for a uniformly random bit $b$ that is independent of $Z$. Clearly $I(X:Y|Z)=1$. The classical analogue of a random unitary channel is a random permutation channel. When conditioning on $Z=(i,j)$, one can just randomize the pair $(i,j)$ using one bit of noise. Conversely, for a given pair $(i_1,i_2)$, the mutual information $I\left(X:Y|Z=(i_1,i_2)\right)$ is erased if and only if this pair is randomized by the random permutation channel. When operating on $X$ alone, this implies that all pairs have to be randomized, which essentially implies that all numbers $1,...,N$ have to be randomized jointly, requiring $\log_2 N$ bits of noise.}\label{fig:classical-example}
\end{figure}

The following example shows that, even for the classical analogue of state deconstruction, access to the conditioning system is necessary. Otherwise, the Markovianizing cost, as it is called in \cite{WSM15}, can be arbitrarily large compared to the conditional mutual information.
The example is informally explained in Figure~\ref{fig:classical-example}.

Let $X$ and $Y$ be random variables, each taking values from the alphabet $[n]=\{0,1,\ldots,n-1\}$, and let $Z$ be a random variable taking values in $[n]\times[n]$, such that the joint probability distribution for $(X,Y,Z)$ is as follows:
\begin{equation}
p_{XYZ}(i,j,(k,l))=\begin{cases}
\frac{1}{n(n-1)}& \text{ if }k< l,\ i=j\in\{k,l\}\\
0&\text{ else}
\end{cases}.
\end{equation}
For $k<l$, conditioned on $Z=(k,l)$, $X$ and $Y$ are maximally correlated fair coins, so that
\begin{equation}
I(X:Y|Z)=1.
\end{equation}

One classical analogue of tracing out a subsystem is applying a function $f: [n]\to[m]$ where $n/m\in\mathbb{N}$ such that $|f^{-1}(\{j\})|=n/m$ for all $j\in[m]$.

 Let $f:[n]\to[m]$ be such a function and set $X'=f(X)$. Let $X',Y,Z\sim p'$ so that
 \begin{multline}
 p'_{X'YZ}(i,j,(k,l))=
 \\
 \begin{cases}
 \frac{1}{n(n-1)}& \text{ if }k< l,\ i=f(j), j\in\{k,l\}\\
 0&\text{ else}
 \end{cases}.
 \end{multline}
 
  Let us look at the fidelity of recovery $F(X';Y|Z)$. It is easy to see that we can restrict to classical recovery channels: Let $\mathcal{R}_{Z\to ZX'}$ be an arbitrary (quantum) recovery channel, denote the measurement of the computational basis on system $X$ by $\Lambda_{X}$, etc., and let $\rho_{X'YZ}$ be the diagonal quantum state representing $p'$. The fidelity does not decrease under the application of $\Lambda_{X'}\otimes\Lambda_{Z}$ and any classical state is invariant, therefore $\mathcal{R}'_{Z\to ZX'}=(\Lambda_{X'}\otimes\Lambda_{Z})\circ \mathcal{R}_{Z\to ZX'}$ is at least as good a recovery channel as $\mathcal{R}_{Z\to ZX'}$. Now as $\Lambda_{Z}(\rho_{YZ})=\rho_{YZ}$, we can also precompose a measurement without changing the fidelity; i.e., the desired classical recovery channel that is as good as $\mathcal{R}_{Z\to ZX'}$ is $\mathcal{R}'_{Z\to ZX'}=(\Lambda_{X'}\otimes\Lambda_{Z})\circ \mathcal{R}_{Z\to ZX'}\circ\Lambda_{Z}$.
  
  Let us then take an arbitrary classical recovery channel given by a conditional probability distribution $q_{X'Z'|Z}$. The resulting recovered distribution is
  \begin{align}
	  & \hat p_{X'YZ}(i,j,\{k,l\}) \nonumber\\ 
	  &=\sum_{\substack{k'< l'\\k',l'\in [n]}}p_{YZ}(j,\{k',l'\})q_{X'Z'|Z}(i,\{k,l\}|\{k',l'\})\\
	  &=\frac{1}{n(n-1)}\sum_{\substack{k'< l'\\k',l'\in [n]}}(\delta_{k'j}+\delta_{l'j})q_{X'Z'|Z}(i,\{k,l\}|\{k',l'\})\\
	  &=\frac{1}{n(n-1)}\sum_{\substack{l'\in [n]\\ l'\neq j}}q_{X'Z'|Z}(i,\{k,l\}|\{j,l'\}).
  \end{align}
  Now we look at the fidelity with the original distribution, i.e.
  \begin{align}
	 & \sqrt{ F(p'_{X'YZ},\hat p_{X'YZ})} \nonumber \\
	 &=\sum_{\substack{k< l\\k,l\in [n]}}\sum_{i\in [m]}\sum_{j\in[n]}\sqrt{p'_{X'YZ}(i,j,\{k,l\})\hat p_{X'YZ}(i,j,\{k,l\})}\\
	  &=\frac{1}{n(n-1)}\sum_{\substack{k\neq l\\k,l\in [n]}}\sqrt{\sum_{\substack{l'\in [n]\\ l'\neq k}}q_{X'Z'|Z}(f(k),\{k,l\}|\{k,l'\})}
  \end{align}
  It is obvious that the optimal recovery channel has $q_{X'Z'|Z}(i,\{k,l\}|\{k',l'\})=0$ whenever $k,l,k',l'$ are all different or $f(k)\neq i\neq f(l)$. Let us therefore assume this is the case. Let $\lambda_{kl}=\sum_{\substack{l'\in [n]\\ l'\neq k}}q_{X'Z'|Z}(f(k),\{k,l\}|\{k,l'\})$. Then we have 
  \begin{align}\label{eq:normalize-lam}
  \sum_{\substack{k< l\\f(k)=f(l)}}\lambda_{kl}+\sum_{\substack{k\neq l\\ f(k)\neq f(l)}}\lambda_{kl}=n(n-1)/2
  \end{align}
  due to the normalization of the conditional distribution $q$. Suppose first \eqref{eq:normalize-lam} and $\lambda_{kl}\ge 0$ are the only restrictions on the possible $\lambda_{kl}$. Then the optimal choice is $$\lambda_{kl}=\frac{(n-1)}{2(n-1)-(n/m-1)},$$ i.e., constant $\lambda_{kl}$. We can now bound the fidelity of recovery
  \begin{align}
	  \sqrt{F(X';Y|Z)_{p'}}&=\max_q\sqrt{F(p',\hat p)}\\
	 &\le \max_{\lambda_{kl}\ge 0}\frac{1}{n(n-1)}\sum_{\substack{k\neq l\\k,l\in [n]}}\sqrt{\lambda_{kl}}\\
	 &= \sqrt{\frac{(n-1)}{2(n-1)-(n/m-1)}}.
  \end{align}
  Here the maxima are taken over conditional probability distributions and the positive $\lambda_{kl}$ that sum to $n(n-1)/2$, respectively. The inequality is due to the fact that by relaxing the conditions on $\lambda_{kl}$ we maximize over a larger set.
  For $F(X';Y|Z)_{p'}\ge1-\varepsilon$ this implies
  \begin{equation}
	  \log(n/m)\ge\log(n-1)+\log\left(\frac{1-2\varepsilon}{1-\varepsilon}\right) 
  \end{equation}
  In words, the required noise can be arbitrarily large compared to the conditional mutual information. A similar analysis can be done for many i.i.d.~copies of $X$, $Y$, and~$Z$.

\bibliographystyle{unsrt}
\bibliography{Ref}

\begin{thebibliography}{10}

\bibitem{L61}
Rolf Landauer.
\newblock Irreversibility and heat generation in the computing process.
\newblock {\em IBM Journal of Research and Development}, 5(3):183--191, July
  1961.

\bibitem{GHRRS15}
John Goold, Marcus Huber, Arnau Riera, Lidia del Rio, and Paul Skrzypczyk.
\newblock The role of quantum information in thermodynamics --- a topical
  review.
\newblock May 2015.
\newblock arXiv:1505.07835.

\bibitem{GPW05}
Berry Groisman, Sandu Popescu, and Andreas Winter.
\newblock Quantum, classical, and total amount of correlations in a quantum
  state.
\newblock {\em Physical Review A}, 72(3):032317, September 2005.
\newblock arXiv:quant-ph/0410091.

\bibitem{W16}
Mark~M. Wilde.
\newblock From classical to quantum {Shannon} theory.
\newblock 2016.
\newblock arXiv:1106.1445v7.

\bibitem{O08a}
Jonathan Oppenheim.
\newblock State redistribution as merging: introducing the coherent relay.
\newblock May 2008.
\newblock arXiv:0805.1065.

\bibitem{ADHW06FQSW}
Anura Abeyesinghe, Igor Devetak, Patrick Hayden, and Andreas Winter.
\newblock The mother of all protocols: Restructuring quantum information's
  family tree.
\newblock {\em Proceedings of the Royal Society A}, 465(2108):2537--2563,
  August 2009.
\newblock arXiv:quant-ph/0606225.

\bibitem{qip2002schu}
Benjamin Schumacher and Michael~D. Westmoreland.
\newblock Approximate quantum error correction.
\newblock {\em Quantum Information Processing}, 1(1/2):5--12, April 2002.
\newblock arXiv:quant-ph/0112106.

\bibitem{qcap2008first}
Patrick Hayden, Michal Horodecki, Andreas Winter, and Jon Yard.
\newblock A decoupling approach to the quantum capacity.
\newblock {\em Open Systems \& Information Dynamics}, 15(1):7--19, March 2008.
\newblock arXiv:quant-ph/0702005.

\bibitem{BBMW18}
Mario Berta, Fernando G. S.~L. {Brand\~ao}, Christian Majenz, and Mark~M.
  Wilde.
\newblock Conditional decoupling of quantum information.
\newblock {\em Physical Review Letters}, 121(4):040504, July 2018.
\newblock arXiv:1808.00135.

\bibitem{HJPW04}
Patrick Hayden, Richard Jozsa, Denes Petz, and Andreas Winter.
\newblock Structure of states which satisfy strong subadditivity of quantum
  entropy with equality.
\newblock {\em Communications in Mathematical Physics}, 246(2):359--374, April
  2004.
\newblock arXiv:quant-ph/0304007.

\bibitem{FR14}
Omar Fawzi and Renato Renner.
\newblock Quantum conditional mutual information and approximate {Markov}
  chains.
\newblock {\em Communications in Mathematical Physics}, 340(2):575--611,
  December 2015.
\newblock arXiv:1410.0664.

\bibitem{BSW14}
Mario Berta, Kaushik Seshadreesan, and Mark~M. Wilde.
\newblock R\'enyi generalizations of the conditional quantum mutual
  information.
\newblock {\em Journal of Mathematical Physics}, 56(2):022205, February 2015.
\newblock arXiv:1403.6102.

\bibitem{WDHW13}
Mark~M. Wilde, Nilanjana Datta, Min-Hsiu Hsieh, and Andreas Winter.
\newblock Quantum rate-distortion coding with auxiliary resources.
\newblock {\em IEEE Transactions on Information Theory}, 59(10):6755--6773,
  October 2013.
\newblock arXiv:1212.5316.

\bibitem{DY08}
Igor Devetak and Jon Yard.
\newblock Exact cost of redistributing multipartite quantum states.
\newblock {\em Physical Review Letters}, 100(23):230501, June 2008.

\bibitem{YD09}
Jon Yard and Igor Devetak.
\newblock Optimal quantum source coding with quantum side information at the
  encoder and decoder.
\newblock {\em IEEE Transactions on Information Theory}, 55(11):5339--5351,
  November 2009.
\newblock arXiv:0706.2907.

\bibitem{SW14}
Kaushik~P. Seshadreesan and Mark~M. Wilde.
\newblock Fidelity of recovery, squashed entanglement, and measurement
  recoverability.
\newblock {\em Physical Review A}, 92(4):042321, October 2015.
\newblock arXiv:1410.1441.

\bibitem{U73}
Armin Uhlmann.
\newblock The ``transition probability'' in the state space of a *-algebra.
\newblock {\em Reports on Mathematical Physics}, 9(2):273--279, 1976.

\bibitem{Z00}
Wojciech~H. Zurek.
\newblock Einselection and decoherence from an information theory perspective.
\newblock {\em Annalen der Physik}, 9(11-12):855--864, November 2000.
\newblock arXiv:quant-ph/0011039.

\bibitem{zurek01}
Harold Ollivier and Wojciech~H. Zurek.
\newblock Quantum discord: A measure of the quantumness of correlations.
\newblock {\em Physical Review Letters}, 88(1):017901, December 2001.
\newblock arXiv:quant-ph/0105072.

\bibitem{CW04}
Matthias Christandl and Andreas Winter.
\newblock ``{Squashed} entanglement'': An additive entanglement measure.
\newblock {\em Journal of Mathematical Physics}, 45(3):829--840, March 2004.
\newblock arXiv:quant-ph/0308088.

\bibitem{Z03}
Wojciech~H. Zurek.
\newblock Decoherence, einselection, and the quantum origins of the classical.
\newblock {\em Reviews of Modern Physics}, 75(3):715--775, May 2003.
\newblock arXiv:quant-ph/0105127.

\bibitem{FG98}
Christopher~A. Fuchs and Jeroen van~de Graaf.
\newblock Cryptographic distinguishability measures for quantum mechanical
  states.
\newblock {\em IEEE Transactions on Information Theory}, 45(4):1216--1227, May
  1998.
\newblock arXiv:quant-ph/9712042.

\bibitem{GLN05}
Alexei Gilchrist, Nathan~K. Langford, and Michael~A. Nielsen.
\newblock Distance measures to compare real and ideal quantum processes.
\newblock {\em Physical Review A}, 71(6):062310, June 2005.
\newblock arXiv:quant-ph/0408063.

\bibitem{TCR09}
Marco Tomamichel, Roger Colbeck, and Renato Renner.
\newblock A fully quantum asymptotic equipartition property.
\newblock {\em IEEE Transactions on Information Theory}, 55(12):5840---5847,
  December 2009.
\newblock arXiv:0811.1221.

\bibitem{LR73}
Elliott~H. Lieb and Mary~Beth Ruskai.
\newblock Proof of the strong subadditivity of quantum-mechanical entropy.
\newblock {\em Journal of Mathematical Physics}, 14:1938--1941, 1973.

\bibitem{PhysRevLett.30.434}
Elliott~H. Lieb and Mary~Beth Ruskai.
\newblock A fundamental property of quantum-mechanical entropy.
\newblock {\em Physical Review Letters}, 30(10):434--436, March 1973.

\bibitem{Winter15}
Andreas Winter.
\newblock Tight uniform continuity bounds for quantum entropies: conditional
  entropy, relative entropy distance and energy constraints.
\newblock July 2015.
\newblock arXiv:1507.07775.

\bibitem{PhysRevA.78.030302}
Ming-Yong Ye, Yan-Kui Bai, and Z.~D. Wang.
\newblock Quantum state redistribution based on a generalized decoupling.
\newblock {\em Physical Review A}, 78(3):030302, September 2008.
\newblock arXiv:0805.1542.

\bibitem{BCT16}
Mario Berta, Matthias Christandl, and Dave Touchette.
\newblock Smooth entropy bounds on one-shot quantum state redistribution.
\newblock {\em IEEE Transactions on Information Theory}, 62(3):1425--1439,
  March 2016.
\newblock arXiv:1409.4338.

\bibitem{DHO14}
Nilanjana Datta, Min-Hsiu Hsieh, and Jonathan Oppenheim.
\newblock An upper bound on the second order asymptotic expansion for the
  quantum communication cost of state redistribution.
\newblock September 2014.
\newblock arXiv:1409.4352.

\bibitem{MBDRC16}
Christian Majenz, Mario Berta, Fr\'ed\'eric Dupuis, Renato Renner, and Matthias
  Christandl.
\newblock Catalytic decoupling of quantum information.
\newblock May 2016.
\newblock arXiv:1605.00514.

\bibitem{WSM15}
Eyuri Wakakuwa, Akihito Soeda, and Mio Murao.
\newblock Markovianizing cost of tripartite quantum states.
\newblock April 2015.
\newblock arXiv:1504.05805.

\bibitem{WSM15a}
Eyuri Wakakuwa, Akihito Soeda, and Mio Murao.
\newblock The cost of randomness for converting a tripartite quantum state to
  be approximately recoverable.
\newblock December 2015.
\newblock arXiv:1512.06920.

\bibitem{BBW15}
Mario Berta, Fernando G. S.~L. Brandao, and Mark~M. Wilde.
\newblock unpublished notes, April 2015.

\bibitem{koashi02}
Masato Koashi and Nobuyuki Imoto.
\newblock Operations that do not disturb partially known quantum states.
\newblock {\em Physical Review A}, 66:022318, 2002.

\bibitem{WSM15_2}
Eyuri Wakakuwa, Akihito Soeda, and Mio Murao.
\newblock A coding theorem for bipartite unitaries in distributed quantum
  computation.
\newblock May 2015.
\newblock arXiv:1505.04352.

\bibitem{B73}
Charles~H. Bennett.
\newblock Logical reversibility of computation.
\newblock {\em IBM Journal of Research and Development}, 17(6):525--532,
  November 1973.

\bibitem{AMTW00}
Andris Ambainis, Michele Mosca, Alain Tapp, and Ronald de~Wolf.
\newblock Private quantum channels.
\newblock In {\em 41st Annual Symposium on Foundations of Computer Science},
  pages 547--553, 2000.
\newblock arXiv:quant-ph/0003101.

\bibitem{BT15}
Mario Berta and Marco Tomamichel.
\newblock The fidelity of recovery is multiplicative.
\newblock {\em IEEE Transactions on Information Theory}, 62(4):1758--1763,
  April 2016.
\newblock arXiv:1502.07973.

\bibitem{AF04}
Robert Alicki and Mark Fannes.
\newblock Continuity of quantum conditional information.
\newblock {\em Journal of Physics A: Mathematical and General}, 37(5):L55--L57,
  February 2004.
\newblock arXiv:quant-ph/0312081.

\bibitem{KBCPV12}
Kavan Modi, Aharon Brodutch, Hugo Cable, Tomasz Paterek, and Vlatko Vedral.
\newblock The classical-quantum boundary for correlations: Discord and related
  measures.
\newblock {\em Reviews of Modern Physics}, 84(4):1655--1707, November 2012.
\newblock arXiv:1112.6238.

\bibitem{ABC16}
Gerardo Adesso, Thomas~R. Bromley, and Marco Cianciaruso.
\newblock Measures and applications of quantum correlations.
\newblock May 2016.
\newblock arXiv:1605.00806.

\bibitem{P12}
Marco Piani.
\newblock Problem with geometric discord.
\newblock {\em Physical Review A}, 86(3):034101, September 2012.
\newblock arXiv:1206.0231.

\bibitem{SBW14}
Kaushik~P. Seshadreesan, Mario Berta, and Mark~M. Wilde.
\newblock R\'enyi squashed entanglement, discord, and relative entropy
  differences.
\newblock {\em Journal of Physics A: Mathematical and Theoretical},
  48(39):395303, September 2015.
\newblock arXiv:1410.1443.

\bibitem{W14}
Mark~M. Wilde.
\newblock Multipartite quantum correlations and local recoverability.
\newblock {\em Proceedings of the Royal Society A}, 471:20140941, March 2014.
\newblock arXiv:1412.0333.

\bibitem{MD11}
Vaibhav Madhok and Animesh Datta.
\newblock Interpreting quantum discord through quantum state merging.
\newblock {\em Physical Review A}, 83(3):032323, March 2011.
\newblock arXiv:1008.4135.

\bibitem{CABMPW11}
Daniel Cavalcanti, Leandro Aolita, Sergio Boixo, Kavan Modi, Marco Piani, and
  Andreas Winter.
\newblock Operational interpretations of quantum discord.
\newblock {\em Physical Review A}, 83(3):032324, March 2011.
\newblock arXiv:1008.3205.

\bibitem{LW14}
Ke~Li and Andreas Winter.
\newblock Squashed entanglement, $k$-extendibility, quantum {Markov} chains,
  and recovery maps.
\newblock October 2014.
\newblock arXiv:1410.4184.

\bibitem{O08}
Jonathan Oppenheim.
\newblock A paradigm for entanglement theory based on quantum communication.
\newblock January 2008.
\newblock arXiv:0801.0458.

\bibitem{W89a}
Reinhard~F. Werner.
\newblock An application of {Bell's} inequalities to a quantum state extension
  problem.
\newblock {\em Letters in Mathematical Physics}, 17(4):359--363, May 1989.

\bibitem{DPS04}
Andrew~C. Doherty, Pablo~A. Parrilo, and Federico~M. Spedalieri.
\newblock Complete family of separability criteria.
\newblock {\em Physical Review A}, 69(2):022308, February 2004.
\newblock arXiv:quant-ph/0308032.

\bibitem{BH13}
Fernando G. S.~L. Brandao and Aram~W. Harrow.
\newblock Quantum de {Finetti} theorems under local measurements with
  applications.
\newblock In {\em Proceedings of the 45th ACM Symposium on Theory of Computing
  (STOC 2013)}, pages 861--870, 2013.
\newblock arXiv:1210.6367.

\bibitem{DHW16}
Dawei Ding, Patrick Hayden, and Michael Walter.
\newblock Conditional mutual information of bipartite unitaries and scrambling.
\newblock August 2016.
\newblock arXiv:1608.04750.

\bibitem{YHHHOS09}
Dong Yang, Karol Horodecki, Michal Horodecki, Pawel Horodecki, Jonathan
  Oppenheim, and Wei Song.
\newblock Squashed entanglement for multipartite states and entanglement
  measures based on the mixed convex roof.
\newblock {\em IEEE Transactions on Information Theory}, 55(7):3375--3387, July
  2009.
\newblock arXiv:0704.2236.

\bibitem{AHS08}
David Avis, Patrick Hayden, and Ivan Savov.
\newblock Distributed compression and multiparty squashed entanglement.
\newblock {\em Journal of Physics A: Mathematical and Theoretical},
  41(11):115301, March 2008.
\newblock arXiv:0707.2792.

\bibitem{ADJ14}
Anurag Anshu, Vamsi~Krishna Devabathini, and Rahul Jain.
\newblock Near optimal bounds on quantum communication complexity of
  single-shot quantum state redistribution.
\newblock October 2014.
\newblock arXiv:1410.3031.

\bibitem{araki1970}
Huzihiro Araki and Elliott~H. Lieb.
\newblock Entropy inequalities.
\newblock {\em Communications in Mathematical Physics}, 18(2):160--170, 1970.

\end{thebibliography}

\end{document}